\documentclass[showpacs,amsmath,amssymb,twocolumn,aps,pra,superscriptaddress,notitlepage,floatfix,10pt]{revtex4-2}

\usepackage{stylesetting}
\usepackage{orcidlink}
\usepackage[normalem]{ulem}
\usepackage{comment}
\newcommand{\FUBerlin}{Freie Universit\"at Berlin, 14195 Berlin, Germany}
\newcommand{\HZBerlin}{Helmholtz-Zentrum Berlin f\"ur Materialien und Energie, 14109 Berlin, Germany}

\newcommand{\FudanEMW}{Key Laboratory for Information Science of Electromagnetic Waves (Ministry of Education), Fudan University, Shanghai 200433, China}
\newcommand{\PKUCs}{Center on Frontiers of Computing Studies, School of Computer Science, Peking University, Beijing 100871, China}

\newcommand{\FudanSurfacePhysics}{State Key Laboratory of Surface Physics, Key Laboratory of Micro and Nano Photonic Structures (MOE), and Department of Physics, Fudan University, Shanghai 200433, China}

\newcommand{\ShanghaiQiZhiInstitute}{Shanghai Qi Zhi Institute, AI Tower, Xuhui District, Shanghai 200232, China}

\newcommand{\FudanNanoelectronicsQC}{Institute of Nanoelectronics and Quantum Computing, Fudan University, Shanghai 200433, China}

\newcommand{\ShanghaiAILab}{Shanghai Artificial Intelligence Laboratory, Shanghai 200232, China}

\newcommand{\ShanghaiRCQS}{Shanghai Research Center for Quantum Sciences, Shanghai 201315, China}

\makeatletter
\newcommand{\appendixtableofcontents}{%
  \section*{Contents}%
  \@starttoc{atoc}%
}
\newcommand{\startappendixtoc}{%
  \let\originaladdcontentsline\addcontentsline
  \renewcommand{\addcontentsline}[3]{%
    \def\firstarg{##1}%
    \def\tocname{toc}%
    \ifx\firstarg\tocname
      \originaladdcontentsline{atoc}{##2}{##3}%
    \else
      \originaladdcontentsline{##1}{##2}{##3}%
    \fi
  }%
}
\makeatother

\begin{document}

\title{Constant-depth global shadow estimation}

\author{Qingyue Zhang}
\thanks{These authors contributed equally to this work.}
\affiliation{\FudanEMW}

\author{Zhou You}
\thanks{These authors contributed equally to this work.}
\affiliation{\PKUCs}
\affiliation{\FudanEMW}

\author{Dayue Qin}
\thanks{These authors contributed equally to this work.}
\affiliation{\FudanEMW}

\author{Xiaopeng Li}
\email{xiaopeng\_li@fudan.edu.cn}
\affiliation{\FudanSurfacePhysics}
\affiliation{\ShanghaiQiZhiInstitute}
\affiliation{\FudanNanoelectronicsQC}
\affiliation{\ShanghaiAILab}
\affiliation{\ShanghaiRCQS}

\author{Jens Eisert}
\email{jense@zedat.fu-berlin.de}
\affiliation{\FUBerlin}
\affiliation{\HZBerlin}

\author{You Zhou}
\email{you\_zhou@fudan.edu.cn}
\affiliation{\FudanEMW}

\date{September 10, 2026}

\begin{abstract}
Reliable and scalable readout strategies are essential for quantum technologies. As quantum processors grow, extracting useful information must remain feasible without measurement circuits becoming a dominant bottleneck. Randomized measurements and classical shadows provide a powerful route, but global estimation is conventionally associated with highly random ensembles that require increasing circuit depth and hence substantial experimental overhead. In this work, we show that substantially less randomness suffices when the readout is meaningfully adapted to the quantities being estimated. We introduce shallow phase shadows, based on a sparse Clifford-IQP ensemble, and prove efficient global estimation of stabilizer-state fidelities despite the ensemble not forming an approximate relative-error design. On all-to-all architectures, the protocol admits a constant-depth implementation using mid-circuit measurements and classical feedforward, or logarithmic depth without auxiliary systems. The protocol requires only controlled-phase entangling gates and offers a tunable trade-off between circuit resources and estimation accuracy, making it particularly amenable to experimentally relevant architectures with long-range connectivity. Our results show that scalable quantum readout need not reproduce generic randomness: task-adapted randomization can enable substantially shallower global characterization protocols.

\end{abstract}

\maketitle

\section{Introduction}

Stabilizer states play a pivotal role in quantum information science. Perhaps their most prominent application is in quantum error correction and fault tolerance~\cite{campbell2017roads,terhal2015quantum,QECBasic,eisert2025mind}, where the codewords of most known quantum error-correcting codes are stabilizer states~\cite{gottesman1997stabilizer}.
At the same time, graph states provide versatile theoretical laboratories with a wealth of applications~\cite{Graphs,GraphsLong,briegel2009measurement}. Their most prominent instances include cluster states, which form the basis of schemes for measurement-based quantum computation~\cite{briegel2009measurement}. Stabilizer states therefore take center stage in quantum technologies, not only in quantum computing but also in quantum communication, for instance, in the generation and characterization of large-scale entanglement~\cite{guhne2009entanglement,friis2018observation} and in quantum repeaters~\cite{azuma2015all,azuma2023quantum}.
Indeed, given their central role, it is not surprising that the development of new applications and implementation schemes has been accompanied by a growing toolbox for benchmarking and certifying stabilizer states. This includes schemes for state certification~\cite{eisert2020quantum}, entanglement characterization~\cite{cao2023generation,jiang2025generation,graham2022multi}, benchmarking of entire gate sets and other aspects of quantum devices~\cite{arute2019quantum,cao2023generation,liu2025certified,onorati2024noise,RandomSequences}, and validation of quantum error-correcting codes~\cite{wagner2023learning,chen2025quantum,lee2026efficient}. This is part of a wider concerted research effort in benchmarking \cite{BenchmarkingReview,Toolbox},
state certification \cite{PRXQuantum.2.010201},
and quantum learning theory \cite{LearningDeWolf}.

A range of techniques has been developed for this purpose, including direct fidelity estimation~\cite{flammia2011direct,dasilva2011practical} and stabilizer-based verification protocols~\cite{pallister2018optimal,zhu2019efficient,eisert2020quantum}, which leverage prior knowledge of the target state to enable efficient characterization. In contrast, randomized measurement schemes~\cite{elben2023randomized,Efficient,cieslinski2024analysing}, exemplified by classical shadow estimation~\cite{aaronson2019shadow,huang2020predicting}, require no prior knowledge and can estimate exponentially many properties simultaneously from a single dataset. {However, global shadow estimation also requires tractable classical post-processing. This requirement motivates our focus on stabilizer-state fidelities, which form a practically important class of global observables admitting efficient classical computation.
} For global Clifford measurements targeting long-range or highly entangled stabilizer states, the standard shadow protocol requires circuits of linear depth to achieve sufficient randomness (say, a 3-design ensemble), leading to substantial experimental overhead. Shallow-shadow approaches~\cite{bertoni2024shallow,hu2025demonstration} employ logarithmic-depth random Clifford circuits, though their benefits are largely restricted to structured observables such as low-bond-dimension matrix-product operators.

Building on these developments, recent theoretical advances, notably the \textit{gluing} framework~\cite{schuster2024random,cui2025unitary}, have further shown that, on all-to-all architectures with $n$ qubits, efficient shadow estimation can be achieved using circuits of depth $O(\log\log n)$ through approximate relative-error designs.
Nevertheless, reducing circuit depth remains highly desirable for scalable implementations on long-range quantum architectures~\cite{eisert2025mind}.
This naturally raises the question: \textit{can efficient global shadow estimation be achieved in constant depth, at least for important classes of highly entangled states such as stabilizer states?}

\begin{figure*}[htbp]
    \centering
    \includegraphics[width=\linewidth]{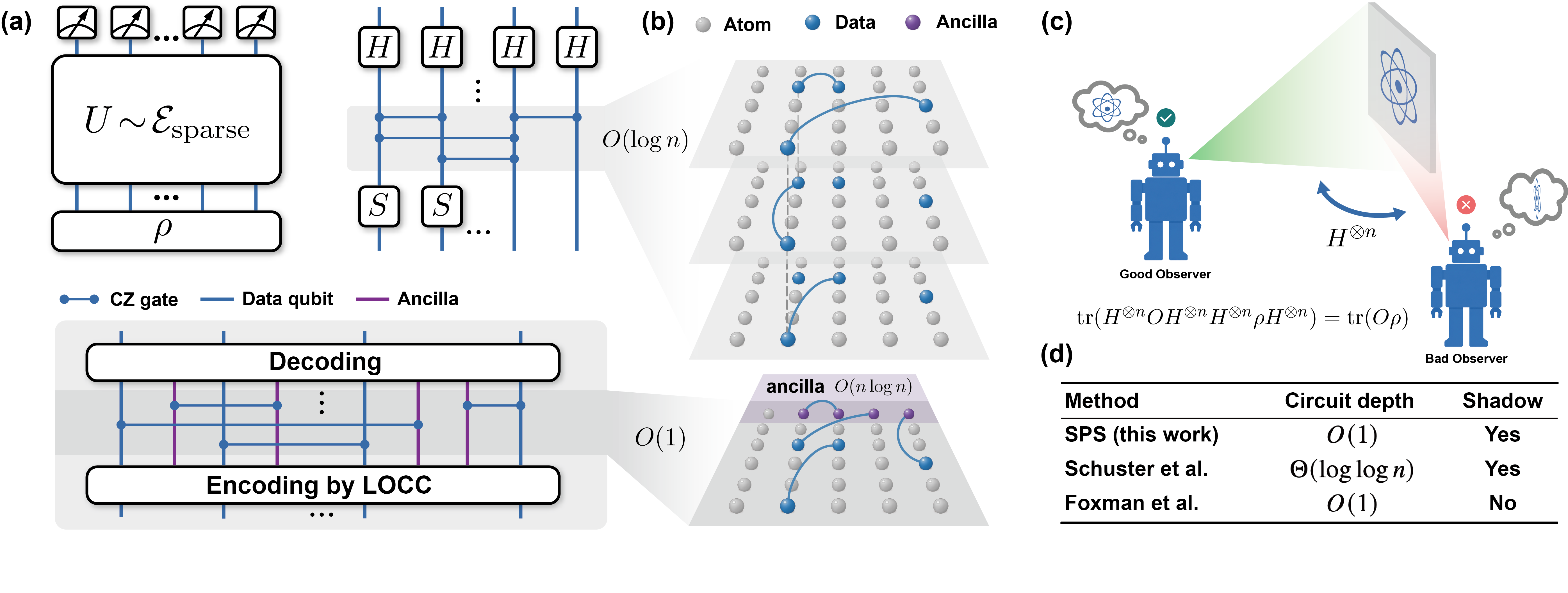}
\caption{
Overview of shallow phase shadows.
(a) Global shadow estimation of an unknown state $\rho$ using sparse Clifford-IQP $S$-$CZ$-$H$ circuits.
(b) The sampled sparse circuit can be implemented either in constant depth on all-to-all architectures using LOCC-assisted encoding and decoding with $O(n\log n)$ auxiliary systems or in logarithmic depth without auxiliary systems.
(c) The second-look principle measures two complementary global observables $O$ and $H^{\otimes n} O H^{\otimes n}$. We then select the lower-bias track in post-processing, enabling accurate shadow estimation.
(d) SPS achieves constant-depth shadow estimation beyond the light-cone restriction of the gluing framework~\cite{schuster2024random}, while existing constant-depth randomization results provide weaker guarantees~\cite{foxman2025random}, which are insufficient for full shadow guarantees over stabilizer observables.
}
    \label{fig:const_depth_impl}
\end{figure*}

Here we answer this question in the affirmative. We introduce \emph{shallow phase shadows} (SPS), a global shadow-estimation protocol that provably evades the $\Omega(\log\log n)$ depth lower bound~\cite{schuster2024random} and achieves constant depth on all-to-all architectures. SPS relies on two key ingredients. First, the commuting diagonal structure of the $S$-$CZ$-$H$ gate ensemble enables a space-time trade-off: using mid-circuit measurements and classical feedback~\cite{buhrman2024state,zi2025constant,vazquez2024scaling}, the diagonal layer can be implemented in constant depth on all-to-all architectures with the same asymptotic auxiliary-system overhead as Ref.~\cite{schuster2024random}. Second, we develop a \emph{second-look} principle that enables efficient global shadow estimation even though the underlying circuit ensemble does not form an approximate design. Conceptually, this overcomes an apparent tension in shadow estimation. The flexibility of classical shadows is often associated with sufficiently generic random measurements, while measurement schemes tailored to particular observables appear to sacrifice the ``measure first, ask questions later'' paradigm. SPS shows that this dichotomy is not necessary: the randomization can be adapted to a structured class of estimation tasks while the particular observable to be estimated is still chosen only in post-processing. For stabilizer-state fidelities, substantially less randomness can therefore retain the essential flexibility of shadow estimation while dramatically reducing the required circuit depth. Furthermore, SPS requires only $CZ$ entangling gates, which are native to leading long-range platforms such as trapped ions~\cite{decross2025computational,figgatt2019parallel,liu2025certified} and neutral atoms~\cite{bluvstein2026faulttolerant,evered2023high,bluvstein2024logical}. Taken together, our results show that scalable global shadow estimation need not reproduce generic randomness: task-adapted randomization can preserve the defining flexibility of classical shadows while enabling substantially shallower, hardware-friendly characterization protocols.

\section{Overview of main results}

Our primary contribution is the \emph{shallow phase shadow} (SPS) protocol, which enables shadow estimation using \emph{constant-depth} circuits. In contrast to the \textit{gluing} framework~\cite{schuster2024random}, which achieves depth $O(\log\log n)$ through patch constructions, SPS achieves constant depth on all-to-all architectures by trading circuit depth for auxiliary quantum systems. Our main result can be summarized as follows.

\begin{theorem}[Shallow phase shadow in constant depth, informal]\label{th:constdepth}
Run the shallow phase shadow protocol on the all-to-all architecture, guaranteeing that for every stabilizer-state observable $O$, the resulting estimator has bias at most a fixed value  $\epsilon\in(0,1)$. Then, with overwhelmingly high probability, the following quantum circuit implementation is sufficient:
\begin{itemize}
\item  $O(1)$-depth, using
    $O\!\left(n\log\frac{n}{\epsilon}\right)$ auxiliary qubits,
\item $O\!\left(\log\frac{n}{\epsilon}\right)$-depth, without using auxiliary systems.
\end{itemize}
\end{theorem}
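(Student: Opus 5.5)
The plan has two largely independent parts. The first is statistical: for a target bias $\epsilon$, the sparse Clifford-IQP ensemble only needs each qubit to take part in $O(\log(n/\epsilon))$ $CZ$ gates. The second is a compilation step: any circuit sampled from this ensemble can be implemented in the claimed depths, with or without auxiliary qubits.

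For the first part, I would take the sampled circuit $S^{\mathbf{s}}\,CZ_G\,H^{\otimes n}$, with $G$ a random graph of edge probability $p$. I would write the bias of the shadow estimator for a stabilizer observable $O=|\psi\rangle\langle\psi|$ as a sum over stabilizer (Pauli) components of $O$. Each component picks up a character average over the random phases and the random edges. Components whose $X$-support is nontrivial average to something of the form $(1-2p)^{k}$ or $\bigl((1+i^{\pm1})/2\bigr)$-type factors in the weight of the support. The components that survive with non-negligible weight are exactly those the second-look principle handles: measuring $O$ and $H^{\otimes n}OH^{\otimes n}$, one of the two tracks has small $X$-support.

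I would aim to show that the lower-bias track has bias at most $e^{-\Omega(pn)}\cdot\mathrm{poly}(n)$, uniformly over all stabilizer states. The uniformity would come from a union bound over the at most $2^n$ supports, together with the Chernoff bound on the degrees of $G$. Choosing $p=\Theta(\log(n/\epsilon)/n)$ then makes the bias at most $\epsilon$. The same Chernoff bound shows that, with overwhelmingly high probability, every vertex has degree $d=O(\log(n/\epsilon))$ and $|E|=O(n\log(n/\epsilon))$. I expect this uniform bias bound over all stabilizer observables to be the main obstacle, because the ensemble is not a design. I would first check whether the worst case is attained by graph states whose $X$-support is of intermediate weight, where neither track is obviously good.

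The depth claims then follow from compilation. Without auxiliary qubits, the diagonal part $CZ_G$ consists of commuting gates. By Vizing's theorem the edges of a graph of maximum degree $d$ can be properly colored with $d+1$ colors, and each color class is a layer of disjoint $CZ$ gates on the all-to-all architecture. This gives depth $O(d)=O(\log(n/\epsilon))$, and the $S$ and $H$ layers add only constant depth.

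With auxiliary qubits, I would use fan-out via mid-circuit measurement and feedforward. Each qubit $j$ is copied into $\deg(j)$ auxiliary qubits in constant depth, using a GHZ-type cat-state fan-out with measurement and Pauli corrections. After this copying, every edge $(j,k)$ is assigned its own private pair of copies, so all $CZ$ gates act on disjoint qubits and fit into a single layer. The copies are then uncomputed in constant depth by measuring them in the $X$ basis and applying $Z$ corrections. The corrections are Pauli and commute through as classical updates, so they can also be absorbed into post-processing of the shadow. The total number of auxiliary qubits is $\sum_j\deg(j)=2|E|=O(n\log(n/\epsilon))$, which matches the first bullet. Combining both parts over the high-probability event on the sampled graph gives the theorem.
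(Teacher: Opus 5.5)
Your compilation half is sound and matches the paper's resource proposition. Chernoff bounds give $|E|=O(pn^2)$ and maximum degree $O(pn)$ with high probability. A proper edge colouring with $\Delta(G)+1$ colours gives the ancilla-free depth. Fan-out rails with one slot per incident edge put all $CZ$ gates into a single layer. Your overall reduction is also the paper's: take $p=\gamma\ln n/n$ with $\gamma\ln n=\Theta(\log(n/\epsilon))$ (the paper uses $\gamma=\ln(6n^2/\epsilon)/(0.4\ln n)$). Your target form $e^{-\Omega(pn)}\mathrm{poly}(n)$ is the right one, since $pn=\gamma\ln n$. If you simply cited the paper's performance guarantee, $\mathrm{Bias}\le(4n^2+2n)n^{-0.4\gamma}+\mathrm{negl}(n)$ for $\gamma>3$, your argument would be complete and identical to the paper's. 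The gap is that you try to rederive this guarantee yourself, and the mechanism you sketch would not deliver it.

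\textbf{The bias is not a high-probability statement over $G$.} It is an expectation over the whole ensemble (graph, $S$ layer, outcome), i.e.\ a deterministic functional of the second-moment operator. The damping factors $(1-2p)^{jk}$ appear only after averaging over $G$. For any fixed $G$, the induced channel is a Clifford conjugate of a product $X/Y$-measurement channel, which is far from ideal. So Chernoff bounds on degrees play no role here. A union bound over $2^n$ supports, each with failure probability only $n^{-\Theta(\gamma)}$, would be vacuous in this sparse regime anyway.

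\textbf{The second-look criterion is reversed, and the counting lemma is missing.} In the Pauli picture the channel has coefficient $\sigma_P=[1+q]^{n_1}[1-q]^{n_2}$ with $q=(1-2p)^{\mathrm{wt}_X(P)}$. The bias is at most $D^{-1}\sum_{P\in\mathrm{STAB}(O)\setminus\mathcal Z_n}|\sigma_P-1|$. Individual deviations are not of size $(1-2p)^k$. For small $\mathrm{wt}_X(P)$ one has $q\approx1$, and $\sigma_P$ is exponentially large; for example $\sigma_{X_1}=(2-2p)^{n-1}$. Hence the stabilizers of \emph{small} $X$-weight are the dangerous ones. In $|+\rangle^{\otimes2}|0\rangle^{\otimes(n-2)}$, every stabilizer has $X$-weight at most $2$, and the bias is at least $\frac12(1-2p)$. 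That small-$X$-support track is the bad one, not the good one.

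What you need is a counting bound. Write the stabilizer generators from the tableau as $[C,D]$. Since $\operatorname{rank}C+\operatorname{rank}D\ge n$, you can choose the track with $\zeta=\operatorname{rank}(C)\ge n/2$. An RREF argument then gives:
\begin{itemize}
\item at most $[F(\zeta,k)-1]2^{n-\zeta}$ stabilizers with $\mathrm{wt}_X=k$;
\item at most $2^k[F(\zeta,k)-1]F(n-\zeta,k')$ stabilizers with $(\mathrm{wt}_X,\mathrm{wt}_Z)=(k,k')$;
\end{itemize}
where $F(a,b)=\sum_{i\le b}\binom{a}{i}$. You then balance these counts against the explicit $\sigma_P$ across several $X$-weight regimes. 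Only this yields the $\mathrm{poly}(n)\,n^{-0.4\gamma}$ bound, which is what produces the $\log(n/\epsilon)$ dependence in both bullets.
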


The formal statement and proof of \cref{th:constdepth} are provided in \cref{Ap:proofmain}. The argument combines the performance guarantee of SPS from \cref{th:PshadowMain} with the circuit-resource estimates of \cref{prop:resource_limits}. Specifically, one first chooses the sparsity parameter $\gamma$ of the quantum circuit such that the SPS estimator achieves bias at most $\epsilon$, and then translates this choice into explicit depth or auxiliary-system requirements.

\cref{th:constdepth} shows that the SPS protocol admits a constant-depth implementation using $O(n\log\frac{n}{\epsilon})$ auxiliary systems. Notably, this auxiliary-system overhead matches the scaling of state-of-the-art shallow-shadow protocols~\cite{schuster2024random}.
This is particularly well suited to platforms such as neutral atoms~\cite{bluvstein2026faulttolerant,evered2023high,bluvstein2024logical}, where large and reconfigurable qubit arrays make spatial overhead comparatively affordable and scalability is a central hardware advantage.
Without auxiliary systems, the protocol can still be realized with two-qubit depth $O\!\left(\log\!\frac{n}{\epsilon}\right)$ with high probability.

Our circuit ensemble relies on the commuting diagonal structure of the sparse Clifford-IQP ($S$-$CZ$-$H$) ensemble, based on
\emph{instantaneous quantum polynomial-time} (IQP) circuits.
Unlike general Clifford circuits, this ensemble allows the entire diagonal layer to be executed in a single parallel step~\cite{paletta2024robust}.
A key point of SPS is that efficient global shadow estimation of stabilizer fidelities nevertheless remains possible via a simple two-track measurement strategy, which we refer to as the \emph{second-look} principle.
By contrast, for general ensembles designed to achieve approximate designs~\cite{laracuente2026approximate}, known lower bounds still
enforce at least $\Omega(\log\log n)$ two-qubit depth~\cite{schuster2024random}
on all-to-all architectures with arbitrary auxiliary systems. Constant-time randomization proposals typically provide only weaker guarantees, such as additive errors or measurable errors~\cite{foxman2025random,lee2026shallow}, which are insufficient for shadow estimation. The comparison is summarized in the table in \cref{fig:const_depth_impl}(d), emphasizing that SPS provides constant-depth shadow-estimation guarantees beyond the approximate-design regime.

The sparse Clifford-IQP ensemble behind SPS offers several practical advantages.
First, SPS requires only $CZ$ gates as entangling gates, making it naturally suited to platforms with native long-range interactions, such as neutral atoms and trapped ions. This is particularly advantageous on platforms where entangling gates are comparatively slow, as reducing circuit depth directly mitigates this experimental bottleneck.
Second, the circuit cost can be continuously tuned through the parameter $\gamma$, enabling smooth resource adjustment without changing the circuit architecture. This differs from patch-based gluing constructions~\cite{schuster2024random}, in which the resource cost changes only discretely and the patch size usually must divide $n$.
Finally, SPS delivers a clear quantitative advantage in terms of the actual entangling-gate budget. The numerical simulations in Fig.~\ref{fig:guarentee} show that SPS reaches the same target bias with \emph{several-fold} fewer two-qubit gates than shallow-shadow schemes using existing random circuit ensembles~\cite{schuster2024random,bertoni2024shallow}.

The remainder of this work is organized as follows.
In \cref{sec:framework}, we introduce the sparse Clifford-IQP circuit ensemble underlying SPS and derive an explicit expression for its second-moment operator. We then show how the commuting diagonal structure of this ensemble enables a constant-depth implementation on all-to-all architectures, together with explicit resource estimates both with and without auxiliary systems.

In \cref{sec:shadow}, we construct the shallow phase shadow protocol based on the sparse Clifford-IQP ensemble. We first explain how the deviation of the sparse-ensemble moment from the dense diagonal-design moment leads to a systematic bias under the simple phase-shadow post-processing. We then introduce the second-look principle, which yields a uniform vanishing-bias guarantee for stabilizer-state observables.

In \cref{sec:unbiased}, we provide an exactly unbiased variant of SPS by explicitly inverting the induced measurement channel in the Pauli basis. This removes the systematic bias entirely, at the cost of additional classical post-processing, and complements the lightweight biased estimator developed in \cref{sec:shadow}.

\section{Circuit ensemble and implementation} \label{sec:framework}

A key feature of the shallow phase shadow protocol is the simplicity of the underlying quantum circuits. The required random ensemble consists only of single-qubit phase and Hadamard gates together with a sparse set of $CZ$ gates. Despite this simple structure, the ensemble provides sufficient randomness for the global estimation tasks considered here and, crucially, admits a particularly shallow and experimentally feasible implementation. In this section, we first introduce the sparse Clifford-IQP ensemble and characterize its moment operators, which will later allow us to bound the bias and variance of the shadow estimator. We then exploit the commuting structure of the diagonal gates to show that every sampled circuit can, with overwhelmingly high probability, be realized either in constant depth on all-to-all architectures using a moderate number of auxiliary systems, or in logarithmic depth without additional quantum systems. Finally, we discuss how these resource requirements translate into experimental implementations and identify long-range quantum platforms for which the resulting space-time trade-off is particularly natural.

\subsection{The sparse Clifford-IQP ensemble}

We consider an ensemble of random $n$-qubit unitaries of the $S$-$CZ$-$H$ form, where
the diagonal gates ${CZ}=\operatorname{diag}(1,1,1,-1)$,
${S}=\operatorname{diag}(1,i)$,
and ${H}$ denotes the Hadamard gate. The diagonal layer is randomized, while the final Hadamard layer is fixed, so that
\begin{equation}
U = H^{\otimes n} U_S U_C \in \mathcal{E}_{\mathrm{sparse}}^{(\gamma)},
\end{equation}
with
\begin{equation}\label{Eq:diagonalcircuits}
U_S=\prod_{k\in[n]} S_k^{A_{k,k}},
\qquad
U_C=\prod_{0\le i<j\le n-1} CZ_{i,j}^{A_{i,j}},
\end{equation}
where $A$ is a random symmetric binary matrix. The diagonal entries $A_{i,i}$ are independently and identically distributed (i.i.d.) as Bernoulli$(1/2)$, while the off-diagonal entries $A_{i,j}$ ($i<j$) are i.i.d. as Bernoulli$(p)$ with
\begin{equation}
p=\frac{\gamma\ln n}{n},
\end{equation}
where $\gamma$ is a constant sparsity parameter.

This construction is related to the sparse IQP circuits introduced in Ref.~\cite{bremner2017achieving}, which take the form $C=H^{\otimes n}DH^{\otimes n}$ with a diagonal unitary $D$~\cite{shepherd2009temporally,bremner2011classical,bremner2016average}. In contrast, our ensemble employs only a single Hadamard layer and restricts the diagonal gates to the Clifford set generated by $S$ and $CZ$, excluding non-Clifford phase gates like controlled-$S$ and $T$ gates. We therefore refer to $\mathcal{E}_{\mathrm{sparse}}^{(\gamma)}$ as the \emph{sparse Clifford-IQP ensemble} hereafter.

To connect the shadow statistics with properties of the circuit ensemble, we introduce an $m$-th moment operator. For each sampled unitary $U$ and the computational basis indices $\mb{b}\in\{0,1\}^n$, let $\Phi_{U,\mb{b}} := U^\dagger \ket{\mb{b}}\!\bra{\mb{b}}\,U$.
Then the $m$-th moment function of a unitary ensemble $\mc{E}$ is defined as
\begin{equation}
\mb{M}_{\mc{E}}^{(m)} := D^{-1}\, \mathbb{E}_{U\sim \mc{E}}\sum_{\mb{b}} \Phi_{U,\mb{b}}^{\otimes m},
\end{equation}
with $D=2^n$. For the sparse Clifford-IQP ensemble $\mc{E}_{\text{sparse}}^{(\gamma)}$ here, the symmetry of the Hadamard and diagonal phase layers allows the sum over computational basis outcomes to reduce to a fixed string (e.g., $\mb{b}=\mb{0}$), yielding the equivalent representation
\begin{equation}\label{eq:Moment2def}
\mb{M}_{\mc{E}_{\text{sparse}}^{(\gamma)}}^{(m)}
= \mathbb{E}_{U\sim\mc{E}_{\text{sparse}}^{(\gamma)}}
\bigl(U^{\dagger}\ket{\mb{0}}\!\bra{\mb{0}}U\bigr)^{\otimes m}.
\end{equation}

Before providing the analytical result of \cref{eq:Moment2def} with  $m=2$ in \cref{thm: sparse 2-moment}, we give some definitions. Denote $\id_4$ and $\mathbb{S}_2$ as the two-copy identity and swap operators, respectively, and
$\Delta_2:=\ketbra{0,0}{0,0}+\ketbra{1,1}{1,1}$
as the two-copy parity projector, which are all supported on
a qubit-pair. Here, the second moment in~\cref{eq:Moment2def} admits a decomposition into tensor products of these three elementary building blocks. In Ref.~\cite{zhang2025robust}, we have shown a special case where each CZ gate is sampled with  probability $1/2$, corresponding to $\gamma^*=\frac{n}{2\ln n}$, the second moment shrinks to the compact form
\begin{equation}\label{eq:m2RPS}
    \mb{M}_{\mc{E}_{\text{sparse}}^{(\gamma^*)}}^{(2)}=D^{-2}(\id_4^{\otimes n}+\mbb{S}_2^{\otimes n}-\Delta_2^{\otimes n}).
\end{equation}

Despite this simplified expression, obtaining such a second moment requires a circuit depth of order $O(n)$, which is comparatively large. For the general constant $\gamma$ considered here, however, the second moment has a richer and more complex expansion. To state the result, let $I_i,I_j,I_k\subseteq [n]$ be mutually disjoint index sets satisfying $I_i+I_j+I_k=[n]$, where "$+$" denotes disjoint union. One writes $i:=|I_i|$, $j:=|I_j|$, and $k:=|I_k|$ for their cardinalities.  We use the shorthand
\begin{equation}
\Delta_2^{\otimes I_i}\otimes (\id_4-\Delta_2)^{\otimes I_j}\otimes (\mathbb{S}_2-\Delta_2)^{\otimes I_k}
\end{equation}
for the $n$-qubit(-pair) tensor product in which all the qubit-pairs with indices  in $I_i$ take $\Delta_2$,
similar for $I_j$ and $I_k$.

\begin{lemma}[Second moment of the sparse ensemble]\label{thm: sparse 2-moment}
The second moment function of the sparse Clifford-IQP ensemble $\mc{E}_{\text{sparse}}^{(\gamma)}$ defined in \cref{Eq:diagonalcircuits} is given by
\begin{equation}
\begin{split}
\mathbf{M}_{\mc{E}_{\text{sparse}}^{(\gamma)}}^{(2)}
&=D^{-2} \sum_{I_i,I_j,I_k}
\Bigl(1-\frac{2\gamma \ln n}{ n}\Bigr)^{jk}\\
&\quad{}\times\Delta_2^{I_i}\otimes (\id_4-\Delta_2)^{I_j} \otimes (\mathbb{S}_2-\Delta_2)^{I_k},
\end{split}
\end{equation}
where $I_i\in\mc{I}_i$, $I_j\in\mc{I}_j$, and $I_k\in\mc{I}_k$ are mutually disjoint index sets with $I_i+I_j+I_k=[n]$.
\end{lemma}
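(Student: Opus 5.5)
The plan is to compute the second moment directly in the computational basis. Each diagonal gate contributes a phase, so averaging over the random matrix $A$ factorizes into independent per-entry averages. The averages over the $S$-exponents produce the local "pattern" constraints, and the averages over the $CZ$-exponents produce the pairwise weights $(1-2p)^{jk}$.

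\textbf{Step 1 (random $Z$ layer).} I would not drop the sum over $\mb{b}$ immediately. Since $H^{\otimes n}\ket{\mb{b}}=Z^{\mb{b}}\ket{+}^{\otimes n}$ and $Z^{\mb{b}}$ commutes with the diagonal layer, $D^{-1}\sum_{\mb{b}}$ amounts to an extra uniformly random Pauli-$Z$ layer. The moment then reads $\mathbb{E}_{A,\mb{b}}\,(\ketbra{\psi}{\psi})^{\otimes 2}$, where
\[
\ket{\psi}=D^{-1/2}\sum_{x} i^{\sum_k A_{kk}x_k}(-1)^{\sum_{i<j}A_{ij}x_ix_j+\mb{b}\cdot x}\ket{x}.
\]
Expanding gives $D^{-2}\sum_{x,y,x',y'}\ketbra{x,y}{x',y'}$ times an averaged phase. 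Per qubit $k$, set $s_k=x_k+y_k-x'_k-y'_k$.

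\textbf{Step 2 (single-qubit averages).} The $Z$-average forces $s_k$ to be even. The $S$-average gives $(1+i^{s_k})/2$, which vanishes for $s_k=\pm2$. Hence $s_k=0$, i.e.\ $\{x_k,y_k\}=\{x'_k,y'_k\}$ as multisets. For each qubit this leaves three disjoint cases:
\begin{itemize}
\item $x_k=y_k=x'_k=y'_k$, contributing the span of $\Delta_2$ (set $I_i$);
\item $x_k\neq y_k$ with $(x'_k,y'_k)=(x_k,y_k)$, contributing $\id_4-\Delta_2$ (set $I_j$);
\item $x_k\neq y_k$ with $(x'_k,y'_k)=(y_k,x_k)$, contributing $\mathbb{S}_2-\Delta_2$ (set $I_k$).
\end{itemize}
These cases are mutually exclusive, so each surviving index tuple corresponds to a unique partition $I_i+I_j+I_k=[n]$.

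\textbf{Step 3 (the $CZ$ averages).} This is the crux. For a pair $(a,b)$ the $CZ$ exponent is $e_{ab}=x_ax_b+y_ay_b-x'_ax'_b-y'_ay'_b$, and its average is $1$ if $e_{ab}$ is even and $1-2p$ if it is odd. I would check the cases:
\begin{itemize}
\item If $a\in I_i$ with common value $c$, then $e_{ab}=c\,s_b=0$.
\item If both qubits lie in $I_j$, or both in $I_k$, the terms cancel identically.
\item If $a\in I_j$ and $b\in I_k$, then $e_{ab}=x_ax_b+y_ay_b-x_ay_b-y_ax_b=(x_a-y_a)(x_b-y_b)=\pm1$, which is odd.
\end{itemize}
Thus exactly the $jk$ cross pairs each contribute $1-2p=1-2\gamma\ln n/n$.

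\textbf{Step 4 (resumming).} Because this weight depends only on the partition and not on the individual bit values, I would sum the basis projectors within each class. This yields $\Delta_2^{\otimes I_i}\otimes(\id_4-\Delta_2)^{\otimes I_j}\otimes(\mathbb{S}_2-\Delta_2)^{\otimes I_k}$ with prefactor $D^{-2}$, which is the claimed formula.

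As a consistency check, setting $p=1/2$ kills every partition with $jk>0$. Summing the survivors by the binomial theorem, $\sum(\text{terms with }k=0)=\id_4^{\otimes n}$ and $\sum(\text{terms with }j=0)=\mathbb{S}_2^{\otimes n}$, with the all-$\Delta_2$ term counted twice. This recovers \cref{eq:m2RPS}.

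The main obstacle I expect is the bookkeeping in Steps 2–3. One has to make sure that the parity constraint from the $Z$/$\mb{b}$-average is used, rather than the $S$-average alone, which would leave spurious odd-$s$ terms. One also has to verify that the three classes are disjoint, so that no double counting of $\Delta_2$ occurs.
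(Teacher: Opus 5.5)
Your proposal is correct and follows essentially the same route as the paper. You reduce the single-qubit average to the per-qubit constraint $x_k+y_k=x'_k+y'_k$, split each site into the $\Delta_2$, $\mathbb{I}_4-\Delta_2$, $\mathbb{S}_2-\Delta_2$ sectors, and show that each random $CZ_{a,b}$ contributes $1-2p$ exactly when one site lies in $\mathbb{I}_4-\Delta_2$ and the other in $\mathbb{S}_2-\Delta_2$. The paper does the same, except that it imports the mod-$4$ constraint set $C^{(2)}$ from its earlier reference, whereas you derive it directly: parity comes from the $\mathbf{b}$-sum acting as a random $Z$ layer, and the $S$-average then removes $s_k=\pm 2$. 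You also make the cross-term parity explicit via $(x_a-y_a)(x_b-y_b)=\pm1$, where the paper appeals to a ``direct check''.
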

\Cref{thm: sparse 2-moment} expresses the second moment as a sum of tensor products of three elementary operator components — a trinomial structure — which is essential for proving the bias result in \cref{th:PshadowMain}. Because the ensemble is permutation invariant, the coefficient in this decomposition depends only on the cardinalities $j$ and $k$, not on the specific locations of the operators. The proof of \cref{thm: sparse 2-moment} is given in \cref{Ap:moment2}. The corresponding third-moment formula, used to prove the variance bound in \cref{th:PshadowMain}, is deferred to \cref{Ap:moment3}.

As a consistency check, for the previous special case $p=1/2$, or equivalently
$\gamma=n/(2\ln n)$, the coefficient satisfies
\begin{equation}\left(1-\frac{2\gamma\ln n}{n}\right)^{jk}=0
\end{equation}
whenever $j,k>0$. Therefore, only the terms with $j=0$ or $k=0$ survive, and \cref{thm: sparse 2-moment} reduces exactly to the compact second-moment expression in \cref{eq:m2RPS} by summing the binomial terms.

A few studies have characterized the distance between the moments of diagonal circuit ensembles and that of Haar random ensemble~\cite{nakata2014generating,ji2018pseudorandom}. However, these works typically focus on additive or measurable errors~\cite{cui2025unitary}. Such guarantees are often insufficient for shadow estimation, as the additive inaccuracies can be exponentially amplified by the inversion of the measurement channel~\cite{huang2020predicting}, leading to a large estimation bias.
The relative error, by contrast, is a stronger measure, as it requires the $k$-th moment $\mb{M}_{\mc{E}}^{(k)}$ of the ensemble to multiplicatively approximate the Haar measure’s moment $\mb{M}_{\text{Haar}}^{(k)}$
\begin{equation}\label{eq:RelativeDef}
(1-\epsilon) \mb{M}_{\text{Haar}}^{(k)} \preceq \mb{M}_{\mc{E}}^{(k)} \preceq (1+\epsilon) \mb{M}_{\text{Haar}}^{(k)}.
\end{equation}

This condition provides multiplicative control over all second-moment directions, and is therefore often regarded as a standard route to rigorous shadow-estimation guarantees~\cite{choi2023preparing}. However, SPS follows a different route. As shown later in \cref{lem:NotRelDesign}, the sparse Clifford-IQP ensemble considered here does not form a good relative-error state $2$-design. The key point is that stabilizer-shadow estimation does not require such uniform control over all directions. Instead, SPS uses the second-look principle (introduced in \cref{subsec:HadamardTrick}) to enable accurate estimation even beyond the relative-error design condition.

\subsection{Constant-depth synthesis and cost analysis}\label{subsec:circuit}

In this subsection, we give an explicit implementation of the sampled unitary
$U=H^{\otimes n}U_SU_C$ from \cref{Eq:diagonalcircuits} on an all-to-all architecture using mid-circuit measurement and classical feedback~\cite{buhrman2024state,alam2024learning,yan2025variational,liu2026characterization}. We also quantify the corresponding resource cost in \cref{prop:resource_limits}. The key point is that $U_SU_C$ is a product of commuting diagonal gates. For every qubit $q$, let $G=(V,E)$ be the interaction graph induced by the sampled $CZ$ gates, and set the rail length $r_q := \deg_G(q)$.

As shown in \cref{fig:const_depth_impl}, each qubit is first encoded into a GHZ-like rail of length $r_q$, which converts temporal circuit complexity into spatial overhead. The implementation proceeds in the following three steps.

1. Encoding (fan-out). For every qubit $q$, we encode it into a GHZ-like rail
$\{\bar q^{(0)},\ldots,\bar q^{(r_q-1)}\}$ so that
$\alpha|0\rangle_q+\beta|1\rangle_q \mapsto \alpha|0,\cdots, 0\rangle_{\bar q}
+\beta|1,\cdots ,1\rangle_{\bar q}$.

2. Gate assignment. We assign the local gate $S_q$ to the
reserved slot $\bar q^{(0)}$. For each edge $(i,j)\in E$, we choose a unique slot on both rails, and apply a single physical
$CZ_{\bar i^{(\ell_{i,j})},\,\bar j^{(\ell_{i,j})}}$. By construction, each rail qubit $\bar q^{(\ell)}$ participates in at most one two-qubit gate, so all
$CZ$ gates are executed simultaneously in one layer.

3. {Decoding (fan-in).} We reverse the fan-out to disentangle all auxiliary systems and return the logical state to the
original data qubits. Finally, we apply the fixed $H^{\otimes n}$ layer and
measure in the computational basis.

The circuit cost of this implementation is as follows. The fan-out and inverse fan-in are both constant-depth procedures when assisted by mid-circuit measurements and classical feedback~\cite{buhrman2024state,zi2025constant}. Using the construction of Ref.~\cite{zi2025constant}, a rail of length $r_q$ requires an additional $r_q/c$ auxiliary qubits, so the encoded rail uses $(1+c^{-1})r_q$ qubits in total. For convenience, we set $c=2$, in which case the encoding and decoding steps require about 5 and 3 circuit layers, respectively. Correspondingly, the total physical
qubit count becomes $n_{\mathrm{tot}}=\sum_{q\in V}(1+c^{-1})r_q.$
We denote $|E|$ as the number of edges in the random interaction graph
$G(n,p)$. Since the total number of rail qubits is
$n_{\mathrm{rail}}=\sum_{q\in V} r_q = 2|E|$, we obtain that the number of auxiliary systems $n_{\mathrm{anc}}$
required by the fan-out/fan-in procedure is bounded by the total physical
qubit count $n_{\mathrm{tot}}$
given by
\begin{equation}\label{eq:ancilla}
n_{\mathrm{anc}}< n_{\mathrm{tot}}
=
\sum_{q\in V} (1+c^{-1})r_q
=
2(1+c^{-1})|E|.
\end{equation}
In particular, for the choice $c=2$, this gives $n_{\mathrm{tot}}=3|E|$.

We now quantify the resource requirements for this synthesis.

\begin{prop}[Resource estimation of quantum circuit realization, informal]\label{prop:resource_limits}
For the sparse unitary ensemble $\mc{E}_{\text{sparse}}^{(\gamma)}$ defined around \cref{Eq:diagonalcircuits}, the quantum circuit realization satisfies the following resource bounds %
\begin{itemize}
    \item[(i)] $O(1)$-depth, using no more than $3\gamma n \ln n$ auxiliary systems.
    \item[(ii)] $2\gamma \ln n$-depth, without using auxiliary systems,
\end{itemize}
with overwhelmingly high probability.
\end{prop}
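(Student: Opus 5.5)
Both resource bounds reduce to concentration statements about the random interaction graph $G\sim G(n,p)$ with $p=\gamma\ln n/n$. The number of auxiliary systems is governed by the edge count $|E|$, and the auxiliary-free depth is governed by the maximum degree $\Delta(G)$, through an edge colouring. The plan is therefore to prove two tail bounds, one for $|E|$ and one for $\Delta(G)$, and combine them with the synthesis already described. The exact constants $3\gamma n\ln n$ and $2\gamma\ln n$ are informal. I will aim for bounds of the form $(1+\delta)\cdot(\text{mean})$ holding with probability $1-e^{-\Omega(n\ln n)}$ or $1-n^{-\omega(1)}$ respectively, and for the degree bound I will allow the constant $2$ to be replaced by a slightly larger absolute constant if needed.

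For part (i), the constant-depth layout gives $n_{\mathrm{anc}}<n_{\mathrm{tot}}=3|E|$ by \cref{eq:ancilla} with $c=2$. Since $|E|\sim\mathrm{Bin}\bigl(\binom n2,p\bigr)$, its mean is $\mu=\binom n2 p\le \tfrac12\gamma n\ln n$. A multiplicative Chernoff bound gives
\[
\Pr\bigl[|E|\ge (1+\delta)\mu\bigr]\le \exp\bigl(-\delta^2\mu/3\bigr)=\exp\bigl(-\Omega(n\ln n)\bigr),
\]
so $n_{\mathrm{tot}}\le 3(1+\delta)\tfrac12\gamma n\ln n<3\gamma n\ln n$ with overwhelming probability, for example with $\delta=1$. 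The depth claim follows from the three steps. Fan-out and fan-in have constant depth, namely $5$ and $3$ layers, by Ref.~\cite{zi2025constant}. The single parallel $CZ$ layer plus the $S$ layer plus $H^{\otimes n}$ adds three more layers, which is independent of $n$.

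For part (ii), I will work without auxiliary qubits. The $S$ gates form one layer. The diagonal $CZ$ gates commute, so they can be scheduled in any order, and a valid schedule is exactly a proper edge colouring of $G$. By Vizing's theorem, $G$ is $(\Delta(G)+1)$-edge-colourable, and a Misra–Gries colouring computes this efficiently classically. The $CZ$ depth is therefore at most $\Delta(G)+1$.

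The main step is bounding $\Delta(G)$. Each degree is distributed as $\mathrm{Bin}(n-1,p)$ with mean $\lambda\approx\gamma\ln n$. The upper-tail Chernoff bound gives $\Pr[\deg(q)\ge(1+\delta)\lambda]\le \exp\bigl(-\lambda h(\delta)\bigr)$ with $h(\delta)=(1+\delta)\ln(1+\delta)-\delta$, that is, $n^{-\gamma h(\delta)}$. A union bound over $n$ vertices gives failure probability $n^{1-\gamma h(\delta)}$.

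This is where the obstacle lies. With $\delta=1$, we have $h(1)=2\ln2-1\approx0.386$, so the bound $\Delta\le2\gamma\ln n$ holds with high probability only when $\gamma\gtrsim 2.6$. For small constant $\gamma$ it gives only polynomially small failure probability, not overwhelmingly small. I would handle this by stating the formal version as $\Delta(G)+1\le (1+\delta_\gamma)\gamma\ln n+1$, where $\delta_\gamma$ solves $\gamma h(\delta_\gamma)=1+s$, with failure probability $n^{-s}$. For the regime $\gamma$ bounded below, which is the regime needed in \cref{th:constdepth} where $\gamma\propto\log(n/\epsilon)/\ln n$ effectively, the depth is at most $2\gamma\ln n$. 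For generic small $\gamma$ the depth is $O(\ln n)$, which still suffices for \cref{th:constdepth}.
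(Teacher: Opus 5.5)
Your proposal is correct and follows the paper's proof essentially step by step. Part (i) is the same multiplicative Chernoff bound with $\delta=1$ on $|E|\sim\mathrm{Bin}(C_n^2,p)$, combined with $n_{\mathrm{anc}}<n_{\mathrm{tot}}=3|E|$, and part (ii) is the same reduction of the auxiliary-free depth to $\Delta(G)+1$ via edge colouring followed by a maximum-degree tail bound. The only difference is that you derive that tail directly by Chernoff plus a union bound, getting failure probability about $n^{1-(2\ln 2-1)\gamma}$, which is slightly sharper than the $n^{1-\gamma/4}$ the paper imports from Lemma~9 of Ref.~\cite{bremner2017achieving}; your caveat that (ii) holds only with polynomially small failure probability for constant $\gamma$ is accurate and matches the paper's formal version.
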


The formal statement and proof of \cref{prop:resource_limits} is left to~\cref{Ap:ProofProp1}. It shows that the stated depth and auxiliary system bounds hold with overwhelmingly high probability over the random choice of the sparse Clifford-IQP ensemble.
Compared with generic Clifford circuits, the diagonal structure of sparse IQP circuits enables an explicit space-time trade-off, which is the key ingredient allowing a constant-depth realization.
{This naturally raises the question of whether the same strategy extends to other $CZ$-based architectures, including more general IQP circuits \cite{cao2026measurement,SupremacyReview},
other quantum advantage schemes \cite{Supremacy} and \emph{measurement-based quantum computation} (MBQC) \cite{raussendorf2001one,Raussendorf2003}.
Whether these architectures can likewise help remains an interesting open question.}

\subsection{Experimental implementation}
\label{sec:exp}

The simplicity and low depth of SPS make the protocol particularly attractive for experimental implementation on platforms with long-range connectivity, notably reconfigurable neutral-atom arrays~\cite{bluvstein2026faulttolerant,evered2023high,bluvstein2024logical} and trapped-ion processors~\cite{decross2025computational,figgatt2019parallel,liu2025certified,Helios}. The required circuit consists only of single-qubit $S$ and $H$ gates, sparse $CZ$ entangling gates, measurements, and classical feedforward. In the constant-depth realization, the $O(n\log n)$ auxiliary-system overhead can be understood physically as a spatial resource: auxiliary qubits provide the rails required to distribute the interactions, corresponding, for instance, to additional atom rearrangements or shuttling operations on reconfigurable architectures. Importantly, the quantum depth remains constant, while the expected entangling-gate budget is only $O(n\log n)$. The resulting experimental resource requirements are therefore comparatively  low and can be continuously tuned through the sparsity parameter $\gamma$. This is particularly relevant in the presence of physical errors, where keeping quantum circuits shallow (and, in fact, close to constant in depth) is itself an important route to limiting noise accumulation~\cite{Nonunital}. These features make SPS a natural candidate for implementation on present and emerging long-range quantum processors.

\section{The shallow phase shadow protocol}\label{sec:shadow}

In this section, we detail the construction of the shallow phase shadow protocol utilizing the
sparse Clifford-IQP ensemble discussed above. The core challenge in utilizing constant-depth circuits for shadow estimation is that the resulting unitary ensemble does not form an approximate state 2-design, leading to large estimation bias. We address this by strictly decomposing the estimation task into several parts and rigorously bounding the residual error.

\subsection{Phase shadow with sparse Clifford-IQP ensemble}

Given the sparse Clifford-IQP ensemble $\mc E_{\text{sparse}}^{(\gamma)}$, we now introduce the shallow phase shadow protocol.
Shadow estimation is a quantum-classical hybrid process~\cite{hadfield2022measurements,zhou2024hybrid,zhou2023performance,hu2023classical,liu2026auxiliary,farias2024robust,chen2021robust,li2025nearly,huang2021efficient,nguyen2022optimizing,you2025circuit}. On the quantum side, a random unitary $U\in\mc{E}$ is applied to an unknown quantum state $\rho\mapsto U\rho U^{\dagger}$ and one measures it in the computational basis to get the
classical outcome word $\mb{b}$,
with conditional probability $\Pr(\mb{b}|U)=\langle \mb{b}| U\rho U^{\dag} |\mb{b}\rangle=\tr(\rho \Phi_{U,\mb{b}})$. On the classical side, one `prepares' a quantum snapshot $\Phi_{U,\mb{b}}$ on the classical computer.

Taking the expectation on both $U$ and $\mb{b}$, the overall process is given by the channel
\begin{equation}\label{Eq:idealchannel}
\begin{aligned}
\mc{M}_{\mc{E}}(\rho):
    &=
    \mbb{E}_{U\sim\mc{E},\mb{b}}\ \Phi_{U,\mb{b}}=\sum_{\mb{b}}\mbb{E}_{U\sim\mc{E}} \tr_{1}(\rho\otimes \id \text{ }\Phi_{U,\mb{b}}^{\otimes 2})\\
    &=D \tr_1\bigl[(\rho\otimes \id)\ \mb{M}_{\mc{E}}^{(2)}\bigl],
\end{aligned}
\end{equation}
where $\tr_{1}$ denotes the (partial) trace over
the first copy, and in the second line we relate it to the predefined moment function.

As a special case, when $p=\frac12$ (equivalently $\gamma^*=n/(2\ln n)$), its second moment reduces to that of a diagonal state-2 design~\cite{nakata2014generating,zhang2025robust}, with the moment function given in \cref{eq:m2RPS}. In this case, the single-shot estimator or the post-processing strategy shows
\begin{equation}\label{Eq:shadow2}
\widehat{\rho_f}:=D\Phi_{U,\bb}-\id,
\end{equation}
which is unbiased for the off-diagonal part $\rho_f\coloneq\rho-\rho_d$ \cite{zhang2025robust}. And the diagonal part is $\rho_d=\tr_1\bigl[\Delta_D (\rho \otimes \id_D)\bigr]=\sum_{\mb{b}}\ketbra{\mb{b}}{\mb{b}}\rho\ketbra{\mb{b}}{\mb{b}}$ for $\Delta_D \coloneq \Delta_2^{\otimes n}$.
It is important to note that
while this estimator targets the off-diagonal $\rho_f$, the diagonal components $\rho_d$ can be efficiently estimated separately using simple computational basis measurements (i.e., setting $U=\id$). Combining these two procedures allows for full state reconstruction \cite{park2023resource,zhang2025robust}.

\subsection{The two-track measurement for precise estimation}
\label{subsec:HadamardTrick}

The shallow phase shadow protocol defines a hardware-friendly shadow estimator based on sparse Clifford-IQP circuits.
However, the absence of arbitrarily accurate relative-error state $2$-design behavior can lead to a non-vanishing systematic bias for certain stabilizer-state observables.
In this subsection, we illustrate this obstruction through explicit examples, and then show that, despite this limitation, one can still achieve efficient shadow estimation by adapting the protocol to the structure of the target observables.

We now apply the post-processing rule in \cref{Eq:shadow2} to the SPS protocol. However, the second-order moment of SPS in \cref{thm: sparse 2-moment} differs from the ideal moment in \cref{eq:m2RPS}. This difference can be viewed as an effective channel noise:
The simple post-processing rule in \cref{Eq:shadow2} no longer gives an exact inversion and therefore introduces a systematic bias.

For a stabilizer-state observable $O$, we quantify this bias by
\begin{equation}\label{Eq:BiasDef}
\mathrm{Bias}(\widehat{o_f})
:=
\Bigl|
\mathbb{E}_{U,\mb b}\tr\!\bigl(O\,\widehat{\rho_f}\bigr)
-
\tr\!\bigl(O\,\rho_f\bigr)
\Bigr|.
\end{equation}
Below, we illustrate that this bias can remain large for some stabilizer observables.
\begin{figure}[htbp]
\centering \includegraphics[width=\linewidth]{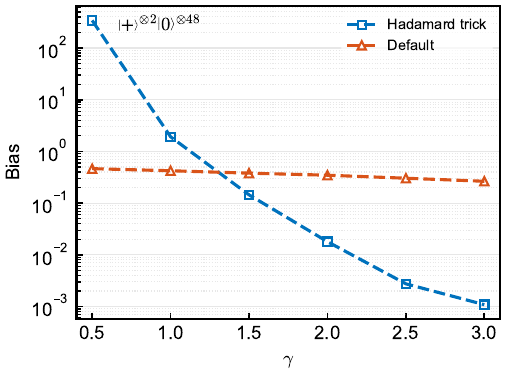} %

    \caption{Scaling of the estimation bias with respect to the parameter $\gamma$ for the worst-case state vector $\left| + \right\rangle^{\otimes 2}\left| 0 \right\rangle^{\otimes n-2}$ with $n=50$. The performance of the Hadamard-conjugated task ($\rho',O'$, blue squares) is compared with the default task (orange triangles).}
    \label{fig:hadamard_trick}
\end{figure}
A simple stabilizer-fidelity instance already exhibits that the bias is potentially large. Let $\rho = O = \ket{\psi}\bra{\psi}$ with $\ket{\psi}=\ket{+}^{\otimes 2}\otimes\ket{0}^{\otimes(n-2)}$, and

\begin{equation}\label{Eq:exHadatrick_1}
\mathrm{Bias}(\widehat{o_f}) \ge \frac{1}{2}\Bigl(1-\frac{2\gamma\ln n}{n}\Bigr),
\end{equation}
with the proof deferred to \cref{Ap:ProofOfHt1}.
In particular, for fixed $\gamma$, this lower bound remains bounded away from zero as $n\to\infty$. This obstruction is intrinsic to the sparse Clifford-IQP ensemble, as formalized below.

\begin{lemma}[Failure of relative-error state $2$-design]
\label{lem:NotRelDesign}

The sparse Clifford-IQP ensemble $\mc E_{\mathrm{sparse}}^{(\gamma)}$ defined in \cref{Eq:diagonalcircuits}
\emph{cannot} form an $\epsilon$-approximate state $2$-design in relative error
for any $\epsilon < \tfrac18 - o(1)$.
\end{lemma}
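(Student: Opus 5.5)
The plan is to exhibit an explicit two-copy test vector on which the sparse second moment of \cref{thm: sparse 2-moment} exceeds the Haar second moment by a factor of about $9/8$. Since the relative-error condition \cref{eq:RelativeDef} is a pair of operator inequalities, it must hold in particular for the expectation value on any vector. A ratio $\langle v|\mb{M}^{(2)}_{\mc E}|v\rangle / \langle v|\mb{M}^{(2)}_{\mathrm{Haar}}|v\rangle = 1+\delta$ therefore forces $\epsilon\ge\delta$. Guided by the bias example \cref{Eq:exHadatrick_1}, I would take $|v\rangle=|\psi\rangle^{\otimes 2}$ with $|\psi\rangle=|+\rangle^{\otimes 2}\otimes|0\rangle^{\otimes(n-2)}$.

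For the Haar side I use the standard state $2$-design moment $\mb{M}^{(2)}_{\mathrm{Haar}}=(\id+\mbb{S})/(D(D+1))$. Since $\mbb{S}|\psi\psi\rangle=|\psi\psi\rangle$, this gives $\langle\psi\psi|\mb{M}^{(2)}_{\mathrm{Haar}}|\psi\psi\rangle=2/(D(D+1))$.

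For the sparse side, $|\psi\psi\rangle$ is a product over qubit-pairs, so each term in \cref{thm: sparse 2-moment} factorizes into single-pair expectations. I would tabulate these first.
\begin{itemize}
\item On a pair in state $|00\rangle$, the values are $\langle\Delta_2\rangle=1$ and $\langle \id_4-\Delta_2\rangle=\langle\mbb{S}_2-\Delta_2\rangle=0$. Hence every one of the $n-2$ qubits in $|0\rangle$ must lie in $I_i$.
\item On a pair in state $|{+}{+}\rangle$, the values are $\langle\Delta_2\rangle=1/2$, $\langle\id_4-\Delta_2\rangle=1/2$, and $\langle\mbb{S}_2-\Delta_2\rangle=1-1/2=1/2$.
\end{itemize}
The sum therefore reduces to the $9$ ways of assigning the two $|+\rangle$ qubits to $I_i,I_j,I_k$, each with weight $1/4$ times the coefficient $q^{jk}$, where $q:=1-2\gamma\ln n/n$. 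Only the two assignments with $j=k=1$ pick up $q$; the other seven have $jk=0$. This yields
\begin{equation}
\langle\psi\psi|\mb{M}^{(2)}_{\mc E_{\mathrm{sparse}}^{(\gamma)}}|\psi\psi\rangle=\frac{7+2q}{4D^{2}} .
\end{equation}

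The ratio is then $(7+2q)(D+1)/(8D)$. As $n\to\infty$ with fixed $\gamma$, we have $q\to1$ and $D\to\infty$, so the ratio is $9/8-O(\gamma\ln n/n)$. The upper inequality in \cref{eq:RelativeDef} therefore fails unless $\epsilon\ge 1/8-o(1)$, which proves the lemma.

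The main point needing care is the $|{+}{+}\rangle$ entry of $\mbb{S}_2-\Delta_2$, together with checking that the normalization $D^{-2}$ in \cref{thm: sparse 2-moment} matches the convention of the Haar moment. Everything else is bookkeeping; a sanity check is that the trace normalizations of the two moment operators agree.
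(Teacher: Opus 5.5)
Your proposal is correct and follows the paper's own proof. You use the same test operator $(\ket{\psi}\!\bra{\psi})^{\otimes 2}$ with $\ket{\psi}=\ket{+}^{\otimes 2}\ket{0}^{\otimes(n-2)}$ and the same Haar value $2/(D(D+1))$, and your sparse value $(7+2q)/(4D^2)=D^{-2}\bigl(\tfrac94-\tfrac{\gamma\ln n}{n}\bigr)$ coincides with the paper's bound. The paper reaches that value by splitting off the $jk=0$ part $\id_4^{\otimes n}+\mathbb{S}_2^{\otimes n}-\Delta_2^{\otimes n}$ and adding the bias term from \cref{Eq:Ht1PickTerm}, whereas you factorize directly over qubit pairs; your route also shows that the paper's inequality is actually an equality for this test vector.
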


In other words, the ensemble remains a constant relative error from the exact state $2$-design.
The proof of \cref{lem:NotRelDesign} is left to \cref{Ap:NotRelDesign}.

To overcome this obstruction, we consider the Hadamard-conjugated task
$(\rho',O')=\bigl(H^{\otimes n}\rho H^{\otimes n},\,H^{\otimes n}OH^{\otimes n}\bigr)$,
and apply the same shadow estimation procedure to $(\rho',O')$. Since
$\Tr(O\rho)=\Tr(O'\rho')$,
conjugating both the state and the observable by the same unitary leaves the quantity of interest invariant.
In \cref{fig:hadamard_trick}, we numerically compare the estimation bias obtained from the original task $(\rho,O)$ and from its Hadamard-conjugated counterpart. This numerical result indicates that an appropriate global basis choice can dramatically improve the estimation
accuracy.

Motivated by the above example, we introduce a general \emph{second-look principle}. A complete description of the resulting SPS protocol is given in \cref{algo:PhaseShadow}. Here, we collect snapshots in two parallel tracks, one for $(\rho,O)$ and one for $(\rho',O')$, and we defer the choice of which track to use for estimating a given stabilizer-state observable until post-processing.
For stabilizer-state observables, this choice can be made efficiently from the stabilizer tableau~\cite{aaronson2004improved}. Writing
\begin{equation}
T(V)=\begin{bmatrix}A&B\\ C&D\end{bmatrix},
\end{equation}
for the tableau of
$O=V\ket{\mathbf 0}\!\bra{\mathbf 0}V^\dagger$, \cref{Ap:weight} shows that the estimator bias is governed by
$\zeta=\rank(C)$ in the original track (and
$\zeta'=\rank(D)$ in the Hadamard-conjugated track). The track with the larger rank parameter has the smaller certified bias. Since both tracks are measured before this observable-dependent choice is made, the protocol preserves the `measure first, ask questions later' philosophy of classical shadows.

Finally, we note that the estimator in~\cref{Eq:shadow2} only targets the off-diagonal component $\rho_f$. Therefore, in \cref{algo:PhaseShadow}, we also estimate the corresponding diagonal components $\rho_d$ (and $\rho'_d$) using computational-basis measurements. These diagonal estimators are unbiased, so we can focus solely
on the bias arising from the off-diagonal contribution.

\begin{algorithm}[H]
\caption{Shallow phase shadow estimation with the second-look principle.}
\label{algo:PhaseShadow}
\KwIn{Snapshots $N_f+N_d$, input state $\rho$, stabilizer-state observables $\{O_t=V_t\ket{\mb{0}}\!\bra{\mb{0}}V_t^\dagger\}_{t=1}^L.$}
\KwOut{Fidelity estimators $\{\widehat{o}^{(t)}:=\tr(\widehat{\rho}\,O_t)\}_{t=1}^L.$}
\For{$s=1,2,\dots,N_f/2$}{
Sample unitary $U^{(s)}\sim\mc{E}_{\mathrm{sparse}}^{(\gamma)}$.\;
Apply $U^{(s)}$ to $\rho$, measure in the $Z$ basis, obtain $\mb{b}_1^{(s)}$ and set $\Phi_1^{(s)}=U^{(s)\dagger}\ket{\mb{b}_1^{(s)}}\!\bra{\mb{b}_1^{(s)}}U^{(s)}$.\;
Apply $U^{(s)}$ to $H^{\otimes n}\rho H^{\otimes n}$, measure in the $Z$ basis, obtain $\mb{b}_2^{(s)}$ and set $\Phi_2^{(s)}=U^{(s)\dagger}\ket{\mb{b}_2^{(s)}}\!\bra{\mb{b}_2^{(s)}}U^{(s)}$.\;
}
\For{$s^\star=1,2,\dots,N_d/2$}{
Prepare $\rho$, measure in the $Z$ basis, obtain $\mb{b}_1^{(s^\star)}$, and set $\Psi_1^{(s^\star)} = \ket{\mb{b}_1^{(s^\star)}}\bra{\mb{b}_1^{(s^\star)}}$.\;
Prepare $H^{\otimes n}\rho H^{\otimes n}$, measure in the $Z$ basis, obtain $\mb{b}_2^{(s^\star)}$, and set $\Psi_2^{(s^\star)} = \ket{\mb{b}_2^{(s^\star)}}\bra{\mb{b}_2^{(s^\star)}}$.\;
}
\For{$t=1,2,\dots,L$}{
Write the tableau of $V_t$ as $\begin{bmatrix}A_t & B_t \\ C_t & D_t\end{bmatrix}$,\;
\eIf{$\mathrm{rank}(C_t)>\mathrm{rank}(D_t)$}{
$\widehat{o_f}^{(t)}\gets \textbf{MedianOfMeans}(\{\Phi_1^{(s)}\}_{s=1}^{N_f/2}).$\;
$\widehat{o_d}^{(t)}\gets \textbf{MedianOfMeans}(\{\Psi_1^{(s^\star)}\}_{s^\star=1}^{N_d/2})$.
}{
$\widehat{o_f}^{(t)}\gets \textbf{MedianOfMeans}(\{\Phi_2^{(s)}\}_{s=1}^{N_f/2}).$\;
$\widehat{o_d}^{(t)}\gets \textbf{MedianOfMeans}(\{\Psi_2^{(s^\star)}\}_{s^\star=1}^{N_d/2})$\;
}
$\widehat{o}^{(t)}\gets \widehat{o_f}^{(t)}+\widehat{o_d}^{(t)}.$\;
}
\end{algorithm}

\subsection{Theoretical guarantees and limitations}

We now quantify the performance of the SPS framework, showing that it achieves a small estimation bias for all stabilizer-state observables. Let $\mathrm{negl}(n)$ denote a term vanishing faster than any inverse polynomial in $n$.

\begin{theorem}[Performance guarantee of shallow phase shadow]\label{th:PshadowMain}
For any stabilizer-state observable $O$, the expected estimation bias using \cref{algo:PhaseShadow} is upper bounded by
\begin{equation}\label{Eq:Bias}
\text{Bias}(\widehat{o_f}) \leq (4n^2+2n)n^{-0.4\gamma} + \mathrm{negl}(n),
\end{equation}
provided $\gamma > 3$. Furthermore, the estimation variance scales sub-exponentially in $n$.
\end{theorem}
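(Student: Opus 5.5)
The plan is to write the bias exactly as a sum over the off-ideal terms of \cref{thm: sparse 2-moment}, and then to bound that sum by splitting on the size of the exponent $jk$. Write $c:=1-2p=1-\frac{2\gamma\ln n}{n}$. Using \cref{Eq:idealchannel}, the single-shot estimator \cref{Eq:shadow2} gives
\begin{equation*}
\mathbb{E}\,\tr(O\widehat{\rho_f})=D^{2}\tr\bigl[(\rho\otimes O)\,\mathbf{M}^{(2)}_{\mc{E}_{\text{sparse}}^{(\gamma)}}\bigr]-\tr(O).
\end{equation*}
The terms of \cref{thm: sparse 2-moment} with $j=k=0$, $j=0$ or $k=0$ carry coefficient $1$, so they can be resummed binomially. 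The terms with $j=0$ resum to $\mathbb{S}_2^{\otimes n}$, the terms with $k=0$ resum to $\id_4^{\otimes n}$, and the term $j=k=0$, which is counted twice, gives $-\Delta_2^{\otimes n}$. This reproduces \cref{eq:m2RPS}, and hence exactly $\tr(O\rho)-\tr(O\rho_d)=\tr(O\rho_f)$. The bias is therefore
\begin{equation*}
\mathrm{Bias}=\Bigl|\sum_{j,k\ge1}c^{jk}\!\!\sum_{|I_j|=j,|I_k|=k}\!\!\tr\bigl[(\rho\otimes O)\,\Delta_2^{I_i}\otimes(\id_4-\Delta_2)^{I_j}\otimes(\mathbb{S}_2-\Delta_2)^{I_k}\bigr]\Bigr|.
\end{equation*}
The same bookkeeping applies verbatim to the Hadamard-conjugated track $(\rho',O')$.

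Next I evaluate each trace in the computational basis. Per qubit, $\Delta_2$ forces the bit values in the two copies to agree, $\id_4-\Delta_2$ forces them to disagree, and $\mathbb{S}_2-\Delta_2$ swaps two disagreeing bits. Each term then becomes a structured sum $\sum\rho_{x,y}O_{y,x}$ over strings $x,y$ that differ on $I_k$, together with cross constraints on $I_j$.

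The sum is split into two regimes.
\begin{itemize}
\item[(a)] $jk\ge 0.2n$. Here $c^{jk}\le e^{-2pjk}\le n^{-0.4\gamma}$. Each $(j,k)$-block summed over index sets is a partial twirl of bounded operator norm, so it has modulus at most $O(1)$, using $\|\rho\|_1=1$ and $\|O\|_\infty=1$ and writing the block as a difference of at most a few contractions. With at most $n^2$ pairs $(j,k)$ and at most a constant $4$ from the sign expansion, this regime contributes the $(4n^2+2n)n^{-0.4\gamma}$ term; I expect the $2n$ to come from boundary cases where $j$ or $k$ equals $1$.
\item[(b)] $jk<0.2n$, so $\min(j,k)$ is small. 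Here $c^{jk}$ is not small, and the suppression must come from $O$ itself.
\end{itemize}

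For regime (b) I would use the stabilizer structure through the Pauli expansion $O=D^{-1}\sum_{P\in\mathcal{S}}\pm P$. The off-diagonal structure of $O$ is governed by the $X$-part of its stabilizer group, whose dimension is $\zeta=\rank(C)$ (and $\zeta'=\rank(D)$ after Hadamard conjugation, following \cref{Ap:weight}). Because the rows $[C\;D]$ have full rank $n$, we have $\rank C+\rank D\ge n$. Hence the track selected in \cref{algo:PhaseShadow} satisfies $\max(\zeta,\zeta')\ge n/2$.

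With such an $X$-rank, I would show that the total weight of $O$ on entries whose Hamming-distance pattern is compatible with a term having a small $I_j$ or a small $I_k$ is a $2^{-\Omega(n)}$-type fraction. Concretely, I would bound $|\tr[(\rho\otimes O)X_{I_j,I_k}]|$ by Cauchy–Schwarz by $\|O\|_2$ times the Hilbert–Schmidt norm of $O$ restricted to a subspace of dimension $2^{n-\zeta}$. Summed over the at most $\binom{n}{j}\binom{n}{k}\le n^{O(\sqrt n)}$ index choices allowed when $jk<0.2n$, the total should remain negligible. This yields the $\mathrm{negl}(n)$ term, and it is where $\gamma>3$ should be needed to balance the combinatorial factor against $c^{jk}$ at the boundary $jk\approx 0.2n$.

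I expect regime (b) to be the main obstacle: turning $\rank\ge n/2$ into decay uniform over all small-$(j,k)$ index sets, since a naive count of index sets can overwhelm the suppression. The first thing I would try is to sum over $I_j$ and $I_k$ exactly, using the Pauli representation, so that the combinatorics collapses to generating functions of the form $(1+c\,x)^{\cdots}$ before any estimate is taken.

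For the variance, I would apply the third-moment formula of \cref{Ap:moment3}. After the same trinomial expansion, every coefficient is at most $1$ and every elementary block is bounded. The second moment of $D\tr(O\Phi)$ is then at most $D\cdot\max_{\text{terms}}$ times a factor controlled by the stabilizer norm $\tr(O^2)=1$. Bounding the number of surviving terms by $e^{O(n/\ln n)}$-type counts, using the sparsity $p=\gamma\ln n/n$, gives the sub-exponential scaling.
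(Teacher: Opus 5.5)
Your exact rewriting of the bias as the sum over $j,k\ge1$ weighted by $c^{jk}$ is correct; it is the expression the paper itself uses for its lower-bound example. So is $\max(\zeta,\zeta')\ge n/2$. However, neither regime estimate holds, and the cut at $jk=0.2n$ is itself the obstruction.

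\emph{Regime (a).} The position-summed block $B_{j,k}$ does not have bounded norm. On the sector where exactly the qubits of $I_j\cup I_k$ carry different bits in the two copies, $\id_4-\Delta_2$ acts as the identity and $\mathbb{S}_2-\Delta_2$ as a bit flip. Hence $B_{j,k}$ restricts to $\sum_{|K|=k}X^{K}$ on $j+k$ qubits, whose norm is $\binom{j+k}{k}$; this is superpolynomial near $j=k\approx\sqrt{0.2n}$. The traces are unbounded even for stabilizer inputs. For $\rho=O=\ket{\psi}\!\bra{\psi}$ with $\ket{\psi}=\ket{+}^{\otimes n/2}\ket{0}^{\otimes n/2}$, one finds $\tr[(\rho\otimes O)B_{j,k}]=2^{-n/2}\binom{n/2}{k}\binom{n/2-k}{j}$. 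At $(j,k)=(n/4,1)$, which lies inside regime (a), this is $\Theta(\sqrt n)$.

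\emph{Regime (b).} The condition $jk<0.2n$ only forces $\min(j,k)\le\sqrt{0.2n}$. For $k=1$ the index $j$ runs up to $0.2n$, giving about $2^{H(0.2)n}\approx 2^{0.72n}$ patterns, not $n^{O(\sqrt n)}$. Individual patterns can have size $2^{-\zeta}$: in the example above, every pattern with $I_j\cup I_k$ inside the first half contributes exactly $2^{-n/2}$. So any bound that is uniform over patterns, such as your Cauchy--Schwarz estimate, becomes exponentially large once summed over them. The true contribution is small only because of the signs $(-1)^{|I_j\cap\mathrm{supp}_Z(P)|}$ in the Pauli expansion. The generating-function collapse you mention, $\sum_{j\ge0}c^{jk}\sum_{I_j}(-1)^{|I_j\cap\mathrm{supp}_Z(P)|}=(1+c^k)^{n_1}(1-c^k)^{n_2}$, needs the \emph{full} sum over $j$ at fixed $k$, and a cut in $jk$ destroys it.

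The missing step is to split on $k=\mathrm{wt}_X(P)$ alone. Summing over $j$ exactly gives the Pauli-diagonal channel $\sigma_P=(1+q)^{n_1}(1-q)^{n_2}$ with $q=c^{\mathrm{wt}_X(P)}$, and the bound $D^{-1}\sum_{P\in\mathrm{STAB}(O)\setminus\mathcal{Z}_n}|\sigma_P-1|$. Since $\sigma_P$ can be exponentially large for small $X$-weight, the rank condition alone does not suffice. You need a joint count of stabilizers by $X$- and $Z$-weight, $N_{X,Z}(k,k')\le 2^k[F(\zeta,k)-1]F(n-\zeta,k')$, obtained from the reduced row-echelon form of the tableau. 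This lets $F(n-\zeta,n_2)\bigl[(1-q)/(1+q)\bigr]^{n_2}$ be summed against $(1+q)^{n}$. In the paper this produces $4n^2n^{-0.4\gamma}$ from small $X$-weight and $2n\,n^{-0.4\gamma}$ from $X$-weight above $0.2n$.

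The variance sketch has the same kind of gap. Bounding every third-moment coefficient by $1$ leaves the unweighted support $\sum_{\mathbf{x+w+z=y+s+t}}\ket{\mathbf{x,w,z}}\!\bra{\mathbf{y,s,t}}$. Per qubit this consists of all-ones blocks of sizes $1,3,3,1$, so its norm is $3^n$, and for $\rho=O=(\ket{+}\!\bra{+})^{\otimes n}$ its trace against $\rho\otimes O_f\otimes O_f$ equals $(5/2)^n-2(3/2)^n+1$. What is needed is the following structure:
\begin{itemize}
\item the zero-exponent terms form exactly $\bigcup_{\pi\in S_3}V_n(\pi)$, which gives a bounded contribution;
\item every other term has exponent at least $m_1m_2$ and lies in a union of products $V_1(\pi_1)^{\otimes I_1}\otimes V_1(\pi_1')^{\otimes I_2}\otimes V_1(\pi_1'')^{\otimes I_3}$ with $|I_2|,|I_3|\le 3\sqrt{m_1m_2}$.
\end{itemize}
The decay $\exp(-2\gamma\ln n\,m_1m_2/n)$ set against the count $\binom{n}{3\sqrt{m_1m_2}}^2$ then gives sub-exponential scaling.
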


This theorem is the central performance guarantee of SPS and also our main contribution. Its key implication is that a constant circuit sparsity parameter $\gamma$ already suffices to make the estimation bias controllable for all stabilizer-state observables. Since the expected two-qubit gate count is $O(n^2 p) = O(\gamma n \ln n)$, this keeps the sampled circuit in the sparse regime required for the log-depth relaxation and constant-depth implementation with a low number of auxiliary systems,  as shown in \cref{subsec:circuit}. More explicitly, \cref{Eq:Bias} implies that achieving bias $\epsilon$ only requires an average gate cost of order $\frac{5}{4} n \ln(6n^2 / \epsilon)$, and hence the constant-depth realization uses $O(n \log(n/\epsilon))$ auxiliary systems,  matching the asymptotic scaling of Ref.~\cite{schuster2024random}.

The complete proof of~\cref{th:PshadowMain} is deferred to~\cref{Ap:ProofOfMainThm}; it is highly nontrivial. Before sketching the proof, we first highlight the key difficulty underlying the bias bound in~\cref{Eq:Bias}. As we have shown that such an ensemble $\mc E_{\mathrm{sparse}}^{(\gamma)}$ does not form a good relative-error design (see~\cref{lem:NotRelDesign}), the second-look principle is precisely what overcomes this lack of relative-error randomness and enables effective stabilizer-shadow estimation.
We now explain this mechanism. For a Pauli string $P$, let $\mathrm{wt}_X(P)$ denote the number of qubit positions on which $P$ acts as $X$ or $Y$. By choosing between $O$ and $H^{\otimes n}OH^{\otimes n}$, the protocol selects a representation whose stabilizer group exhibits a favorable $\mathrm{wt}_X(P)$ distribution; see the explicit distribution bound in~\cref{lem:PauliDist} of the
appendix. Consequently, the stabilizer components that are most sensitive to the imperfect  measurement channel of the sparse ensemble cannot appear in large numbers, and we carefully bound all their contributions by quantities that depend explicitly on $\mathrm{wt}_X(P)$.

We now outline the proof. The first step is to expand the second moment of the sparse Clifford-IQP ensemble in the Pauli basis. In this representation, the induced shadow measurement channel~\cref{Eq:idealchannel} becomes Pauli-diagonal, so that each Pauli input is rescaled by an explicit coefficient. The detailed decomposition is given below,
with the proof left to \cref{Ap:moment2Pauli}.

\begin{lemma}[Pauli decomposition of the second moment]\label{lem:PauliDiagonalChannel}
In the SPS protocol, the second moment of the sparse Clifford-IQP ensemble in~\cref{thm: sparse 2-moment}  admits a Pauli decomposition as
\begin{equation}\label{Eq:P2P}
{\mathbf{M}}_{\mc{E}_\text{sparse}^{(\gamma)}}^{(2)}=D^{-3}\sum_{P\in\mb{P}_n}\sigma_P\, P\otimes P .
\end{equation}
and the coefficient of the Pauli operator in the form $ {P} = \id_2^{\otimes n_1}\otimes  {Z}^{\otimes n_2}\otimes \{{X},{Y}\}^{\otimes n_3} $ reads
\begin{equation}\label{eq:sigmaPclosed}
    \sigma_P
    =
    \left[
    1+\left(1-\frac{2\gamma\ln n}{n}\right)^{n_3}
    \right]^{n_1}
    \left[
    1-\left(1-\frac{2\gamma\ln n}{n}\right)^{n_3}
    \right]^{n_2}.
\end{equation}
\end{lemma}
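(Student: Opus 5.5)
The plan is to substitute the single-pair Pauli expansions of the three building blocks into the trinomial formula of \cref{thm: sparse 2-moment} and then resum over the index sets $I_i,I_j,I_k$. Throughout, write $q:=1-\frac{2\gamma\ln n}{n}$, so that the coefficient in \cref{thm: sparse 2-moment} is $q^{jk}$.

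\textbf{Step 1: single-pair expansions.} On one qubit-pair we have the standard identities
\[
\id_4=\id_2\otimes\id_2,\qquad
\mathbb S_2=\tfrac12\bigl(\id_2\otimes\id_2+X\otimes X+Y\otimes Y+Z\otimes Z\bigr),\qquad
\Delta_2=\tfrac12\bigl(\id_2\otimes\id_2+Z\otimes Z\bigr).
\]
From these the three building blocks of \cref{thm: sparse 2-moment} are
\[
\Delta_2=\tfrac12(\id\otimes\id+Z\otimes Z),\qquad
\id_4-\Delta_2=\tfrac12(\id\otimes\id-Z\otimes Z),\qquad
\mathbb S_2-\Delta_2=\tfrac12(X\otimes X+Y\otimes Y).
\]
Each block is therefore diagonal in the "$P\otimes P$" Pauli basis. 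Consequently every tensor product in the lemma, and hence $\mathbf M^{(2)}$ itself, is a combination of terms $P\otimes P$ with $P\in\mathbb P_n$. Each term carries an overall factor $2^{-n}=D^{-1}$, which combines with the prefactor $D^{-2}$ to give $D^{-3}$.

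\textbf{Step 2: coefficient extraction.} Fix $P$ with $n_1$ identity sites, $n_2$ sites carrying $Z$, and $n_3$ sites carrying $X$ or $Y$. We ask which triples $(I_i,I_j,I_k)$ produce $P\otimes P$.

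Since only the block $\mathbb S_2-\Delta_2$ contains $X\otimes X$ or $Y\otimes Y$, and it contains neither $\id\otimes\id$ nor $Z\otimes Z$, the set $I_k$ must coincide exactly with the $X/Y$ support. Hence $k=n_3$.

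The remaining $n_1+n_2$ sites are split arbitrarily into $I_i$ and $I_j$. A site in $I_i$ contributes $+1$ whether it carries $\id$ or $Z$. A site in $I_j$ contributes $+1$ if it carries $\id$ and $-1$ if it carries $Z$. The weight of a given split is $q^{jk}=q^{|I_j|\,n_3}=\prod_{s\in I_j}q^{n_3}$, which factorizes over sites.

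\textbf{Step 3: resummation.} Summing over all subsets $I_j$ of the non-$X/Y$ sites factorizes into a product of independent per-site sums. An identity site gives $1+q^{n_3}$, and a $Z$ site gives $1-q^{n_3}$. This yields
\[
\sigma_P=\bigl[1+q^{n_3}\bigr]^{n_1}\bigl[1-q^{n_3}\bigr]^{n_2},
\]
which is \cref{eq:sigmaPclosed}.

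There is no real obstacle in this argument. The only points needing care are the normalization, namely checking that the $\tfrac12$ per site gives exactly $D^{-3}$, and the observation that $I_k$ is forced, so that the sum over triples collapses to a free sum over $I_j$ alone. As a sanity check, $P=\id$ gives $\sigma_P=2^n$. This is consistent with $\tr_2\mathbf M^{(2)}=D^{-1}\id$ when tracing out the second copy.
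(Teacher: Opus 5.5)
Your proposal is correct and follows essentially the same route as the paper: expand $\Delta_2$, $\id_4-\Delta_2$ and $\mathbb{S}_2-\Delta_2$ in the two-copy Pauli basis, note that $I_k$ is forced to equal the $X/Y$ support of $P$, and sum over the splittings of the remaining sites into $I_i$ and $I_j$ with sign $(-1)^{|I_j\cap I_{n_2}|}$ and weight $\bigl(1-\tfrac{2\gamma\ln n}{n}\bigr)^{|I_j|n_3}$. The only difference is presentational: the paper groups that last sum by $s=|I_j|$ and $t=|I_j\cap I_{n_2}|$ and applies the binomial theorem, whereas you factorize it site by site, which is the same computation.
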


A useful feature of \cref{eq:sigmaPclosed} is that the coefficient $\sigma_P$ depends only on the Pauli weight data $(n_1,n_2,n_3)$, rather than on the locations of the corresponding Pauli operators, due to the symmetry of the ensemble $\mc{E}_\text{sparse}^{(\gamma)}$. In particular, $n_3$ is precisely the $X$-weight $\mathrm{wt}_X(P)$.

This Pauli decomposition implies that for a stabilizer-state observable $O$, only Pauli strings in $\mathrm{STAB}(O)$ contribute to the estimation bias, so the bias is controlled by the quantity of the form $D^{-1}\sum_{P\in\text{STAB}(O)}|\sigma_P-1|$. Thus, the proof reduces to understanding how the coefficient $\sigma_P$ behaves as a function of the Pauli weights, especially the $X$-weight $n_3$.

We then split the stabilizer Paulis $P$ of observable $O$ according to $n_3$.
For very small $n_3$, the deviation of $\sigma_P$ from its ideal value $1$ can be relatively large, but only a limited number of stabilizer Paulis can contribute, so the total contribution is controlled by some counting bound.
For large $n_3$, $\sigma_P$ becomes close to its ideal value $1$, so even if many stabilizer Paulis are present, each individual term is sufficiently suppressed.
For moderately small and intermediate $n_3$, the stabilizer-counting bound and the decay in the explicit coefficient $\sigma_P$ act together, making the total contribution negligible.
Summing these four regimes gives the claimed bound in \cref{Eq:Bias}.
The key reason this decomposition works is the second-look principle. By improving the $n_3$ distribution of the selected stabilizer representation, it prevents the two unfavorable effects, large deviations of $\sigma_P$ from its ideal value $1$ and many stabilizer Paulis, from occurring simultaneously.

We emphasize several additional features of \cref{th:PshadowMain} that are crucial for practice and sharpen the contrast to existing shallow-shadow constructions.
First, the bias bound in~\cref{Eq:Bias} is fully explicit. We give concrete coefficients so one can directly translate a target bias into an experimentally actionable circuit budget.
Second, the circuit cost is \emph{continuously tunable} through the sparsity parameter $\gamma$, which allows one to adjust the estimation bias smoothly as a function of the expected two-qubit gate count.
This stands in marked contrast to patch-based gluing schemes in Ref.~\cite{schuster2024random}, where the circuit resources in general could not be fine-grainedly calibrated for a fixed system size.
These advantages together establish SPS as a more direct and controllable route from theory-level guarantees to experiment-level implementation.

The variance bound is obtained from the third-moment formula of the sparse Clifford-IQP ensemble~\cref{Ap:moment3}.
At the end of \cref{Ap:MainVar}, we further discuss the total variance of the full estimator by taking into account the sample allocation $N_f$ and $N_d$ used in \cref{algo:PhaseShadow}, including the factor-of-two split required by the second-look principle.

We numerically validate the guarantees of \cref{th:PshadowMain} and assess their practical relevance. First, the simulations confirm that the analytical bias bound in \cref{Eq:Bias} is rigorous but conservative. While the theorem requires $\gamma>3$, the bias already decreases rapidly for significantly smaller values, around $\gamma\approx2$. As shown in \cref{fig:broader_use}(a), at $\gamma=2$ the estimation bias is reduced to about $0.05$ for most tested stabilizer states, far below the upper bound in \cref{th:PshadowMain}.

Second, although ~\cref{th:PshadowMain} only guarantees sub-exponential scaling of the estimation variance, the observed variance remains small and weakly dependent on system size in all tested regimes. As shown in \cref{fig:broader_use}(b), the statistical fluctuations are modest and do not pose a bottleneck, corroborating the empirical stability of the estimator.

\begin{figure}[htbp]
    \centering \includegraphics[width=\linewidth]{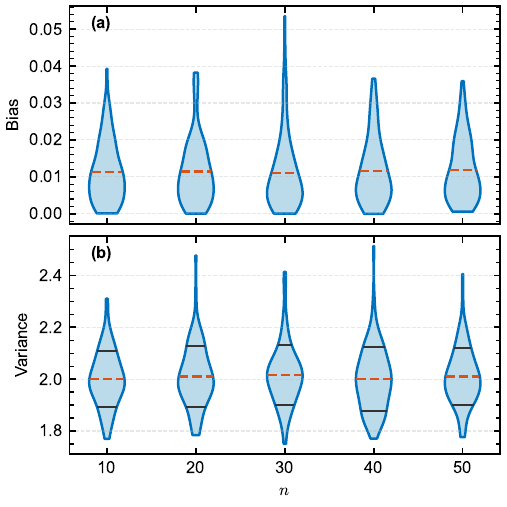} %
    \caption{Statistical distribution of the estimation bias and variance with respect to qubit number $n$. Data are collected from 100 independently sampled random stabilizer states with $\gamma = 2$ and $10^4$ shots for each state. The violin plots illustrate the variation in estimator performance among different quantum states. The dashed lines indicate the mean value of bias and variance, and the solid black lines in (b) represent the standard deviation range. }
    \label{fig:broader_use}
\end{figure}

Finally, we compare SPS with existing shallow-shadow schemes at fixed two-qubit gate cost. According to~\cref{app:gate_count_mapping}, $\gamma=2$ corresponds to about $N_g\approx182$ two-qubit gates when $n=48$; see \cref{tab:gate_count_comparison}. As demonstrated in \cref{fig:guarentee}, SPS reaches the statistical noise floor using several times fewer entangling gates across representative stabilizer states. Even in unfavorable cases where the second-look principle provides no additional benefit, as illustrated in \cref{fig:guarentee}(b), SPS remains competitive.

Taken together, these results show that the theoretical guarantees of SPS are not only valid but also practically meaningful: the bias bound is robust, the variance is well controlled, and the resource efficiency translates directly into improved experimental performance.

\begin{figure*}[htbp]
    \centering \includegraphics[width=\linewidth]{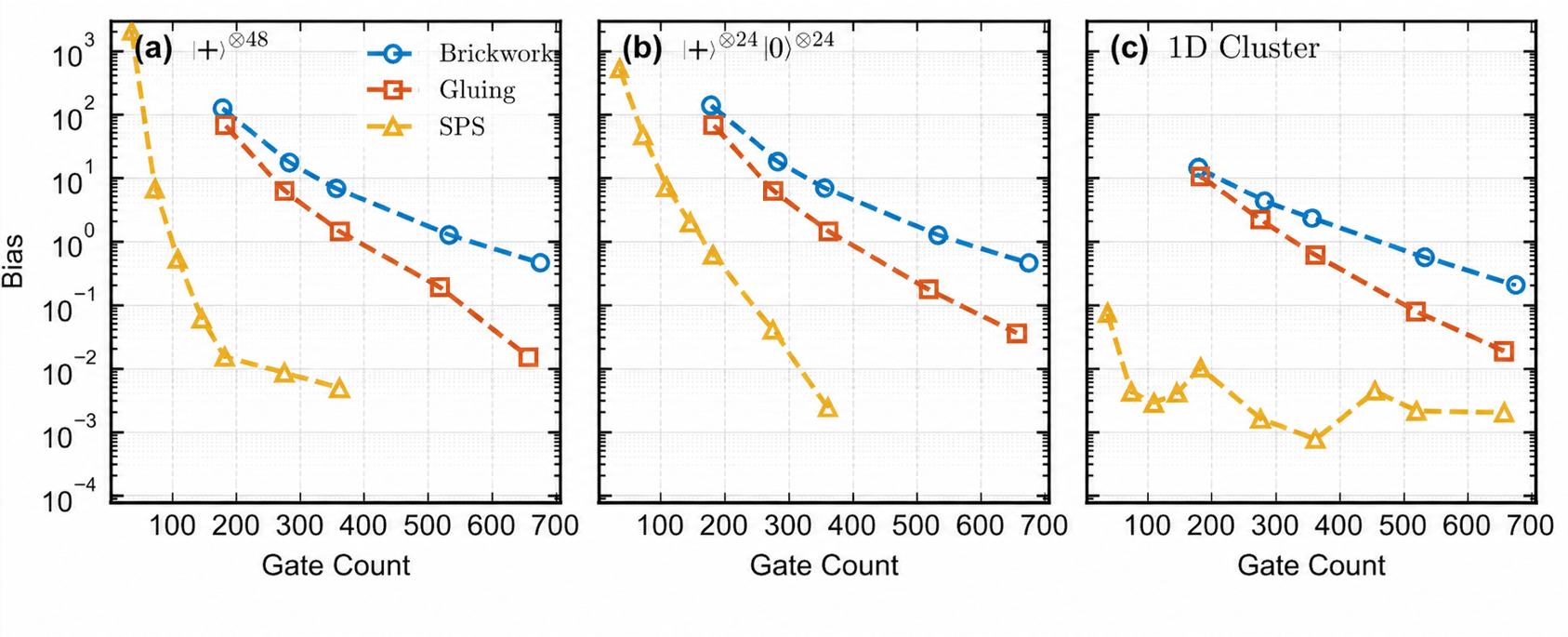} %

    \caption{Comparison of the estimation bias against two-qubit gate count (see \cref{app:gate_count_mapping}) for three shallow shadow schemes: 1D brickwork circuits (blue circles,~\cite{bertoni2024shallow}), Gluing circuits (orange squares,~\cite{schuster2024random}), and SPS (yellow triangles, our approach). Notice that the logarithmic-depth \emph{one-dimensional} (1D) brickwork baseline exhibits relative-error approximate state 2-design behavior established in Ref.~\cite{heinrich2025anti,dalzell2022random}. The results are shown for (a) the tensor product state vector $\ket{+}^{\otimes n}$, (b) $\ket{+}^{\otimes n/2}\ket{0}^{\otimes n/2}$, with $n=48$ and $2 \times 10^4$ shots and (c) the 1D cluster state. For SPS, the bias drops rapidly below the statistical noise floor ($\sim 10^{-2}$), making the measurement error the dominant source of uncertainty.  }
    \label{fig:guarentee}
\end{figure*}

\section{Exactly Unbiased Shallow Phase Shadow}\label{sec:unbiased}

The biased estimator introduced in the previous section achieves efficient reconstruction for stabilizer-state observables. In this section, we  construct an \emph{unbiased} estimator that exactly inverts the effective measurement channel induced by the sparse ensemble $\mc{E}_{\text{sparse}}^{(\gamma)}$.

\begin{prop}[Guarantees of exactly unbiased recovery of shallow phase shadow]\label{th:PshadowUnbiased}
Suppose that one conducts
shallow phase shadow estimation using the random unitary ensemble $\mc{E}_{{\text{sparse}}}^{(\gamma)}$, then the unbiased estimator of $\rho_f$ for a single-shot is given by
\begin{equation} \label{Eq:InvNoisychannelPauli}
\widehat{\rho_f}_{\text{sparse}} := \sum_{P \in \mb{P}_n/\mathcal{Z}_n} \sigma_{P}^{-1} \, \tr(\Phi_{U,\mathbf{b}} {P}) \, {P},
\end{equation}
where $\sigma_P$ is the Pauli
coefficient of the forward noisy channel from \cref{Eq:sigmaP}, ensuring $\mathbb{E}_{\{U,\mathbf{b}\}} \widehat{\rho_f}_{\text{sparse}} = \rho_f$.
\end{prop}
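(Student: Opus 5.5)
The plan is a direct computation. I will evaluate the expected Pauli coefficients $\mathbb{E}_{U,\mathbf b}\tr(\Phi_{U,\mathbf b}P)$ using the Pauli-diagonal form of the second moment in \cref{lem:PauliDiagonalChannel}. Then I will check that dividing by $\sigma_P$ reproduces the Pauli expansion of $\rho$ restricted to non-$Z$-type strings, which is exactly $\rho_f$.

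\emph{Step 1 (Pauli expectation).} Since $\Pr(\mathbf b|U)=\tr(\rho\Phi_{U,\mathbf b})$, we have
\begin{equation*}
\mathbb{E}_{U,\mathbf b}\tr(\Phi_{U,\mathbf b}P)=\mathbb{E}_U\sum_{\mathbf b}\tr\bigl[(\rho\otimes P)\Phi_{U,\mathbf b}^{\otimes 2}\bigr]=D\,\tr\bigl[(\rho\otimes P)\mathbf M^{(2)}_{\mc E^{(\gamma)}_{\rm sparse}}\bigr].
\end{equation*}
This is the same manipulation as in \cref{Eq:idealchannel}. Inserting \cref{Eq:P2P} and using the orthogonality $\tr(QP)=D\,\delta_{Q,P}$ for Hermitian Pauli strings gives
\begin{equation*}
\mathbb{E}_{U,\mathbf b}\tr(\Phi_{U,\mathbf b}P)=D^{-2}\sum_Q\sigma_Q\tr(\rho Q)\tr(QP)=D^{-1}\sigma_P\tr(\rho P).
\end{equation*}
In other words, the induced measurement channel acts as the Pauli-diagonal map $P\mapsto D^{-1}\sigma_P P$.

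\emph{Step 2 (inversion and identification of $\rho_f$).} By linearity,
\begin{equation*}
\mathbb{E}\,\widehat{\rho_f}_{\rm sparse}=\sum_{P\in\mb P_n/\mathcal Z_n}D^{-1}\tr(\rho P)P.
\end{equation*}
Now expand $\rho=D^{-1}\sum_{P\in\mb P_n}\tr(\rho P)P$. The dephasing $\rho\mapsto\rho_d=\sum_{\mathbf b}\ketbra{\mathbf b}{\mathbf b}\rho\ketbra{\mathbf b}{\mathbf b}$ keeps exactly the $Z$-type strings, i.e.\ those in $\mathcal Z_n$ (including $\id$), and annihilates every string containing an $X$ or $Y$. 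Hence the remaining sum is precisely $\rho-\rho_d=\rho_f$.

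\emph{Step 3 (well-definedness, the only delicate point).} We must check that $\sigma_P\neq0$ for every $P\notin\mathcal Z_n$, i.e.\ for every $P$ with $n_3\ge1$. Write $q:=1-\frac{2\gamma\ln n}{n}=1-2p$. Since $p\in(0,1)$, we have $|q|<1$, and therefore $|q^{n_3}|<1$ whenever $n_3\ge1$. Both brackets $1\pm q^{n_3}$ in \cref{eq:sigmaPclosed} are then strictly positive, so $\sigma_P>0$. This includes the dense case $p=1/2$, where $q=0$ and $\sigma_P=1$, consistent with \cref{Eq:shadow2}. For $Z$-type strings, $n_3=0$ and $\sigma_P$ can vanish, e.g.\ $1-q^0=0$ whenever $n_2\ge1$. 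This is exactly why those strings are excluded from the sum and $\rho_d$ is estimated separately by computational-basis measurements.

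I expect no real obstacle beyond this nonvanishing check. The main care needed is bookkeeping of the normalization conventions: the $D^{-3}$ prefactor in \cref{Eq:P2P}, the factor $D$ from summing over $\mathbf b$, and the $D$ coming from Pauli orthogonality. These must combine so that no stray factor of $D$ survives; as a consistency check, $\sigma_{\id}=2^n=D$.
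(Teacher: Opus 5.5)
Your proposal is correct and follows essentially the same route as the paper's proof: you write the expectation as $D\,\mathrm{tr}\bigl[(\rho\otimes P)\mathbf{M}^{(2)}\bigr]$, insert the Pauli-diagonal form of the second moment, use Pauli orthogonality to get $D^{-1}\sigma_P\,\mathrm{tr}(\rho P)$, and identify the remaining sum over non-$Z$-type strings as $\rho_f$. Your Step 3 check that $\sigma_P>0$ whenever $n_3\ge 1$ (because $|1-2p|<1$) is a worthwhile addition, since the paper uses $\sigma_P^{-1}$ without verifying that it is well defined.
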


The proof of~\cref{th:PshadowUnbiased} is left to~\cref{Ap:ProofUnbiased}. Here, the unbiased estimator eliminates systematic bias entirely, at the expense of a moderate overhead in classical post-processing. The channel coefficient $\sigma_P$ can be efficiently computed, as it depends only on the Pauli weight data $\mathrm{wt}_X(P)$ and $\mathrm{wt}_Z(P)$ through~\cref{lem:PauliDiagonalChannel}.

Importantly, for the stabilizer-fidelity tasks considered here, this post-processing remains efficient. Concretely, given a stabilizer-state observable $O$, it suffices to evaluate
\begin{equation}
\tr(O\widehat{\rho_f}_{\mathrm{sparse}})= \sum_{P\in \mathbf P_n/\mathcal Z_n}\sigma_P^{-1}\tr(\Phi_{U,\mathbf b}P)\tr(OP).
\end{equation}
Thus, only Pauli strings that have nonzero overlap with both the target observable $O$ and the sampled snapshot $\Phi_{U,\mathbf b}=U^\dagger\ket{\mathbf b}\bra{\mathbf b}U$ contribute to the estimator. For stabilizer-state observables, these relevant Pauli supports can be generated and intersected efficiently using stabilizer tableaus.  The numerical results in~\cref{app:efficiency} show that the post-processing time remains modest in the tested regimes, with approximately polynomial growth in system size.

Moreover, the coefficients $\sigma_P^{-1}$ may admit a structured tensor-network representation, such as a \emph{matrix-product-state} (MPS) or \emph{matrix-product-operator} (MPO) description over
Pauli strings~\cite{hu2025demonstration,bertoni2024shallow,farias2024robust}, which would allow $\tr(O\widehat{\rho_f}_{\mathrm{sparse}})$ to be evaluated by tensor contraction rather than explicit summation over all $P$. We leave a systematic investigation of such MPS/MPO representations for the sparse phase ensemble to future work.

Beyond post-processing efficiency, the Pauli-diagonal form of the inverse map provides a direct route to error mitigation. If the physical noise is described by a Pauli channel, it simply modifies the Pauli rescaling coefficients of the shadow channel~\cite{qin2025classical,zhang2025robust}. The ideal sparsity-induced coefficient can therefore be replaced by an effective coefficient that incorporates both shallow-circuit sparsity and the noise response. In this way, SPS naturally integrates Pauli-channel error mitigation within the same classical post-processing framework.

\section{Discussion}

In this work, we have introduced \emph{shallow phase shadows} (SPS), a constant-depth protocol for global shadow estimation of stabilizer-state fidelities. At a conceptual level, our results have overcome a common prejudice in shadow estimation: that retaining the ``measure first, ask questions later'' flexibility of classical shadows necessarily requires generic random measurements, whereas tailoring a measurement scheme to the quantities of interest inevitably sacrifices this flexibility. SPS has shown that this dichotomy is not necessary. By adapting the randomization to a structured class of observables, while leaving the choice of the particular observable to post-processing, we retained the defining flexibility of shadow estimation with substantially less randomness. Technically, this has become possible through the combination of a remarkably simple sparse Clifford-IQP $S$-$CZ$-$H$ ensemble and the \emph{second-look} principle. Despite not forming a relative-error approximate state 2-design, this ensemble enabled efficient global estimation of stabilizer-state fidelities. At the same time, its commuting diagonal structure allowed the quantum circuit to be implemented in constant depth through a space-time trade-off assisted by mid-circuit measurements and classical feedback, with the same asymptotic auxiliary-system overhead as the leading gluing framework. The resulting circuits required only sparse $CZ$ entangling gates and were therefore particularly simple from an implementation perspective. Taken together, these results separated the requirements for successful shadow estimation of structured observables from the substantially stronger requirement of generating generic approximate-design randomness. They thereby suggested a broader principle: scalable quantum readout need not reproduce generic randomness when the randomization can instead be meaningfully adapted to the class of quantities one ultimately intends to estimate.

Our results point to several compelling directions. First, while we have focused on stabilizer fidelities, the framework should extend naturally to observables carrying modest non-stabilizerness (magic). A promising next step is to integrate SPS with robustness-of-magic techniques~\cite{howard2017application} to obtain bias and sample-complexity bounds that scale explicitly with the magic content of the observable. Second, and more broadly, this work raises the question of whether the full randomness needed for relative-error approximate designs is genuinely necessary for other structured shadow-estimation tasks. Characterizing the gap between task-specific shadow guarantees and conventional design conditions may lead to new classes of shallow, task-adapted circuit ensembles. From a hardware standpoint, SPS highlights a practical route for all-to-all connected atomic and ion platforms: by investing spatial resources in the form of auxiliary systems,  one can effectively trade space for time, mitigating the cost of slow entangling gates. This suggests that scalability on such platforms can be advanced not only by incremental improvements in gate speed and fidelity, but also by deliberate protocol design that shifts the dominant cost from circuit depth to qubit layout.

\section{Acknowledgments}

This work has been supported by the Quantum Science and Technology-National Science and Technology Major Project Grant Nos.~2024ZD0301900 and~2021ZD0302000, the National Natural Science Foundation of China (NSFC) Grant No.~12575012, the Shanghai QiYuan Innovation Foundation, the Shanghai Municipal Commission of Science and Technology with Grant No.~25511103200, the Shanghai Science and Technology Innovation Action Plan Grant No. 24LZ1400200, the Shanghai Pilot Program for Basic Research-Fudan University~21TQ1400100 (25TQ003, collaborated with X.L.), the CCF-Quantum CTek Superconducting Quantum Computing CCF-QC2025006.
J.E.\ acknowledges the support from the BMFTR (QSolid, PasQuops, Hybrid++, QuSol), the
Quantum Flagship (Millenion, PasQuans2), Berlin Quantum, the Munich Quantum Valley, the DFG (SPP 2514, CRC 183), the Clusters of Excellence (MATH+ and ML4Q),
and the European Research Council (DebuQC).
X.L.\ also acknowledges the support from National Key Research and Development Program of
China (2021YFA1400900), Innovation Program for Quantum Science and
Technology of China (2024ZD0300100), National Natural Science Foundation
of China (11934002), and Shanghai Municipal Science and Technology Major Project (Grant No. 2019SHZDZX01, 24DP2600100).

\bibliography{Shallow}

\clearpage

\appendix
\onecolumngrid
\startappendixtoc
\vspace{3em}
\begin{center}
\textbf{\large Supplementary material: \\ \smallskip Constant-depth global shadow estimation}
\end{center}
\appendixtableofcontents
\clearpage

\begin{appendix}

\section{Notations and preliminaries}\label{App:UaA}

In this appendix, we summarize the notation and mathematical conventions employed throughout the proofs.

We begin by defining set-theoretic operations—specifically \textit{union}, \textit{intersection}, and \textit{inclusion}—as applied to Boolean matrices. Let $A$ and $B$ be two Boolean matrices (or operators) of compatible dimensions. Their union ($A \cup B$) and intersection ($A \cap B$) are defined via element-wise logical OR ($\vee$) and AND ($\wedge$) operations, respectively. Formally, the elements $c_{i,j}$ of the resulting matrix $C$ are given by
\begin{eqnarray}
&\text{Union: } C = A \cup B \implies c_{i,j} = a_{i,j} \vee b_{i,j},\\
&\text{Intersection: } C = A \cap B \implies c_{i,j} = a_{i,j} \wedge b_{i,j},
\end{eqnarray}
where $a_{i,j}$ and $b_{i,j}$ denote the corresponding elements of $A$ and $B$.
We define element inclusion such that $\ket{i}\bra{j} \in A$ if and only if the corresponding matrix element satisfies $\bra{i}A\ket{j} = 1$ (i.e., $a_{i,j} = 1$). Furthermore, for matrices $A$ and $B$ of compatible dimensions, the “subset relationship” $A \subseteq B$ holds if, for all indices $(i,j)$, $a_{i,j} = 1$ implies $b_{i,j} = 1$.

For an $n$-qubit Pauli string $P$, we let $\mathrm{wt}_X(P)$ denote the number of Pauli $X$ and $Y$ operators in the string, and $\mathrm{wt}_Z(P)$ denote the number of Pauli $Z$ operators. For a Pauli operator of the form $P = \mathbb{I}_2^{\otimes n_1} \otimes Z^{\otimes n_2} \otimes \{X,Y\}^{\otimes  n_3}$, it follows that $\mathrm{wt}_X(P) = n_3$ and $\mathrm{wt}_Z(P) = n_2$.
In what follows,
we adopt the
notation $[n] := \{0, 1, \dots, n-1\}$. For any integer $k \in \{0, \dots, n-1\}$, $\mathcal{I}_k$ represents the collection of all $k$-element subsets of $[n]$. Given a subset $I \subseteq [n]$ and an operator $A$ acting on a single qubit, $A^I$ denotes the tensor product resulting from applying $A$ to every qubit indexed by $I$.
Throughout this work, $\mathrm{negl}(n)$ refers to any negligible function of the integer $n$; specifically, a function satisfying $\mathrm{negl}(n) = o(n^{-c})$ for every constant $c > 0$.

Finally, let $\pi$ be an element of the symmetric group $S_m$ of order $m$. Its unitary representation on $\mathcal{H}_D^{\otimes m}$ is denoted by $V_n(\pi)$, acting as
\begin{equation}
V_n(\pi) \bigotimes_{i=1}^{m} \ket{\psi_{i}} = \bigotimes_{i=1}^{m} \ket{\psi_{\pi^{-1}(i)}},
\end{equation}
where $\{\psi_1, \psi_2, \dots, \psi_m\}$ are $n$-qubit quantum states.

\section{Proof of Theorem~\ref{th:constdepth}}\label{Ap:proofmain}

In this section, we first state the formal version of \cref{th:constdepth}, including the explicit success probabilities inherited from
\cref{prop:resource_limits_formal-ap}
(the formal version of \cref{prop:resource_limits}).

\begin{theorem}[Shallow phase shadow in constant depth, formal]\label{th:constdepth_formal}
Run the shallow phase shadow protocol on the all-to-all architecture, guaranteeing that for every stabilizer-state observable $O$, the resulting estimator has bias at most a fixed value  $\epsilon\in(0,1)$. Then,  one has the implementation with
\begin{itemize}
\item  constant-depth, using
    $7.5n\ln(6n^2/\epsilon)$ auxiliary qubits,  with probability at least $1-\exp[-\frac{5}{12} (n-1)\ln \frac{6n^2}{\epsilon}]$,
\item $5\ln (6n^2/\epsilon)$-depth without using auxiliary systems,   with probability at least $1-\sqrt{\epsilon}n^{-0.25}$.
\end{itemize}
\end{theorem}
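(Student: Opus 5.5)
The proof combines the bias guarantee of \cref{th:PshadowMain} with concentration bounds for the random interaction graph $G(n,p)$, $p=\gamma\ln n/n$.

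\textbf{Step 1: choose $\gamma$.} By \cref{th:PshadowMain}, for $\gamma>3$ the bias is at most $(4n^2+2n)n^{-0.4\gamma}+\mathrm{negl}(n)\le 6n^2 n^{-0.4\gamma}$ for large $n$. It therefore suffices to take $n^{-0.4\gamma}\le \epsilon/(6n^2)$, and I set
\[
\gamma\ln n=\tfrac{5}{2}\ln\frac{6n^2}{\epsilon}.
\]
Since $\epsilon<1$, this gives $\gamma>5$, so the hypothesis $\gamma>3$ holds. The negligible term is absorbed by a slight enlargement of the constant, or by assuming $n$ large enough.

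\textbf{Step 2: the auxiliary-qubit bound.} By \cref{eq:ancilla} with $c=2$, the number of auxiliary qubits is less than $n_{\mathrm{tot}}=3|E|$. The edge count satisfies $|E|\sim\mathrm{Bin}(\binom n2,p)$ with mean $\mu=\frac{(n-1)}{2}\gamma\ln n=\frac{5}{4}(n-1)\ln\frac{6n^2}{\epsilon}$. I will use the Chernoff bound $\Pr[|E|\ge(1+\delta)\mu]\le e^{-\delta^2\mu/(2+\delta)}$ with $\delta=1$. This gives $|E|\le 2\mu\le \frac52 n\ln\frac{6n^2}{\epsilon}$, and hence $n_{\mathrm{tot}}\le 7.5n\ln(6n^2/\epsilon)$, with failure probability at most $e^{-\mu/3}=\exp[-\frac{5}{12}(n-1)\ln\frac{6n^2}{\epsilon}]$. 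This matches the stated probability exactly. The depth claim follows because fan-out, the single parallel $CZ$ layer, the $S$ layer, fan-in and the final $H$ layer are each $O(1)$ depth, as described in \cref{subsec:circuit}.

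\textbf{Step 3: the no-auxiliary depth bound.} Without auxiliary qubits, the commuting $CZ$ gates can be scheduled in rounds given by a proper edge coloring of $G$. Vizing's theorem gives depth at most $\Delta(G)+1$, plus constant layers for $S$ and $H$. So I need $\Delta(G)\lesssim 2\gamma\ln n=5\ln(6n^2/\epsilon)$ with high probability. Each degree is $\mathrm{Bin}(n-1,p)$ with mean at most $\gamma\ln n$. The Chernoff bound with $\delta=1$ gives, per vertex,
\[
\Pr[\deg\ge 2\gamma\ln n]\le e^{-\gamma\ln n/3}=n^{-\gamma/3}.
\]
A union bound over $n$ vertices then gives $n^{1-\gamma/3}$. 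Substituting $n^{-\gamma}=(\epsilon/6n^2)^{5/2}$ yields $n\cdot(\epsilon/(6n^2))^{5/6}\le \epsilon^{5/6}n^{-2/3}$, which is at most $\sqrt{\epsilon}\,n^{-0.25}$ since $\epsilon<1$.

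\textbf{Main obstacle.} I expect the delicate part to be the bookkeeping of additive constants. The $+1$ from Vizing and the single-qubit layers must fit inside the stated $5\ln(6n^2/\epsilon)$ depth. I plan to handle this by using a slightly smaller Chernoff threshold, for example $(2-\eta)\gamma\ln n$, or by relying on the slack in $\ln(6n^2/\epsilon)\ge\ln 6$. The other concern is ensuring that the $\mathrm{negl}(n)$ term in Step 1 does not break the exact choice of $\gamma$. I would absorb it by replacing $4n^2+2n\le 6n^2$ with the slack available for $n\ge 2$.
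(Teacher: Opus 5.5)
Your proposal is correct and follows the paper's proof: you make the same choice $\gamma\ln n=\tfrac52\ln(6n^2/\epsilon)$ (so that $n^{-0.4\gamma}=\epsilon/(6n^2)$) in the bias bound, and you apply the same multiplicative Chernoff bound with $\delta=1$ to $|E|$, which reproduces the stated auxiliary-qubit count and failure probability exactly. The one deviation is the auxiliary-free case: instead of citing Lemma~9 of Bremner et al.\ for $n^{1-\gamma/4}$, you derive the sharper $n^{1-\gamma/3}$ from a per-vertex Chernoff bound plus a union bound, which still implies $\sqrt{\epsilon}\,n^{-0.25}$, and you handle the $+1$ from the edge colouring more carefully than the paper does.
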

\begin{proof}
    The proof of \cref{th:constdepth_formal} combines the circuit-resource bounds in \cref{prop:resource_limits_formal-ap} with the performance guarantee of SPS in \cref{th:PshadowMain}: one first chooses the sparsity parameter $\gamma$ so that the SPS estimator has bias at most $\epsilon$, and then translates this choice of $\gamma$ into the stated depth and auxiliary system requirements.

By Theorem~\ref{th:PshadowMain}, for $\gamma>3$ the second-look estimator has
\begin{equation}
\mathrm{Bias}(\widehat{o_f})\le (4n^2+2n)\,n^{-0.4\gamma}+\mathrm{negl}(n).
\end{equation}
Choose $\gamma=\frac{\ln(6n^2/\epsilon)}{0.4\ln n}$, then $n^{-0.4\gamma}=\epsilon/(6n^2)$, hence
\begin{equation}
\mathrm{Bias}(\widehat{o_f})
\le (4n^2+2n)\frac{\epsilon}{6n^2}+\mathrm{negl}(n)
\le \epsilon.
\end{equation}
Moreover, this choice satisfies $\gamma>3$ automatically since $\epsilon\in(0,1)$.

It remains to bound implementation resources. These follow directly from
\cref{prop:resource_limits_formal-ap} applied to the above value of $\gamma$:
(i) in the constant-depth realization with auxiliary qubits, with probability at least
\begin{equation}
1-\exp[-\frac{\gamma (n-1)\ln n}{6}]=1-\exp[-\frac{1}{2.4} (n-1)\ln \frac{6n^2}{\epsilon}],
\end{equation}
the number of auxiliary qubits is at most
$3\gamma n\ln n\le 7.5n\ln(6n^2/\epsilon)=O(n\log\frac{n}{\epsilon})$;
(ii) in the auxiliary-system-free realization, with probability at least $1-n^{1-\gamma/4}$ the two-qubit depth is at most
$2\gamma\ln n=5\,\ln(6n^2/\epsilon)=O(\log\frac{n}{\epsilon})$. In both cases, the stated resource bounds hold with overwhelmingly high probability for the properly chosen value of $\gamma$.
Finally, rewriting the failure probability using the chosen $\gamma$ gives
\begin{equation}
n^{1-\gamma/4}
= n\exp\!\Big(-\frac{\ln(6n^2/\epsilon)}{1.6}\Big)
= n\Big(\frac{\epsilon}{6n^2}\Big)^{0.625}
\le \sqrt{\epsilon}\,n^{-0.25},
\end{equation}
which yields the stated success probability in the
auxiliary-system-free case.
\end{proof}

\section{Moment functions of sparse Clifford-IQP ensemble}

In this section, we derive the second- and third-order moment functions of the sparse Clifford-IQP ensemble, $\mb{M}_{\mc{E}_{\text{sparse}}^{(\gamma)}}^{(2)}$ and $\mb{M}_{\mc{E}_{\text{sparse}}^{(\gamma)}}^{(3)}$. In this work, the second moment is used to characterize the effective shadow channel and to bound the estimation bias of SPS, while the third moment is used for estimation variance.

The derivation is organized into three parts.
In~\cref{Ap:moment2}, we derive the tensor-product expression for the second moment and prove~\cref{thm: sparse 2-moment}.
In~\cref{Ap:moment2Pauli}, we rewrite this second moment in the Pauli basis, obtaining the channel coefficients used in the bias analysis.
In~\cref{Ap:moment3}, we derive the corresponding third-moment formula, which will later enter the variance analysis.
The use of these moment functions within the SPS framework is discussed in the subsequent sections.
Specifically, the results of~\cref{Ap:moment2,Ap:moment2Pauli} are used in the proof of~\cref{th:PshadowMain} in~\cref{Ap:MainBias}, while~\cref{Ap:moment3} is used in the variance analysis in~\cref{Ap:MainVar}.

To this end, we first analyze the structural form of the ensemble $\mc{E}_{\text{sparse}}^{(\gamma)}$. According to \cref{Eq:diagonalcircuits}, the unitary element $U\in \mc{E}_{\text{sparse}}^{(\gamma)}$ is taken in the form that $U={H}^{\otimes n}U_SU_C\in \mc{E}_{\text{sparse}}^{(\gamma)}$. In this setting, we represent the moment function of the sparse ensemble as
\begin{equation}\label{Eq:momentHSCZ}
\begin{split}
\mb{M}_{\mc{E}_{\text{sparse}}^{(\gamma)}}^{(m)} &= D^{-1}\mathbb{E}_{U\sim \mc{E}_{\text{sparse}}^{(\gamma)}}\sum_{\mb{b}} (U^{\dagger}\ket{\mb{b}}\bra{\mb{b}}U)^{\otimes m}\\
    &= \mathbb{E}_{U_C} U_C^{\dagger,\otimes m}[D^{-1}\mathbb{E}_{U_S}\sum_{\mb{b}} (U_S^{\dagger}H^{\otimes n}\ket{\mb{b}}\bra{\mb{b}}H^{\otimes n}U_S)^{\otimes m}]U_C^{\otimes m}.\\
\end{split}
\end{equation}
This decomposition conveniently separates the expectation over the single-qubit gates ($U_S$) from the two-qubit gates ($U_C$).

Averaging over the random single-qubit diagonal layer $U_S$ enforces a local phase-cancellation condition on each qubit across the $m$ copies. As a result, only those matrix elements satisfying the condition survive. Following Ref.~\cite{zhang2025robust}, one obtains
\begin{equation}\label{Eq:momentHS}
    D^{-1}\, \mathbb{E}_{U_S} \sum_{\mb{b}}
    \big(U_S^{\dagger} H^{\otimes n}\ket{\mb{b}}\bra{\mb{b}}H^{\otimes n} U_S\big)^{\otimes m}
    = D^{-m}\sum_{(\hat{\mb{x}},\hat{\mb{y}})\in C^{(m)}}
    \ket{\hat{\mb{x}}}\bra{\hat{\mb{y}}},
\end{equation}
where the index set $C^{(m)}$ specifies the subset of $m$-copy computational-basis patterns that survive after averaging over the single-qubit diagonal phases.
Explicitly (see Appendix~B of Ref.~\cite{zhang2025robust}), $C^{(m)}$ is defined as
\begin{equation}\label{Eq:CmDef}
    C^{(m)} :=
    \big\{ (\hat{\mb{x}}, \hat{\mb{y}})
    \in (\{0,1\}^{mn})^2
    \;\big|\;
    \forall\, q\in[n],\;
    \sum_{t=1}^m
    (x_{q,t} - y_{q,t}) \equiv 0 \pmod{4}
    \big\},
\end{equation}
that is, for each qubit index $q$, the total bit differences across all $m$ copies must vanish modulo 4.

\subsection{Proof of  \cref{thm: sparse 2-moment}}\label{Ap:moment2}
We now specialize the general framework to the $m=2$ case to prove \cref{thm: sparse 2-moment}. When $m=2$, we denote the two-copy operators as $ \ket{\mb{\hat{x}}} = \ket{\mb{x}}\otimes \ket{\mb{w}} = \ket{\mb{x,w}}$, and $ \bra{\mb{\hat{y}}} = \bra{\mb{y}}\otimes \bra{\mb{s}} = \bra{\mb{y,s}}$, where $(\mb{x,w,y,s})$ are all $n$-bit binary strings. Then  \cref{Eq:CmDef} becomes

\begin{equation}\label{Eq:CmDef2}
    \begin{split}
        C^{(2)} &= \{(\mb{x},\mb{w},\mb{y},\mb{s})|x_l+w_l\equiv y_l+s_l\pmod4, \forall l\in [n]\}.\\
    \end{split}
\end{equation}
Since $x_l+w_l$ and $y_l+s_l$ take values in $\{0,1,2\}$, the congruence modulo $4$ is equivalent to equality over the integers.
Thus, the surviving elements $\ket{\mb{x,w}}\bra{\mb{y,s}}$ are those with equal local Hamming sums.

Substituting \cref{Eq:CmDef2} into \cref{Eq:momentHS} and using Eq.~(C20) of Ref.~\cite{zhang2025robust},
the resulting one-qubit operators can be grouped into three mutually orthogonal components:
$\Delta_2 = \ket{0,0}\bra{0,0}+\ket{1,1}\bra{1,1}$,
$\mathbb{I}_4 - \Delta_2 = \ket{0,1}\bra{0,1}+\ket{1,0}\bra{1,0}$, and
$\mathbb{S}_2 - \Delta_2 = \ket{0,1}\bra{1,0}+\ket{1,0}\bra{0,1}$.
Hence, we obtain
\begin{equation}\label{Eq:moment2final}
\begin{split}
    D^{-1}\, \mathbb{E}_{U_S} \sum_{\mathbf{b}}
    \big(U_S^{\dagger} H^{\otimes n}\ket{\mathbf{b}}\bra{\mathbf{b}}H^{\otimes n} U_S\big)^{\otimes 2}
    &= D^{-2}
    \sum_{(\mathbf{x},\mathbf{w},\mathbf{y},\mathbf{s})\in C^{(2)}}
    \ket{\mathbf{x,w}}\bra{\mathbf{y,s}}\\
    &= D^{-2}\!\left[\Delta_2 + (\mathbb{I}_4-\Delta_2) + (\mathbb{S}_2-\Delta_2)\right]^{\otimes n}\\
    &= D^{-2}
    \sum_{\substack{I_i+I_j+I_k=[n]}}
    \Delta_2^{I_i}\!
    \otimes (\mathbb{I}_4-\Delta_2)^{I_j}
    \otimes (\mathbb{S}_2-\Delta_2)^{I_k},
\end{split}
\end{equation}
where $I_i\in\mc I_i, I_j\in\mc I_j, I_k\in\mc I_k $ are mutually disjoint index sets that partition the qubit register into regions associated with the three operator types above. Consequently, we write $i=|I_i|, j=|I_j|,k=|I_k|$.
Returning to \cref{Eq:momentHSCZ}, one represents the expectation over $U_C$ as

\begin{equation}\label{Eq:momentHSCZ2}
\begin{split}
\mb{M}_{\mc{E}_{\text{sparse}}^{(\gamma)}}^{(2)}
    &= D^{-2}\sum_{\substack{I_i+I_j+I_k=[n]}}\mathbb{E}_{U_C} U_C^{\dagger,\otimes 2}[
    \Delta_2^{I_i}\!
    \otimes (\mathbb{I}_4-\Delta_2)^{I_j}
    \otimes (\mathbb{S}_2-\Delta_2)^{I_k}]U_C^{\otimes 2}.\\
\end{split}
\end{equation}
We now analyze the effect of averaging over the random $CZ$ gates. It suffices to first consider a single gate $CZ_{i,j}$. The expectation of its application is

\begin{equation}
\begin{split}
 &\mathbb{E}_{CZ_{i,j}} CZ^{\dagger,\otimes 2}_{i,j}\ket{\mb{x,w}}\bra{\mb{y,s}}CZ_{i,j}^{\otimes 2}\\
    =&(1-p)\ket{\mb{x,w}}\bra{\mb{y,s}}+pCZ^{\dagger,\otimes 2}_{i,j}\ket{\mb{x,w}}\bra{\mb{y,s}}CZ_{i,j}^{\otimes 2}\\
    =&(1-2p T_{i,j}^{(2)})\ket{\mb{x,w}}\bra{\mb{y,s}}\\
    =&(1-\frac{2\gamma \ln n}{n} T_{i,j}^{(2)})\ket{\mb{x,w}}\bra{\mb{y,s}},
\end{split}
\end{equation}
where $T_{i,j}^{(2)}:=x_ix_j+w_iw_j-y_iy_j-s_is_j\pmod{2}$. A direct check of the three local sectors shows that the coefficient $T_{i,j}^{(2)} = 1$ if and only if $\ket{x_i,w_i}\bra{y_i,s_i}\in \id_4-\Delta_2$ and $\ket{x_j,w_j}\bra{y_j,s_j}\in \mbb{S}_2-\Delta_2$, or vice versa. In all other cases, the coefficient vanishes, i.e., $T_{i,j}^{(2)}=0$.

For an arbitrary operator $\ket{\mb{x,w}}\bra{\mb{y,s}}$ in the subspace $\Delta_2^{I_i}\otimes (\id_4-\Delta_2)^{I_j} \otimes (\mbb{S}_2-\Delta_2)^{I_k}$, the condition $T_{i,j}^{(2)} = 1$ (which picks up a non-trivial factor) occurs $jk$ times (once for each pair of indices $(i,j)$ where $i \in I_j$ and $j \in I_k$, or vice versa). Since all $CZ_{i,j}$ gates are independent, we multiply their contributions. Therefore, the coefficient $(1-\frac{2\gamma \ln n}{n} )$ is multiplied $jk$ times, yielding
\begin{equation}
    \begin{split}
\mb{M}_{\mc{E}_{\text{sparse}}^{(\gamma)}}^{(2)}
    &= D^{-2}\sum_{\substack{I_i,I_j,I_k}}\mathbb{E}_{U_C} U_C^{\dagger,\otimes 2}[
    \Delta_2^{I_i}\!
    \otimes (\mathbb{I}_4-\Delta_2)^{I_j}
    \otimes (\mathbb{S}_2-\Delta_2)^{I_k}]U_C^{\otimes 2}\\
    &=D^{-2} \sum_{I_i,I_j,I_k} [1-\frac{2\gamma \ln n}{ n}]^{jk}\Delta_2^{I_i}\otimes (\id_4-\Delta_2)^{I_j} \otimes (\mbb{S}_2-\Delta_2)^{I_k},\\
\end{split}
\label{Eq: channel_noise}
\end{equation}
where the operator $\ket{\mathbf{x,w}}\bra{\mathbf{y,s}}\in\Delta_2^{\otimes I_i}\otimes (\id_2^{\otimes 2}-\Delta_2)^{\otimes I_j} \otimes (\mbb{S}_2-\Delta_2)^{\otimes I_k}$. Thus, the proof of \cref{thm: sparse 2-moment} is completed.

\subsection{The Pauli form of the second moment function }\label{Ap:moment2Pauli}

In this subsection, we rewrite the second moment $\mathbf{M}_{\mc{E}_\text{sparse}^{(\gamma)}}^{(2)}$
in the two-copy Pauli basis. This Pauli-basis form makes the induced shadow channel diagonal in the Pauli basis,
which is the representation used later in the bias analysis and in the construction of the unbiased estimator.

More specifically, the coefficient $\sigma_P$ in the Pauli basis derived below will play two distinct roles.
First, in \cref{Ap:MainBias}, it quantifies how far the SPS channel deviates from the ideal one on each Pauli component.
Second, in \cref{th:PshadowUnbiased}, the same coefficient determines the exact channel inversion formula.
We therefore proceed in two steps:
we first expand the elementary two-copy blocks
$\Delta_2$, $\id_4-\Delta_2$, and $\mathbb{S}_2-\Delta_2$
in the Pauli basis,
and then reorganize the resulting sum into a closed expression for the Pauli coefficient $\sigma_P$.

 As shown in  \cref{thm: sparse 2-moment}, ${\mathbf{M}}_{\mc{E}_\text{sparse}^{(\gamma)}}^{(2)}$ can be expressed as a  summation of tensor products of $\Delta_2$, $\id_4 - \Delta_2$, and $\mbb{S}_2 - \Delta_2$. These components can be written as a linear combination of 2-copy Pauli operators as
\begin{equation}
    \Delta_2 = \frac{1}{2}(\id_2 \id_2 + {ZZ}), \quad \id_4 - \Delta_2 = \frac{1}{2}(\id_2 \id_2 - {ZZ}), \quad \mbb{S}_2 - \Delta_2 = \frac{1}{2}({XX} + {YY}).
\end{equation}
This implies that ${\mathbf{M}}_{\mc{E}_\text{sparse}^{(\gamma)}}^{(2)}$ is given by a linear combination of 2-copy Pauli operators
\begin{equation}\label{Eq: P2P}
{\mathbf{M}}_{\mc{E}_\text{sparse}^{(\gamma)}}^{(2)} =D^{-3} \sum_{P\in\mb{P}_n} \sigma_P {P} \otimes {P}.
\end{equation}
Here, we aim to compute the detailed value of $\sigma_P$. Given a Pauli operator in the form ${P} =  \id_2^{I_{n_1}} \otimes {Z}_2^{I_{n_2}}\otimes \{{X},{Y}\}^{ I_{n_3}}$, where $I_{n_1}\in \mc{I}_{n_1}, I_{n_2}\in \mc{I}_{n_2}, I_{n_3}\in \mc{I}_{n_3}$ and $I_{n_1}+I_{n_2}+I_{n_3}= [n]$, we can compute $\sigma_P$ by taking the trace against $P \otimes P$,
\begin{equation}\label{Eq:sigmaP1}
\begin{split}
         \sigma_{P} &=D\tr({\mathbf{M}}_{\mc{E}_\text{sparse}^{(\gamma)}}^{(2)} \text{ }{P}\otimes {P} )\\
        &= D^{-1}\sum_{I_{i},I_{j},I_{k}:I_{i}+I_{j}+I_{k}=[n]} [1-\frac{2\gamma\ln n}{n}]^{jk}\tr\left[\left(\Delta_2^{I_{i}}\otimes (\id_4-\Delta_2)^{I_{j}} \otimes (\mbb{S}_2-\Delta_2)^{I_{k}}\right)\left( {P}\otimes {P}\right)\right]\\
        &= D^{-1}\sum_{I_{k}=I_{n_3},I_{i}+I_{j}=I_{n_1}+I_{n_2}} [1-\frac{2\gamma\ln n}{n}]^{jk}\tr[\left(\Delta_2^{I_{i}}\otimes (\id_4-\Delta_2)^{I_{j}} \otimes (\mbb{S}_2-\Delta_2)^{I_{k}}\right)\left( {P}\otimes {P}\right)]\\
        &=\sum_{I_{k}=I_{n_3},I_{i}+I_{j}=I_{n_1}+I_{n_2}}[1-\frac{2\gamma\ln n}{n}]^{jn_3}(-1)^{|I_j\cap I_{n_2}|}.
\end{split}
\end{equation}
To reorganize the sum, let $s:=|I_j|$, namely the number of qubit positions in $I_{n_1}\cup I_{n_2}$ assigned to the factor $(\id_4-\Delta_2)$. Among these $s$ qubit positions, let $t:=|I_j\cap I_{n_2}|$ be the number of qubit positions coming from the $Z$-support of $P$. Then $s-t$ qubit positions are chosen from the identity support $I_{n_1}$,
\begin{equation}\label{Eq:sigmaP}
\begin{split}
         \sigma_{P}
        &=\sum_{I_{k}=I_{n_3},I_{i}+I_{j}=I_{n_1}+I_{n_2}}[1-\frac{2\gamma\ln n}{n}]^{jn_3}(-1)^{|I_j\cap I_{n_2}|}\\
        &=\sum_{s=0}^{n_1+n_2}\sum_{I_{n_1+n_2-s}, I_s: I_{n_1+n_2-s}+I_s = I_{n_1}+I_{n_2}}[1-\frac{2\gamma\ln n}{n}]^{sn_3}(-1)^{|I_{n_2}\cap I_s|}\\
        &=\sum_{s=0}^{n_1+n_2}\sum_{t:0\leq t\leq n_2, 0\leq s-t\leq n_1}(-1)^t C_{n_2}^t C_{n_1}^{s-t} [1-\frac{2\gamma\ln n}{n}]^{sn_3}\\
        &=[1+(1-\frac{2\gamma \ln n}{n})^{n_3}]^{n_1}[1-(1-\frac{2\gamma \ln n}{n})^{n_3}]^{n_2}.
\end{split}
\end{equation}
The last equality follows from the binomial theorem, equivalently from the generating-function identity of Krawtchouk polynomials.

\subsection{The third-order moment function }\label{Ap:moment3}

Having fully characterized the second moment function, we now turn to the third-order case ($m=3$), which is essential for bounding the estimation variance in \cref{Ap:MainVar}.

\begin{lemma}[Third moment functions of
the sparse ensemble]\label{thm: sparse 3-moment}
The third moment function of the sparse Clifford-IQP ensemble $\mc{E}_{\text{sparse}}^{(\gamma)}$  is
\begin{equation}
    \mb{M}_{\mc{E}_{\text{sparse}}^{(\gamma)}}^{(3)}=D^{-3}\sum_{\mb{x+w+z=y+s+t}}(1-\frac{2\gamma\ln n}{n})^{N_{\mb{x,w,z,y,s,t}}}\ket{\mb{x,w,z}}\bra{\mb{y,s,t}},
\label{Eq:moment3}
\end{equation}
where $N_{\mb{x,w,z,y,s,t}}\coloneq \sum_{i<j}\mb{1}\{x_ix_j+w_iw_j+z_iz_j\neq y_iy_j+s_is_j+t_it_j\pmod{2}\}$.
\end{lemma}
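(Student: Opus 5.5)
The plan mirrors the proof of \cref{thm: sparse 2-moment}, now with $m=3$. I start from the factorization in \cref{Eq:momentHSCZ} and first average over the single-qubit layer $U_S$ using \cref{Eq:momentHS}. With three copies, write the kets as $\ket{\mb{x,w,z}}$ and the bras as $\bra{\mb{y,s,t}}$. The surviving index set $C^{(3)}$ from \cref{Eq:CmDef} is then given by the per-qubit condition $x_l+w_l+z_l-y_l-s_l-t_l\equiv 0 \pmod 4$.

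Both sides of this congruence lie in $\{0,1,2,3\}$, so their difference lies in $\{-3,\dots,3\}$. The difference is therefore $\equiv 0\pmod 4$ only if it vanishes over the integers. The condition thus reduces to $x_l+w_l+z_l=y_l+s_l+t_l$ for every $l$, which is what the notation $\mb{x+w+z=y+s+t}$ (entrywise integer sum) means. This gives
\begin{equation}
D^{-1}\mathbb{E}_{U_S}\sum_{\mb b}\bigl(U_S^\dagger H^{\otimes n}\ket{\mb b}\bra{\mb b}H^{\otimes n}U_S\bigr)^{\otimes 3}=D^{-3}\sum_{\mb{x+w+z=y+s+t}}\ket{\mb{x,w,z}}\bra{\mb{y,s,t}}.
\end{equation}

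Next, I average over the $CZ$ layer. Every $CZ_{i,j}$ is diagonal in the computational basis, so conjugating $\ket{\mb{x,w,z}}\bra{\mb{y,s,t}}$ by $CZ_{i,j}^{\otimes 3}$ multiplies it by $(-1)^{T^{(3)}_{i,j}}$, where $T^{(3)}_{i,j}=x_ix_j+w_iw_j+z_iz_j-y_iy_j-s_is_j-t_it_j \pmod 2$. Complex conjugation does not matter here because the phases are real signs. Averaging over the Bernoulli$(p)$ choice of whether the gate is present gives the factor $(1-p)+p(-1)^{T}=1-2p\,\mathbf{1}\{T^{(3)}_{i,j}=1\}$. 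With $p=\gamma\ln n/n$, this factor is $1-\tfrac{2\gamma\ln n}{n}$ exactly when $x_ix_j+w_iw_j+z_iz_j\neq y_iy_j+s_is_j+t_it_j\pmod 2$, and $1$ otherwise. Every basis operator is an eigenvector of each such conjugation map, and the gates are independent, so the averages factorize over pairs $i<j$. Each operator then acquires the coefficient $(1-\tfrac{2\gamma\ln n}{n})^{N_{\mb{x,w,z,y,s,t}}}$, with $N$ the number of disagreeing pairs. Linearity over the sum completes the formula.

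The one step needing care is the reduction of the mod-$4$ condition to integer equality for $m=3$. It is the only place where $m=3$ differs from the $m=2$ case, and it holds because $|x_l+w_l+z_l-y_l-s_l-t_l|\le 3<4$. Unlike in \cref{Ap:moment2}, I will not try to regroup the result into named local blocks. The three-copy local sectors do not decompose as cleanly, so the statement is left in the computational-basis form with the pair-counting exponent $N$.
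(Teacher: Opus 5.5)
Your proposal is correct and follows the paper's proof essentially step for step. Like the paper, you reduce the mod-$4$ condition in $C^{(3)}$ to integer equality because each per-qubit sum lies in $\{0,\dots,3\}$, then average each independent $CZ_{i,j}$ to obtain the factor $1-2p\,T^{(3)}_{i,j}$ and multiply over pairs $i<j$. Your closing aside is slightly off: the paper does later regroup the three-copy local support into $\Delta_3$ plus nine two-element sectors $B(\pi_1,\pi_2)$ for the variance bound, though this regrouping is not needed for the lemma itself.
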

 \begin{proof}
The proof follows a similar structure to $m=2$. When $m=3$, we denote the three-copy operators as $\ket{\mb{\hat{x}}} = \ket{\mb{x}}\otimes\ket{\mb{w}}\otimes\ket{\mb{z}} = \ket{\mb{x,w,z}}$, and $\bra{\mb{\hat{y}}} = \bra{\mb{y}}\otimes\bra{\mb{s}}\otimes\bra{\mb{t}} = \bra{\mb{y,s,t}}$, where $(\mb{x,w,z,y,s,t})$ are all $n$-bit binary strings. Then \cref{Eq:CmDef} becomes
\begin{equation}\label{Eq:CmDef3}
\begin{split}
C^{(3)} &= \{(\mb{x,w,z,y,s,t})\,|\, x_l+w_l+z_l \equiv y_l+s_l+t_l \pmod{4}, \, \forall l\in [n]\}.
\end{split}
\end{equation}
Since both sides of the congruence take values in $\{0,1,2,3\}$, the congruence modulo $4$ is equivalent to equality over the integers at each qubit position. Hence, the inner expectation over $U_S$ (from \cref{Eq:momentHS}) yields
\begin{equation}\label{Eq:moment3final}
\begin{split}
D^{-1}\,\mathbb{E}_{U_S}\sum_{\mathbf{b}}
\big(U_S^{\dagger}H^{\otimes n}\ket{\mathbf{b}}\bra{\mathbf{b}}H^{\otimes n}U_S\big)^{\otimes 3}
&=D^{-3}\sum_{(\mathbf{x,w,z,y,s,t})\in C^{(3)}}\ket{\mathbf{x,w,z}}\bra{\mathbf{y,s,t}}\\
&=D^{-3}\sum_{\mb{x+w+z=y+s+t}}\ket{\mathbf{x,w,z}}\bra{\mathbf{y,s,t}}\\
\end{split}
\end{equation}
Returning to \cref{Eq:momentHSCZ}, one calculate the expectation over $U_C$ to arrive at
\begin{equation}\label{Eq:momentHSCZ3}
\begin{split}
\mb{M}_{\mc{E}_{\text{sparse}}^{(\gamma)}}^{(3)}
&=D^{-3}\sum_{\mb{x+w+z=y+s+t}}\mathbb{E}_{U_C}
U_C^{\dagger,\otimes 3}\ket{\mathbf{x,w,z}}\bra{\mathbf{y,s,t}}U_C^{\otimes 3}.
\end{split}
\end{equation}
Next, we analyze the effect of random $CZ$ gates. Consider a single random $CZ$ gate first.
\begin{equation}
\begin{split}
\mathbb{E}_{CZ_{i,j}}CZ_{i,j}^{\dagger,\otimes 3}\ket{\mb{x,w,z}}\bra{\mb{y,s,t}}CZ_{i,j}^{\otimes 3}
&=(1-p)\ket{\mb{x,w,z}}\bra{\mb{y,s,t}}+p\,CZ_{i,j}^{\dagger,\otimes 3}\ket{\mb{x,w,z}}\bra{\mb{y,s,t}}CZ_{i,j}^{\otimes 3}\\
&=(1-2p\,T_{i,j}^{(3)})\ket{\mb{x,w,z}}\bra{\mb{y,s,t}}\\
&=(1-\frac{2\gamma\ln n}{n}\,T_{i,j}^{(3)})\ket{\mb{x,w,z}}\bra{\mb{y,s,t}},
\end{split}
\end{equation}
where $T_{i,j}^{(3)}\in\{0,1\}$ is defined by $T_{i,j}^{(3)}:=x_ix_j+w_iw_j+z_iz_j-y_iy_j-s_is_j-t_it_j\pmod{2}$. For a general operator $\ket{\mb{x,w,z}}\bra{\mb{y,s,t}}$, the factor $(1-\frac{2\gamma\ln n}{n})$ is multiplied exactly $N_{\mb{x,w,z,y,s,t}}$ times (as defined in the lemma statement) over the expectation of all $C_n^2$ possible $CZ$ gates. This yields
\begin{equation}\label{Eq:channel_noise3}
\begin{split}
D^{-3}\sum_{\mb{x+w+z=y+s+t}}\mathbb{E}_{U_C}
U_C^{\dagger,\otimes 3}\ket{\mathbf{x,w,z}}\bra{\mathbf{y,s,t}}U_C^{\otimes 3}=D^{-3}\sum_{\mb{x+w+z=y+s+t}}(1-\frac{2\gamma\ln n}{n})^{N_{\mb{x,w,z,y,s,t}}}\ket{\mb{x,w,z}}\bra{\mb{y,s,t}}.
\end{split}
\end{equation}
Thus, the proof of \cref{thm: sparse 3-moment} is completed.
\end{proof}

\section{Proof of Proposition~\ref{prop:resource_limits}}\label{Ap:ProofProp1}

For completeness, we restate the resource estimate with its explicit success probabilities before giving the proof.

\begin{prop}[Resource estimation of quantum circuit realization]\label{prop:resource_limits_formal-ap}
For the sparse unitary ensemble $\mc{E}_{\text{sparse}}^{(\gamma)}$ defined around \cref{Eq:diagonalcircuits}, the quantum circuit realization satisfies the following resource bounds.
\begin{itemize}
    \item[(i)] With auxiliary systems: The protocol runs in constant depth using the number of auxiliary qubits bounded by $ 3\gamma n \ln n$, with probability at least $1 - \exp\left[-\frac{\gamma (n-1)\ln n}{6}\right]$;
    \item[(ii)] Without auxiliary systems: The protocol runs in logarithmic depth, bounded  by $2\gamma \ln n$, with probability at least $1 - n^{1-\gamma/4}$.
\end{itemize}
\end{prop}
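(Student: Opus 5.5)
Both parts reduce to concentration statements about the random interaction graph $G\sim G(n,p)$ with $p=\gamma\ln n/n$, combined with the explicit synthesis described in \cref{subsec:circuit}.

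For part (i), the depth claim is deterministic given the construction. Fan-out and fan-in take about $5$ and $3$ layers with mid-circuit measurement and feedforward, using the scheme of Ref.~\cite{zi2025constant} with $c=2$. The $S$ layer and the single parallel $CZ$ layer, plus the final $H^{\otimes n}$, add $O(1)$ more layers. This holds for any realization of $A$, since each rail slot carries at most one $CZ$. It therefore remains to bound the auxiliary count. By \cref{eq:ancilla}, $n_{\mathrm{anc}}<3|E|$, so it suffices to show $|E|\le \gamma n\ln n$ with the stated probability. Here $|E|\sim\mathrm{Bin}\bigl(\binom n2,p\bigr)$ has mean $\mu=\frac{(n-1)\gamma\ln n}{2}$, and $\gamma n\ln n\ge 2\mu$. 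The multiplicative Chernoff bound $\Pr[|E|\ge(1+\delta)\mu]\le \exp(-\delta^2\mu/(2+\delta))$ with $\delta=1$ gives $\exp(-\mu/3)=\exp[-\gamma(n-1)\ln n/6]$, which is exactly the claimed probability.

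For part (ii), I would implement $U_C$ directly on the data qubits. Since all $CZ$ gates commute, any proper edge coloring of $G$ schedules them into parallel layers, one layer per color. By Vizing's theorem the number of layers is at most $\Delta(G)+1$, where $\Delta(G)$ is the maximum degree. The $S$ and $H$ layers are single-qubit and do not contribute to the two-qubit depth. So we need $\Pr[\Delta(G)+1>2\gamma\ln n]\le n^{1-\gamma/4}$; requiring $\Delta(G)\le 2\gamma\ln n-1$ costs only a constant, which is absorbed. Each degree is $\mathrm{Bin}(n-1,p)$ with mean $\mu'=\gamma\ln n\,(n-1)/n\le\gamma\ln n$. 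The Chernoff bound with $\delta\approx 1$ gives a per-vertex tail of about $\exp(-\mu'/3)\approx n^{-\gamma/3}$. A union bound over the $n$ vertices then gives $n^{1-\gamma/3}\le n^{1-\gamma/4}$, which leaves slack.

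I expect the main obstacle to be the bookkeeping in (ii). This includes handling the $+1$ from Vizing and the factor $(n-1)/n$ in the mean, which I would absorb by using the looser exponent $\gamma/4$ or by taking $n$ large. If needed, I would instead use the bound $\Pr[X\ge a]\le (e\mu/a)^a$, which gives $(e/2)^{-2\gamma\ln n}=n^{-2\gamma\ln(2/e)}$. Its exponent $2\ln(2/e)$ is negative, so this version fails, and I would stick with the Chernoff form $\exp(-\delta^2\mu/(2+\delta))$ evaluated carefully. One further caveat: Vizing guarantees the existence of a $(\Delta+1)$-coloring, and such a coloring can be found in polynomial time, so the classical compilation is efficient.
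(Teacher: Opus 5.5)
Your proposal is correct and follows the paper's proof. Part (i) is the same multiplicative Chernoff bound $\Pr[|E|\ge 2\mu_E]\le e^{-\mu_E/3}$, combined with $n_{\mathrm{anc}}<3|E|$ and $2\mu_E\le\gamma n\ln n$, and part (ii) uses the same edge-colouring bound of the $CZ$ depth by $\Delta(G)+1$; the only difference is that the paper cites Lemma~9 of Ref.~\cite{bremner2017achieving} for $\Pr[\Delta(G)\ge 2\gamma\ln n]\le n^{1-\gamma/4}$, whereas you derive the maximum-degree tail yourself via a per-vertex Chernoff bound and a union bound (giving roughly $n^{1-\gamma/3}$ up to a constant factor, hence the stated bound for moderate $n$) and handle the $+1$ offset more carefully than the paper.
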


\begin{proof}
\textit{Case (i):} For $c=2$, \cref{eq:ancilla} gives $n_{\mathrm{tot}}=3|E|$,
where $|E| \sim \mathrm{Bin}\!\left(C_n^2,p\right)$ with
$p=\gamma \ln n/n$. Let
\begin{equation}
\mu_E:=\mathbb{E}[|E|]=pC_n^2
= \frac{\gamma (n-1)\ln n}{2}
\le \frac{1}{2}\gamma n\ln n.
\end{equation}
Applying the multiplicative Chernoff bound to $|E|$, we obtain
\begin{equation}
\Pr\!\left(|E| \ge 2\mu_E\right)\le \exp\!\left(-\frac{\mu_E}{3}\right).
\end{equation}
Therefore,
\begin{equation}
\Pr\!\left(n_{\mathrm{tot}} \ge 6\mu_E\right)
=
\Pr\!\left(3|E| \ge 6\mu_E\right)
=
\Pr\!\left(|E| \ge 2\mu_E\right)
\le \exp\!\left(-\frac{\mu_E}{3}\right).
\end{equation}
Using $6\mu_E = 3\gamma (n-1)\ln n \le 3\gamma n\ln n$, the event $\{n_{\mathrm{tot}}\ge 3\gamma n\ln n\}$ is contained in $\{n_{\mathrm{tot}}\ge 6\mu_E\}$. Hence, we conclude that
\begin{equation}
\Pr\!\left(n_{\mathrm{tot}} \ge 3\gamma n\ln n\right)
\le
\exp\!\left(-\frac{\gamma (n-1)\ln n}{6}\right).
\end{equation}
Since $n_{\mathrm{anc}}<n_{\mathrm{tot}}$, the same bound also applies to the
auxiliary system count up to the same threshold, which proves the claim.

\textit{Case (ii):} Without
auxiliary qubits, the circuit depth is bounded by $\Delta(G) + 1$, where $\Delta(G)$ is the maximum degree of a graph. Standard results for sparse random graphs (e.g., Lemma 9 in Ref.~\cite{bremner2017achieving}) establish that for $p = \gamma \ln n/n$,
we have that
\begin{equation}
    \Pr[\Delta(G) \ge 2\gamma \ln n] \le n^{1-\gamma/4}.
\end{equation}
Thus, the depth is bounded by $2\gamma \ln n$ with high probability.
\end{proof}

\section{Details on the second-look principle}\label{Ap:DetailHadamard}

This section provides the technical details behind the second-look principle introduced in the main text.
The section is organized as follows.
In \cref{Ap:ProofOfHt1}, we prove the lower bound in \cref{Eq:exHadatrick_1} for the original task.
In \cref{Ap:NotRelDesign}, we use the same example to show that the sparse Clifford-IQP ensemble is not a relative-error approximate state $2$-design with arbitrarily small error.
Finally, in \cref{Ap:weight}, we use the stabilizer tableau to control how Pauli weights are distributed inside the stabilizer group, up to a possible global Hadamard conjugation. This is the precise point where the second-look principle enters the proof.
Taken together, these three steps identify the obstruction faced by the default SPS track and motivate the second-look principle used in the SPS protocol.

\subsection{Proof of \cref{Eq:exHadatrick_1}}
\label{Ap:ProofOfHt1}

In order to address the proof, we start by considering the example
\begin{equation}\label{eq:HadamardPsi}
\ket{\psi}=\ket{+}^{\otimes 2}\otimes \ket{0}^{\otimes (n-2)},\qquad
\rho = O = \ket{\psi}\!\bra{\psi}.
\end{equation}
One derives from~\cref{Eq:BiasDef} that the estimation bias of shadow estimation with $U\sim\mc E_{\text{sparse}}^{(\gamma)}$ is given by
\begin{equation}\label{Eq:Ht1BiasDecompRecall}
\mathrm{Bias}(\widehat{o_f})
=\sum_{I_i+I_j+I_k=[n],jk>0}\Bigl(1-\frac{2\gamma\ln n}{n}\Bigr)^{jk}
\tr\!\Bigl[
\rho\otimes O\,
\Delta_2^{I_i}\otimes(\id_4-\Delta_2)^{I_j}\otimes(\mathbb{S}_2-\Delta_2)^{I_k}
\Bigr],
\end{equation}
where the sum is over mutually disjoint sets $I_i\in \mc I_i,I_j\in \mc I_j,I_k\in \mc I_k$.

For our choice of $\rho=O=\ket{\psi}\!\bra{\psi}$ in~\cref{eq:HadamardPsi}, every summand in \eqref{Eq:Ht1BiasDecompRecall} is non-negative. Hence, we can lower-bound the bias by retaining any single nonzero term.
Here, we choose a concrete partition
\begin{equation}
I_j=\{0\},\qquad I_k=\{1\},\qquad I_i=[n]\setminus\{0,1\},
\end{equation}
or we swap $I_j$ and $I_k$. In both cases,  $j=k=1$ and thus $(1-\frac{2\gamma\ln n}{n})^{jk}=(1-\frac{2\gamma\ln n}{n})$.
This yields
\begin{equation}\label{Eq:Ht1PickTerm}
\begin{split}
    \mathrm{Bias}(\widehat{o_f})&\ge
\Bigl(1-\frac{2\gamma\ln n}{n}\Bigr)
\tr\!\Bigl[
(\ket{\psi}\!\bra{\psi})^{\otimes 2}\,
[(\id_4-\Delta_2)_{1}\otimes(\mathbb{S}_2-\Delta_2)_{2}\otimes\Delta_2^{\otimes(n-2)}+(\mbb{S}_2-\Delta_2)_{1}\otimes(\id_4-\Delta_2)_{2}\otimes\Delta_2^{\otimes(n-2)}]
\Bigr]\\
&=\Bigl(1-\frac{2\gamma\ln n}{n}\Bigr)\times 2\bra{+}^{\otimes 2}\mathbb{S}_2-\Delta_2\ket{+}^{\otimes 2}\bra{+}^{\otimes 2}\mathbb{\id}_4-\Delta_2\ket{+}^{\otimes 2}\bra{0}\Delta_2\ket{0}^{\otimes(n-2)}=\frac{1}{2}\Bigl(1-\frac{2\gamma\ln n}{n}\Bigr),
\end{split}
\end{equation}
which is exactly \cref{Eq:exHadatrick_1}.

\subsection{Proof of~\cref{lem:NotRelDesign}}
\label{Ap:NotRelDesign}

In this subsection, we show that the sparse Clifford-IQP ensemble $\mc E_{\text{sparse}}^{(\gamma)}$ is not a relative-error approximate state $2$-design with an arbitrarily small error.
To do so, we first show that the  definition of approximate state 2-design in~\cref{eq:RelativeDef} is equivalent to that for every positive semidefinite operator $A\ge 0$,
\begin{equation}\label{eq:RelativePSDTest}
\bigl|\tr[(\mathbf M_{\mc E_{\mathrm{sparse}}^{(\gamma)}}^{(2)}
-\mathbf M_{\mathrm{Haar}}^{(2)})A]\bigr|
\le
\epsilon\,\tr(\mathbf M_{\mathrm{Haar}}^{(2)}A).
\end{equation}
It therefore suffices to find one positive operator $A$ for which the relative deviation in \cref{eq:RelativePSDTest} is bounded below by a constant.
We choose the same example as in \cref{eq:HadamardPsi}, and set $A:=\rho\otimes O=(\ket{\psi}\!\bra{\psi})^{\otimes 2}\ge 0$. From \cref{thm: sparse 2-moment}, we obtain that for this choice of $\rho$ and $O$,
\begin{equation}
\begin{split}
    \tr\!\bigl(\mathbf M_{\mc E_{\mathrm{sparse}}^{(\gamma)}}^{(2)}A\bigr)& = D^{-2} \sum_{I_i+I_j+I_k=[n]}\Bigl(1-\frac{2\gamma\ln n}{n}\Bigr)^{jk}
\tr\!\Bigl[
\rho\otimes O\,
\Delta_2^{I_i}\otimes(\id_4-\Delta_2)^{I_j}\otimes(\mathbb{S}_2-\Delta_2)^{I_k}
\Bigr]\\
&= D^{-2}\bigl[\mathrm{Bias}(\widehat{o_f})+\tr\bigr(
A \text{ }(\id_4^{\otimes n}+\mbb{S}_2^{\otimes n}-\Delta_2^{\otimes n})\bigr)\bigl]\\
&\geq D^{-2}\bigl[\frac12\Bigl(1-\frac{2\gamma\ln n}{n}\Bigr) +1+1-\frac14\bigl] = D^{-2}(\frac94-\frac{\gamma\ln n}{n}),
\end{split}
\end{equation}
where the inequality is obtained using the result of \cref{Eq:Ht1PickTerm}. On the other hand, for the Haar ensemble, one has
\begin{equation}
\tr\!\bigl(\mathbf M_{\mathrm{Haar}}^{(2)}A\bigr)
=
\frac{\tr(A)+\tr(\mathbb S_2^{\otimes n}A)}{D(D+1)}
=
\frac{1+\tr(\rho O)}{D(D+1)}
=
\frac{2}{D(D+1)}.
\end{equation}
Therefore, one arrives at
\begin{equation}\label{eq:RelErrLowerBound}
\epsilon
\ge
\frac{
\tr\!\bigl[(\mathbf M_{\mc E_{\mathrm{sparse}}^{(\gamma)}}^{(2)}
-\mathbf M_{\mathrm{Haar}}^{(2)})A\bigr]
}{
\tr\!\bigl(\mathbf M_{\mathrm{Haar}}^{(2)}A\bigr)
}
\ge
\frac{\frac94-\frac{\gamma\ln n}{n}}{\frac{2D}{D+1}}-1.
\end{equation}
As $n\to\infty$, this lower bound tends to
\begin{equation}
\epsilon \ge \frac18-\frac{\gamma\ln n}{n}=\frac 18-o(1),
\end{equation}
for the sparse regime $p=\gamma\ln n/n$ with $\gamma$ being a constant. It means that the relative error is bounded away from zero by a constant and cannot be made
arbitrarily small within this sparse Clifford-IQP family. We conclude that
$\mc E_{\mathrm{sparse}}^{(\gamma)}$ is not an $\epsilon$-approximate state $2$-design in the
relative-error sense.

\subsection{Stabilizer Pauli-weight counting under the second-look principle}\label{Ap:weight}

In this subsection, we provide the following lemma, which is a key bound on the distribution of Pauli weights $(\mathrm{wt}_X(P),\mathrm{wt}_Z(P))$ within this stabilizer group.

\begin{lemma}[Pauli-weight bound within a stabilizer group]\label{lem:PauliDist}
Let $O$ be a stabilizer-state observable and
$O' = H^{\otimes n} O H^{\otimes n}$ its Hadamard-conjugated counterpart.
At least one of these two observables admits a parameter $\zeta\ge n/2$
such that the number of stabilizers $P$ in the corresponding stabilizer group satisfies the following bounds.
\begin{itemize}
    \item The number of stabilizers $P$ with $\mathrm{wt}_X(P) = k>0$ is upper bounded by $[F(\zeta,k)-1]\,2^{\,n-\zeta}$.
    \item The number of stabilizers with $\mathrm{wt}_X(P) = k>0$ and $\mathrm{wt}_Z(P) = k'$ is bounded by $2^k[F(\zeta,k)-1]\,F(n-\zeta,k')$.
\end{itemize}
\end{lemma}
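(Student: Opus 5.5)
The plan is to read all weight information off the tableau. I take $F(a,k):=\sum_{i=0}^{k}\binom{a}{i}$, which is how I interpret the bounds in the statement; the argument below is built to produce exactly that form. The proof then reduces to counting low-weight vectors in a subspace (or a coset of one) of $\mathbb F_2^n$ by a pivot argument.

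\emph{Setup.} Write $O=V\ket{\mathbf 0}\!\bra{\mathbf 0}V^\dagger$. Its stabilizer group is generated by $VZ_iV^\dagger$, $i\in[n]$. In the tableau convention of the statement, the $X$-part and $Z$-part of these generators are the rows of $C$ and $D$, respectively. Every stabilizer $P$ therefore corresponds to a unique $\mathbf u\in\mathbb F_2^n$: its $X$-part is $\mathbf u^TC$ and its $Z$-part is $\mathbf u^TD$, up to phase. Conjugation by $H^{\otimes n}$ swaps the $X$- and $Z$-parts, hence swaps $C$ and $D$. The stacked matrix $[C\,|\,D]$ has full row rank $n$, because the generators are independent. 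This gives $\rank(C)+\rank(D)\ge n$. So for one of $O$, $O'$ the relevant block has rank $\zeta\ge n/2$, and I work with that observable; call its $X$-block $C$, with $\rank(C)=\zeta$.

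\emph{First bullet.} The map $\mathbf u\mapsto\mathbf u^TC$ has kernel of dimension $n-\zeta$. Hence each $X$-pattern $\mathbf v$ in the row space $R(C)$ is realized by exactly $2^{n-\zeta}$ stabilizers, and $\mathrm{wt}_X(P)=|\mathbf v|$. It remains to count nonzero $\mathbf v\in R(C)$ with $|\mathbf v|=k$. I bring $C$ to reduced row echelon form with pivot set $\Pi$, where $|\Pi|=\zeta$. Every $\mathbf v\in R(C)$ is uniquely determined by its restriction to $\Pi$. A vector of weight $k$ has at most $k$ nonzero pivot entries, and its pivot restriction is nonzero whenever $\mathbf v$ is. So the number of such vectors is at most $\sum_{i=1}^{k}\binom{\zeta}{i}=F(\zeta,k)-1$. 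Multiplying by the fiber size $2^{n-\zeta}$ gives the first bound.

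\emph{Second bullet.} Fix an $X$-pattern $\mathbf v$ with $|\mathbf v|=k$; by the above there are at most $F(\zeta,k)-1$ choices. The stabilizers with this $X$-pattern have $Z$-parts forming a coset $\mathbf z_0+W$, where $W=\{\mathbf u^TD:\mathbf u^TC=0\}$ and $\dim W\le n-\zeta$. The $Z$-weight $\mathrm{wt}_Z(P)$ counts only positions outside $\mathrm{supp}(\mathbf v)$, since positions in the support carry $X$ or $Y$.

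I therefore project the coset onto the $n-k$ complementary coordinates. Each fiber of this projection has size at most $2^k$, which is where the factor $2^k$ comes from. The image is a coset of a subspace of dimension $d\le n-\zeta$. The same pivot argument applies to a coset: each element is determined by its values on the $d$ pivot coordinates, and weight $k'$ forces at most $k'$ of these to be $1$. This bounds the number of weight-$k'$ elements by $F(d,k')\le F(n-\zeta,k')$. Multiplying the three factors gives $2^k[F(\zeta,k)-1]F(n-\zeta,k')$.

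\emph{Main obstacle.} I expect the delicate step to be the coset version of the pivot count, together with fixing the tableau convention correctly. Concretely, one must verify that the rows of $C$, and not $B$, give the $X$-parts of the stabilizer generators, so that $H^{\otimes n}$ exchanges $\rank C$ and $\rank D$ as claimed. I would check this first on the example $\ket{+}^{\otimes2}\ket{0}^{\otimes(n-2)}$.
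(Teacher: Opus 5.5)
Your proof is correct and follows essentially the same route as the paper. A reduced row-echelon pivot argument bounds the number of nonzero $X$-patterns of weight $k$ by $F(\zeta,k)-1$, each realized $2^{n-\zeta}$ times; for a fixed $X$-pattern, the factor $2^k$ comes from the freedom on the $X$-support and a pivot count on the $Z$-parts gives $F(n-\zeta,k')$. The paper phrases this second step through ``covered'' and ``uncovered'' pivots of the lower block $D_2$ in the RREF of $[C,D]$, whose row space is your $W$. Your projection of the coset $\mathbf z_0+W$ onto the complement of the $X$-support is a tidier formulation of the same count.
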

Here, $F(a,b) := \sum_{i=0}^{b} C_a^i$ (with $C_a^b = 0$ for $b>a$).

\begin{proof}

Let  $O = V\ket{\mathbf{0}}\bra{\mathbf{0}} V^{\dagger}$ be a stabilizer state with $V$ being a Clifford unitary.  The tableau representation~\cite{aaronson2004improved} of $V$ is denoted as $ T(V)=
\begin{bmatrix}
A(V) & B(V)\\
C(V) & D(V)
\end{bmatrix}.$ For convenience, we set $\zeta\coloneq \text{rank}(C)$. In the beginning, we justify that $\zeta \ge n/2$ for either $O$ or $H^{\otimes n}OH^{\otimes n}$.  To this end, we consider the stabilizer-state observable $H^{\otimes n} O H^{\otimes n}$.  The corresponding tableau representation satisfies $T(H^{\otimes n}V) =
\begin{bmatrix}
B(V) & A(V) \\
D(V) & C(V)
\end{bmatrix}$. Applying the same reasoning as above gives the same bound with $\zeta = \operatorname{rank}[D(V)]$.
Since $\operatorname{rank}[C(V)] + \operatorname{rank}[D(V)] \ge \operatorname{rank}[C(V),\; D(V)] = n$, at least one of these two ranks must be no smaller than $n/2$.  Thus, without loss of generality, we henceforth take the selected representation $T(V)$ to satisfy
$\zeta\ge n/2$.

In the tableau representation $T(V)$, $n$ rows of the binary matrix $[C(V), D(V)]$ generate the stabilizer group of $O$. This particular generating set is not unique: applying any invertible row operation to $[C(V), D(V)]$ simply gives a different set of generators for the same stabilizer group, and hence leaves $O$ unchanged. We therefore reduce the stabilizer generator matrix $[C(V), D(V)]$ to \emph{reduced row-echelon form} (RREF) over $\mbb{F}_2$ as
\begin{equation}
    [C(V),\; D(V)] =
\begin{bmatrix}
C_1(V)_{\zeta\times n} & D_1(V)_{\zeta\times n} \\
\varnothing & D_2(V)_{(n-\zeta)\times n}
\end{bmatrix},
\end{equation}
where $\rank C_1(V)=\rank C(V)=\zeta$. Moreover, $C_1(V)$ and $D_2(V)$ are also in the RREF. Here, a pivot refers to the leading $1$ of a nonzero row in the RREF over $\mathbb F_2$. Thus $C_1(V)$ has $\zeta$ pivots, while $D_2(V)$ has $n-\zeta$ pivots. We also obtain that every pivot column of $D_2(V)$ has zero entries in the upper block $D_1(V)$.

We now use this RREF to bound the Pauli-weight distribution inside the stabilizer group of $O$.

First, consider the stabilizers $P_1$ generated by choosing $\zeta_1$ of the first $\zeta$ rows, together with an \emph{arbitrary} subset of the remaining $n-\zeta$ rows in $D_2(V)$.
There are $C_{\zeta}^{\zeta_1}\,2^{\,n-\zeta}$ such Paulis. Since these generators include $\zeta_1$ pivots from $C_1(V)$, we have
\begin{equation}
\mathrm{wt}_X(P_1)\ge \zeta_1.
\end{equation}
The stabilizers $P$ with $\mathrm{wt}_X(P)=k$ can arise only from choosing at most $k$ of the first $\zeta$ rows. Summing over all possible $\zeta_1\le k$ gives
\begin{equation}\label{eq:Nx}
N_X(k)\coloneq\#\{P:\mathrm{wt}_X(P)=k\}
\le \sum_{\zeta_1=1}^{k} 2^{\,n-\zeta}C_{\zeta}^{\zeta_1}
=[F(\zeta,k)-1]\,2^{\,n-\zeta},
\end{equation}
where $\#\{\cdot\}$ denotes the cardinality of the set.

\begin{figure}[t]
\begin{equation}
[C,\;D]=
\begin{array}{c|cccc|cccc}
 & 0&1&2&3 & 0&1&2&3\\
\hline
g_0 & 1&0&1&0 & 0&0&0&0\\
g_1 & 0&1&0&0 & 0&0&0&0\\
\hline
h_0 & 0&0&0&0 & 1&0&1&0\\
h_1 & 0&0&0&0 & 0&0&0&1
\end{array}
\end{equation}
\caption{
A four-qubit example illustrating the notation used in the stabilizer-counting argument.
Here $\zeta=\rank(C)=2$. The upper block $C_1$ has pivots in columns $0$ and $1$, while the lower block $D_2$ has pivots in columns $0$ and $3$ of the $D$ part.
Consider the choice of one row from the first $\zeta$ rows, namely $\zeta_1=1$ and $\widehat{\zeta}_1=(1,0)$, corresponding to selecting $g_0$.
The resulting $X$-support is determined by $(1,0,1,0)$, so
$t(\widehat{\zeta}_1)=2$.
Among the pivots of $D_2$, the pivot of $h_0$ lies on qubit $0$, which is already occupied by the $X$-support. It is therefore covered, contributing to $t'(\widehat{\zeta}_1)=1$.
After excluding this covered lower-row pivot, only the lower row $h_1$ remains available as an uncovered $Z$-pivot row. Thus, if one further chooses $\zeta_2=1$ uncovered lower row, the only possible choice is $h_1$, while the covered row $h_0$ may still be chosen or not chosen without increasing the guaranteed lower bound on the $Z$-weight.
This construction ensures that, in the subcase indexed by $(\widehat{\zeta}_1,\zeta_2)$, the selected Pauli strings satisfy $\mathrm{wt}_X(P)\ge \zeta_1$ and $\mathrm{wt}_Z(P)\ge \zeta_2$.
}
\label{fig:tableau_counting_example}
\end{figure}

Second, we bound the number of stabilizers with prescribed $X$- and $Z$-weights, namely those satisfying $\mathrm{wt}_X(P)=k$ and $\mathrm{wt}_Z(P)=k'$. We also illustrate the notation used in this paragraph with a simple example in~\cref{fig:tableau_counting_example}. Fix a choice of $\zeta_1$ rows from the first $\zeta$
rows, and denote this choice by a binary string
$\widehat{\zeta}_1\in\{0,1\}^{\zeta}$ with
$\mathrm{wt}(\widehat{\zeta}_1)=\zeta_1$.  Once this choice is fixed, the
$X$-support of the partially generated stabilizer is fixed. We denote its
X-weight by $t(\widehat{\zeta}_1)$.  Since the selected $\zeta_1$ rows contain
independent $X$-pivots, we have
\begin{equation}
t(\widehat{\zeta}_1)\ge \zeta_1 .
\end{equation}
We now choose generators from the remaining $n-\zeta$ rows.  Some of the
$Z$-pivots in the lower block $D_2(V)$ may lie on the qubit positions that are already
occupied by the $X$-support generated from the first $\zeta$ rows.  In such
situations, multiplying by a lower-row generator changes a $Z$ contribution into a
$Y$ contribution, and hence these $Z$-pivots are not counted in the final
$Z$-weight.  Let $t'(\widehat{\zeta}_1)\le t(\widehat{\zeta}_1)$ be the number of such covered $Z$-pivots among the lower rows.  After removing
these covered pivot positions, we may choose $\zeta_2$ rows from the remaining $n-\zeta-t'(\widehat{\zeta}_1)$ lower rows.  In this subcase
$(\widehat{\zeta}_1,\zeta_2)$, every generated stabilizer satisfies
\begin{equation}
    \mathrm{wt}_X(P)=t(\widehat{\zeta}_1)\ge \zeta_1,\,\,
\mathrm{wt}_Z(P)\ge \zeta_2 .
\end{equation}
Moreover, this subcase contains $2^{t'(\widehat{\zeta}_1)}
C_{n-\zeta-t'(\widehat{\zeta}_1)}^{\zeta_2}$
stabilizers, where the factor $2^{t'(\widehat{\zeta}_1)}$ accounts for the
arbitrary choices on the covered lower-row pivots.
The above subcases exhaust all stabilizers of $O$.  Therefore, the number of
stabilizers with $\mathrm{wt}_X(P)=k$ and $\mathrm{wt}_Z(P)=k'$ is bounded by
the total number of subcases $(t(\widehat{\zeta}_1),\zeta_2)$ with
$\zeta_1\le k$, $\zeta_2\le k'$, and $t(\widehat{\zeta}_1)=k$.
\begin{equation}\label{eq:Nxz}
\begin{aligned}
&N_{X,Z}(k,k')\coloneq\#\{P:\mathrm{wt}_X(P)=k,\ \mathrm{wt}_Z(P)=k'\}  \\
&\le
\sum_{\zeta_1=1}^{k}
\sum_{\substack{\widehat{\zeta}_1:\,
\mathrm{wt}(\widehat{\zeta}_1)=\zeta_1,t(\widehat{\zeta}_1)=k}}
\sum_{\zeta_2=0}^{\min\{k',\,n-\zeta-t'(\widehat{\zeta}_1)\}}
2^{t'(\widehat{\zeta}_1)}
C_{n-\zeta-t'(\widehat{\zeta}_1)}^{\zeta_2} \\
&\le
\sum_{\zeta_1=1}^{k}
\sum_{\substack{\widehat{\zeta}_1:\,
\mathrm{wt}(\widehat{\zeta}_1)=\zeta_1,t(\widehat{\zeta}_1)=k}}
\sum_{\zeta_2=0}^{k'}
2^{t'(\widehat{\zeta}_1)}
C_{n-\zeta}^{\zeta_2} \\
&\le
\sum_{\zeta_1=1}^{k}
\sum_{\substack{\widehat{\zeta}_1:\,
\mathrm{wt}(\widehat{\zeta}_1)=\zeta_1,t(\widehat{\zeta}_1)=k}}
\sum_{\zeta_2=0}^{k'}
2^{t(\widehat{\zeta}_1)}
C_{n-\zeta}^{\zeta_2} =
\sum_{\zeta_1=1}^{k} 2^k
\sum_{\substack{\widehat{\zeta}_1:\,
\mathrm{wt}(\widehat{\zeta}_1)=\zeta_1,t(\widehat{\zeta}_1)=k}}
\sum_{\zeta_2=0}^{k'}
C_{n-\zeta}^{\zeta_2} \\
&\leq
\sum_{\zeta_1=1}^{k}
2^k \sum_{\substack{\widehat{\zeta}_1:\,
\mathrm{wt}(\widehat{\zeta}_1)=\zeta_1}} F(n-\zeta,k') \\
&=
2^k\,[F(\zeta,k)-1]\,F(n-\zeta,k') .
\end{aligned}
\end{equation}
This completes the proof.
\end{proof}

During the proof of \cref{lem:PauliDist}, we observe that it is efficient to check which of $O$ or $H^{\otimes n} O H^{\otimes n}$ satisfies the desired Pauli-distribution property (simply by comparing the two ranks).  In this way, we can measure both $\rho$ and $H^{\otimes n}\rho H^{\otimes n}$ first, then ask questions about the observable $O$, which does not violate the "measure first, ask questions later" philosophy of shadow estimation.

\section{Proof of \cref{th:PshadowMain}}\label{Ap:ProofOfMainThm}

This section aims at proving \cref{th:PshadowMain}, which provides the central bounds on the performance of the SPS protocol.
The proof is divided into two parts. First, we bound the estimation bias in \cref{Ap:MainBias}. Second, we bound the estimation variance in \cref{Ap:MainVar}.

\subsection{Estimation bias}\label{Ap:MainBias}

In this subsection, we prove the bias bound in \cref{th:PshadowMain}. The proof has two conceptual steps.
First, in \cref{lem:BiasPauli}, we express the estimation bias through the Pauli coefficients $\sigma_P$ defined in~\cref{Eq:sigmaP}. These coefficients characterize the induced measurement channel in the Pauli basis, so the channel error is completely encoded in the deviations of $\sigma_P$ from its ideal value.
Second, we combine the channel coefficients and the stabilizer weight bound in~\cref{lem:PauliDist}, and estimate the resulting sum by dividing the Pauli strings into several weight regimes.
We begin by decomposing the estimation bias in the Pauli basis.

\begin{lemma}[Estimation bias in terms of Pauli coefficients]\label{lem:BiasPauli}
For a given state $\rho$ and a stabilizer-state observable $O$, the estimation bias of the SPS protocol is
    \begin{equation}\label{Eq:BiasPauli}
   \text{Bias}(\widehat{o_f}) = D^{-1}|\sum_{P\in \mb{P}_n\setminus\mc{Z}_n}(\sigma_P-1)\tr(\rho P)\tr(OP)|\leq D^{-1}\sum_{P\in \mathrm{STAB}(O)\setminus\mc{Z}_n}|\sigma_P-1|.
\end{equation}
\end{lemma}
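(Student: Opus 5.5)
We compute the expectation of the single-shot estimator $\widehat{\rho_f}=D\Phi_{U,\mb b}-\id$ using the Pauli-diagonal form of the measurement channel, and then bound the resulting sum using stabilizer structure.

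\textbf{Channel in the Pauli basis.} Combining \cref{Eq:idealchannel} with \cref{lem:PauliDiagonalChannel} gives
\begin{equation}
\mc M(\rho)=D\tr_1\bigl[(\rho\otimes\id)\,\mathbf M^{(2)}\bigr]=D^{-2}\sum_{P}\sigma_P\tr(\rho P)\,P .
\end{equation}
Hence
\begin{equation}
\mathbb E\,\widehat{\rho_f}=D^{-1}\sum_P\sigma_P\tr(\rho P)P-\id .
\end{equation}

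\textbf{Diagonal Paulis.} For $P\in\mc Z_n$ we have $n_3=0$. By \cref{eq:sigmaPclosed}, $\sigma_P=2^{n_1}\cdot 0^{n_2}$. This equals $D$ when $P=\id$ and vanishes otherwise. The identity term $D^{-1}\cdot D\,\id$ cancels the $-\id$, so no diagonal Pauli survives.

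\textbf{Reference expansion.} Write
\begin{equation}
\rho_f=\rho-\rho_d=D^{-1}\sum_{P\notin\mc Z_n}\tr(\rho P)P ,
\end{equation}
since dephasing in the computational basis retains exactly the $Z$-type Paulis. Subtracting this from $\mathbb E\,\widehat{\rho_f}$ and taking the trace with $O$ yields
\begin{equation}
\mathbb E\tr(O\widehat{\rho_f})-\tr(O\rho_f)=D^{-1}\sum_{P\notin\mc Z_n}(\sigma_P-1)\tr(\rho P)\tr(OP),
\end{equation}
which is the equality in \cref{Eq:BiasPauli}.

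\textbf{The inequality.} Since $O=V\ket{\mb 0}\!\bra{\mb 0}V^\dagger=D^{-1}\sum_{P\in\mathrm{STAB}(O)}\pm P$, we have $\tr(OP)=\pm1$ for $P$ in the (signed) stabilizer group and $0$ otherwise. Also $|\tr(\rho P)|\le1$ because $\|P\|=1$ and $\rho$ is a state. Applying the triangle inequality restricts the sum to $\mathrm{STAB}(O)\setminus\mc Z_n$ with summand $|\sigma_P-1|$.

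The only delicate point is normalization. Check that $\tr(P\otimes P\cdot\mathbf M^{(2)})$ in \cref{Eq:sigmaP1} is consistent with $\sigma_P$ equal to $1$ in the $p=1/2$ ideal case. This holds: when $n_3>0$ we have $(1-2p)^{n_3}=0$, so $\sigma_P=1$. This confirms the $D^{-1}$ prefactor and the reference value $1$.
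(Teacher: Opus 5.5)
Your proposal is correct and follows essentially the same route as the paper: you expand the expected snapshot with the Pauli-diagonal second moment, use $\sigma_{\id}=D$ and $\sigma_P=0$ on the other $Z$-type Paulis so that the identity term cancels $-\id$, subtract the Pauli expansion of $\rho_f$, and finish with $|\tr(\rho P)|\le 1$, the stabilizer support of $O$, and the triangle inequality. The only difference is cosmetic: you compute the channel output $D^{-2}\sum_P\sigma_P\tr(\rho P)P$ before tracing against $O$, whereas the paper works directly with the scalar $D^2\tr(\rho\otimes O\,\mathbf M^{(2)})-\tr(O)-\tr(\rho_f O)$.
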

Here, the coefficient $\sigma_P$ is shown in \cref{Eq:sigmaP}. For a stabilizer-state observable $O=V\ket{\mb{0}}\!\bra{\mb{0}}V^{\dagger}$, we denote by $\mathrm{STAB}(O)$ the stabilizer group of $V\ket{\mb{0}}$.

\begin{proof}
    For a given state $\rho$ and a stabilizer state $O$, the estimation bias under the  measurement channel of the shallow phase shadow is
\begin{equation}\label{Eq:GeneralBias_ap}
    \begin{split}
       \text{Bias}(\widehat{o_f}) &= |\mbb{E}_{U\sim\mc{E}_{\text{sparse}}^{(\gamma)}}\tr(\widehat{\rho_f} O)- \tr(\rho_f O)|\\
    &=| \mbb{E}_{U\sim \mc{E}_{\text{sparse}}^{(\gamma)}}\sum_{\mb{b}} \tr(\rho\Phi_{U,\mb{b}})\tr[(D\Phi_{U,\mb{b}}-\id)O] - \tr(\rho_f O)|\\
    & = |\mbb{E}_{U\sim \mc{E}_{\text{sparse}}^{(\gamma)}}\sum_{\mb{b}} \tr(\rho\Phi_{U,\mb{b}})\tr[D\Phi_{U,\mb{b}}O] -\tr(O)- \tr(\rho_f O)|\\
    &=|D^2\tr(\rho\otimes O \text{ }\mb{M}_{\mc{E}_{\text{sparse}}^{(\gamma)}}^{(2)}) -\tr(O)- \tr(\rho_f O)|.\\
    \end{split}
\end{equation}

Next, we represent the estimation bias in the Pauli basis by introducing the Pauli form of the second moment function from \cref{Eq: P2P}. We note that for $P \in \mc{Z}_n$, the coefficient $\sigma_{\id_D} = D$, while $\sigma_P = 0$ for $P\in \mc{Z}_n/\{\id_D\}$.
\begin{equation}
    \begin{split}
        \text{Bias}(\widehat{o_f}) &= |D^2\tr(\rho\otimes O \text{ }\mb{M}_{\mc{E}_{\text{sparse}}^{(\gamma)}}^{(2)}) -\tr(O)-\tr(\rho_f O)|\\
        &= |D^{-1}\tr(\rho\otimes O \text{ }\sum_{P\in\mb{P}_n}\sigma_P P\otimes P )- \tr(O)-D^{-1}\tr(\rho\otimes O \text{ }\sum_{P\in\mb{P}_n/\mc{Z}_n} P\otimes P )|\\
        &= |D^{-1}\tr(\rho\otimes O \text{ }\sum_{P\in\mb{P}_n/\mc{Z}_n}\sigma_P P\otimes P )-D^{-1}\tr(\rho\otimes O \text{ }\sum_{P\in\mb{P}_n/\mc{Z}_n} P\otimes P )|\\
        &=D^{-1}|\tr[\rho\otimes O\text{ }\sum_{P\in\mb{P}_n/\mc{Z}_n}(\sigma_P-1) P\otimes P]|=D^{-1}|\sum_{P\in \mb{P}_n\setminus\mc{Z}_n}(\sigma_P-1)\tr(\rho P)\tr(OP)|\\
        &\leq D^{-1}\sum_{P\in \mb{P}_n\setminus\mc{Z}_n }|\sigma_P-1||\tr(OP)|\leq D^{-1}\sum_{P\in \text{STAB}(O)\setminus\mc{Z}_n }|\sigma_P-1|.
    \end{split}
\end{equation}
\end{proof}
 With $O$ being a stabilizer-state observable, we obtain that $|\tr(OP)|=1$ for Pauli operators $P$ belonging to the stabilizer group of $O$ (of size $D$), and $|\tr(OP)|=0$ for all other Pauli operators. Consequently, bounding the estimation bias reduces to controlling the sum of $|\sigma_P-1|$ over the $D$ stabilizer Paulis $P$ associated with $O$.
Before carrying out the proof of the estimation bias, we also collect three elementary inequalities that will be used later.

 \begin{fact}[Technical inequalities]
The following elementary inequalities will be used repeatedly in the proof of the theorem.
\begin{equation}\label{Eq:math1}
    C_n^k\leq 2^{nH(\frac{k}{n})}, 1\leq k\leq n,
\end{equation}
where $H(p) = -p\log_2p-(1-p)\log_2(1-p)$,
and
\begin{equation}\label{Eq:math2}
    F(n,k)\leq 1.5 C_n^k+1,  k\leq 0.2n
\end{equation}
as well as
\begin{equation}\label{Eq:math3}
    [1+n^{-\frac{2n_3\gamma}{n}}]^n \leq 2^n n^{-n_3\gamma\lambda},
\end{equation}
for any $\lambda>0$ such that $e^{-t}-1+\lambda t\leq 0$, with $t = \frac{2n_3\gamma}{n}\ln n$.
 \end{fact}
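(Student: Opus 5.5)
The three inequalities are independent and elementary, so I will prove each one directly: \cref{Eq:math1} by a probabilistic/binomial-expansion argument, \cref{Eq:math2} by comparison with a geometric series, and \cref{Eq:math3} by reducing it to a single-variable inequality that follows from the hypothesis on $\lambda$ together with $1-x\le e^{-x}$.

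\emph{Inequality \cref{Eq:math1}.} Set $q=k/n\in(0,1]$. If $q=1$, then $C_n^n=1=2^{nH(1)}$, so there is nothing to prove. For $q<1$, expand by the binomial theorem and keep only the $k$-th term:
\begin{equation*}
1=(q+(1-q))^n\ge C_n^k\,q^k(1-q)^{n-k}.
\end{equation*}
A direct computation gives $q^k(1-q)^{n-k}=2^{k\log_2 q+(n-k)\log_2(1-q)}=2^{-nH(q)}$. Rearranging yields $C_n^k\le 2^{nH(k/n)}$.

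\emph{Inequality \cref{Eq:math2}.} Write $F(n,k)=\sum_{i=0}^k C_n^i$ and compare consecutive terms. For $1\le i\le k\le 0.2n$,
\begin{equation*}
\frac{C_n^{i-1}}{C_n^{i}}=\frac{i}{n-i+1}\le\frac{k}{n-k+1}\le\frac{0.2n}{0.8n+1}<\frac14 .
\end{equation*}
Iterating this gives $C_n^{k-m}\le 4^{-m}C_n^k$ for every $0\le m\le k$. Summing the geometric series,
\begin{equation*}
F(n,k)\le C_n^k\sum_{m\ge0}4^{-m}=\tfrac43 C_n^k\le 1.5\,C_n^k+1 .
\end{equation*}
The degenerate case $k=0$ is immediate, since then $F(n,0)=1\le 2.5$.

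\emph{Inequality \cref{Eq:math3}.} With $t=\frac{2n_3\gamma}{n}\ln n$, the two sides can be rewritten as
\begin{equation*}
n^{-2n_3\gamma/n}=e^{-t},\qquad n^{-n_3\gamma\lambda}=e^{-\lambda n_3\gamma\ln n}=e^{-\lambda t n/2}.
\end{equation*}
Taking $n$-th roots, the claim is therefore equivalent to
\begin{equation*}
\frac{1+e^{-t}}{2}\le e^{-\lambda t/2}.
\end{equation*}
The hypothesis $e^{-t}-1+\lambda t\le0$ is exactly $\frac{1+e^{-t}}{2}\le 1-\frac{\lambda t}{2}$. Combining this with $1-x\le e^{-x}$ at $x=\lambda t/2$ closes the argument.

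There is no genuine obstacle. The only points needing care are to write \cref{Eq:math3} in terms of $t$ correctly (the factor $n/2$ in the exponent), and to treat the boundary cases $k=n$ in \cref{Eq:math1} and $k=0$ in \cref{Eq:math2}.
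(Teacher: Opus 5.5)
Your proof is correct and follows essentially the same route as the paper. For the third inequality, you apply the hypothesis and then $1-x\le e^{-x}$, which is the same step as the paper's use of $\ln(1+x)\le x$. For the second, you prove directly, with the cruder uniform ratio bound $1/4$ giving the constant $4/3$, the geometric-series estimate $F(n,k)\le\frac{n-k+1}{n-2k+1}C_n^k$ that the paper simply cites; for the first, you derive from the $k$-th binomial term the entropy bound that the paper only calls standard.
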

\begin{proof}
\cref{Eq:math1} is a standard bound on binomial coefficients using Shannon entropy.

For $k<n/2$, we use $\sum_{i=0}^k C_n^i
\le
\frac{n-k+1}{n-2k+1}C_n^k.$
For $k\le 0.2n$, the prefactor is at most $1.5$ for $n\ge 1$, which gives~\cref{Eq:math2}.

Finally, we prove \cref{Eq:math3} as follows.
Let $t = \frac{2n_3\gamma}{n}\ln n$. Using $\ln(1+x)\le x$ and the assumption $e^{-t}-1+\lambda t\le0$, we obtain
\begin{equation}
\ln\left(\frac{1+e^{-t}}{2}\right)
=
\ln\left(1+\frac{e^{-t}-1}{2}\right)
\le
\frac{e^{-t}-1}{2}
\le
-\frac{\lambda t}{2}.
\end{equation}
Multiplying by $n$ gives
\begin{equation}
n\ln\left(\frac{1+e^{-t}}{2}\right)
\le
-\frac{n\lambda t}{2}
=
-n_3\gamma\lambda\ln n.
\end{equation}
Exponentiating both sides yields
\begin{equation}
\left(\frac{1+n^{-\frac{2n_3\gamma}{n}}}{2}\right)^n
\le
n^{-n_3\gamma\lambda},
\end{equation}
which is equivalent to \eqref{Eq:math3}.
\end{proof}

\textbf{Final proof of the bias bound.}
We now combine \cref{lem:PauliDist} and \cref{lem:BiasPauli} to bound the estimation bias.
The proof proceeds by a case-by-case decomposition according to the $X$-weight
$n_3=\mathrm{wt}_X(P)$ of the Pauli operators appearing in \cref{Eq:BiasPauli}.
By grouping the stabilizer Paulis according to their $X$- and $Z$-weights,
$n_3=\mathrm{wt}_X(P)$ and $n_2=\mathrm{wt}_Z(P)$, \cref{Eq:BiasPauli} gives
\begin{equation}\label{Eq:BiasByWeights}
\mathrm{Bias}(\widehat{o_f})
\le
D^{-1}\sum_{n_3=1}^{n}
\sum_{n_2=0}^{n-n_3}
\sum_{\substack{
P\in \mathrm{STAB}(O)\setminus\mc Z_n\\
\mathrm{wt}_X(P)=n_3,\ \mathrm{wt}_Z(P)=n_2
}}
|\sigma_P-1| = D^{-1}\sum_{n_3=1}^{n}
\sum_{n_2=0}^{n-n_3} N_{X,Z}(n_3,n_2)|\sigma(n_2,n_3)-1|,
\end{equation}
where the notation $N_{X,Z}(n_3,n_2)$ is defined in~\cref{eq:Nxz} and can be bounded using~\cref{lem:PauliDist}. Here, $\sigma_P$ depends only on the Pauli-weight data $(n_1,n_2,n_3)$, not on the locations of the non-identity Pauli operators. Therefore, after fixing $n_2$ and $n_3$ with $n_1=n-n_2-n_3$, all Paulis in the corresponding class have the same coefficient, which we denote by $\sigma(n_2,n_3)$.

In the following, we split the contribution of estimation bias in~\cref{Eq:BiasByWeights} into the following four regimes:
\begin{align*}
\text{Case 1:}\quad &  n_3\gamma \le 0.0517\,\frac{n}{\ln n},\\
\text{Case 2:}\quad & 0.0517\,\frac{n}{\ln n} < n_3\gamma \le \frac{4n}{\ln n},\\
\text{Case 3:}\quad & n_3\gamma > \frac{4n}{\ln n}\quad \text{and}\quad n_3\le 0.2n,\\
\text{Case 4:}\quad & n_3>0.2n.
\end{align*}
The four cases reflect the trade-off between the size of the stabilizer support
and the decay of the coefficient $\sigma(n_2,n_3)$.
Case 1 is controlled mainly by counting, since only a limited number of small-$X$-weight stabilizer Paulis can appear.
Case 4 uses the fact that large $n_3$ forces $\sigma(n_2,n_3)$ to be close to its
ideal value $1$, so each individual term is strongly suppressed.
Cases 2 and 3 are intermediate regimes where the counting estimate and the decay
of $\sigma(n_2,n_3)$ together make the contribution negligible.
The constants are chosen for technical convenience and are not optimized.

Throughout the proof, we assume $\gamma>3$ and $n$ is sufficiently large so that
$2\gamma\ln n/n<1$. Besides, we denote
\begin{equation}\label{eq:qgamma}
    q(\gamma,n_3):=(1-\frac{2\gamma \ln n}{n})^{n_3}\in(0,1)
\end{equation}
throughout this proof and recall that \cref{lem:PauliDist} ensures the existence of a parameter $\zeta=\rank(C)\ge n/2$ for stabilizer-state observables $O$.

\textbf{ Case 1: $\mathbf{n_3\gamma\leq0.0517\frac{n}{\ln n}}$.}
Here, we focus on the corresponding contribution to the estimation bias in \cref{Eq:BiasPauli} $D^{-1}\sum_{n_3=1}^{0.0517\tfrac{n}{\gamma \ln n}}
\sum_{n_2=0}^{n-n_3} N_{X,Z}(n_3,n_2)|\sigma(n_2,n_3)-1|$. For any choice of $n_2$, the independent term is upper-bounded as follows.
\begin{equation}\label{Eq:Biascase1}
    \begin{split}
        &D^{-1}N_{X,Z}(n_3,n_2)|\sigma(n_2,n_3)-1|\leq
        D^{-1}N_{X,Z}(n_3,n_2)(\sigma(n_2,n_3)+1)\\
        &\leq
        D^{-1}N_{X}(n_3)
        +D^{-1}N_{X,Z}(n_3,n_2)\sigma(n_2,n_3)\qquad\text{(using $N_{X,Z}(n_3,n_2)\le N_X(n_3)$)}\\
        &\leq
        D^{-1}F(\zeta,n_3)2^{n-\zeta}
        +D^{-1}2^{n_3}[F(\zeta,n_3)-1]F(n-\zeta,n_2)\sigma(n_2,n_3)
        \qquad(\text{using the counting bounds in~\cref{lem:PauliDist}})\\
        &\leq
        D^{-1}\times 1.5 C_\zeta^{n_3}2^{\frac{n}{2}}
        +D^{-1}\times 1.5\,C_{\zeta}^{n_3}2^{n_3}F(n-\zeta,n_2)
        [1+(1-\tfrac{2\gamma \ln n}{n})^{n_3}]^{n_1}
        [1-(1-\tfrac{2\gamma \ln n}{n})^{n_3}]^{n_2}\\
        &\qquad \text{(using $\zeta\geq n/2$; see~\cref{Eq:math2})}\\
        &\leq2^{-n/2}\times 1.5 C_\zeta^{n_3}
        +D^{-1}\times 1.5\,C_{\zeta}^{n_3}2^{n_3}F(n-\zeta,n_2)
        [1+(1-\tfrac{2\gamma \ln n}{n})^{n_3}]^{n}
        [1-(1-\tfrac{2\gamma \ln n}{n})^{n_3}]^{n_2}\qquad \text{(using $n_1\le n$)}\\
        &\leq
        \negl(n)
        +D^{-1}\times 1.5\,C_{\zeta}^{n_3}2^{n_3}F(n-\zeta,n_2)
        [1+n^{-\frac{2n_3\gamma}{n}}]^{n}
        [1+(1-\tfrac{2\gamma \ln n}{n})^{n_3}]^{-n_2}
        [1-(1-\tfrac{2\gamma \ln n}{n})^{n_3}]^{n_2}\\
        &\qquad(C_\zeta^{n_3}=\mathrm{subexp}(n) \text{ since } n_3=O(\tfrac{n}{\ln n}); 1+x\leq e^x\text{ with } x = -\tfrac{2\gamma \ln n}{n})\\
        &\leq\negl(n)+1.5\,C_{\zeta}^{n_3}2^{n_3}n^{-0.95n_3\gamma}F(n-\zeta,n_2)
        [1+(1-\tfrac{2\gamma \ln n}{n})^{n_3}]^{-n_2}
        [1-(1-\tfrac{2\gamma \ln n}{n})^{n_3}]^{n_2}\\
        &\qquad(\text{by~\cref{Eq:math3} with $\lambda=0.95$, since $t=\frac{2n_3\gamma}{n}\ln n\le0.1034$ implies $e^{-t}-1+0.95t\le0$})\\
        &=\negl(n)
        +1.5\,C_{\zeta}^{n_3}2^{n_3}n^{-0.95n_3\gamma}\times F(n-\zeta,n_2)
        [1+q(\gamma,n_3)]^{-n_2}
        [1-q(\gamma,n_3)]^{n_2}.\\
    \end{split}
\end{equation}
It remains to bound the last three terms $ F(n-\zeta,n_2)
        [1+q(\gamma,n_3)]^{-n_2}
        [1-q(\gamma,n_3)]^{n_2}$ in the last line of \cref{Eq:Biascase1}, which are dependent on $n_2$ .
\begin{equation}\label{Eq:Biascase1_1}
\begin{split}
&F(n-\zeta,n_2)
\left[\frac{1-q(\gamma,n_3)}{1+q(\gamma,n_3)}\right]^{n_2} =
\sum_{i=0}^{n_2} C_{n-\zeta}^{i}
\left[\frac{1-q(\gamma,n_3)}{1+q(\gamma,n_3)}\right]^{n_2}\qquad(\text{see definition of }F(a,b)\text{ in~\cref{lem:PauliDist}})\\
&\le
\sum_{i=0}^{n_2} C_{n-\zeta}^{i}
\left[\frac{1-q(\gamma,n_3)}{1+q(\gamma,n_3)}\right]^{i}\le
\sum_{i=0}^{n-\zeta} C_{n-\zeta}^{i}
\left[\frac{1-q(\gamma,n_3)}{1+q(\gamma,n_3)}\right]^{i}=
\left[
1+\frac{1-q(\gamma,n_3)}{1+q(\gamma,n_3)}
\right]^{n-\zeta}
\le
\left[
1+\frac{1-q(\gamma,n_3)}{1+q(\gamma,n_3)}
\right]^{n/2}.
\end{split}
\end{equation}

We then note that $\frac{1-x}{1+x}\le 1-x^{0.55}$ for $0.8967\le x<1$. This inequality is applicable to $x=q(\gamma,n_3)$ in the present regime, since Bernoulli's inequality gives
$
q(\gamma,n_3)=\left(1-\frac{2\gamma\ln n}{n}\right)^{n_3}
\ge 1-\frac{2n_3\gamma\ln n}{n}
\ge 0.8967,
$
where the last inequality follows from the condition of this regenime, $n_3\gamma\le 0.0517\,n/\ln n$. In this way, \cref{Eq:Biascase1_1} can be further converted to that
\begin{equation}\label{Eq:Biascase1_2}
\begin{split}
&\left[
1+\frac{1-q(\gamma,n_3)}{1+q(\gamma,n_3)}
\right]^{n/2} \le
\left[2-q(\gamma,n_3)^{0.55}\right]^{n/2}=
\left[
2-\left(1-\frac{2\gamma\ln n}{n}\right)^{0.55n_3}
\right]^{n/2}\\
&\le
\left[
2-\left(1-\frac{1.1\gamma n_3\ln n}{n}\right)
\right]^{n/2}\qquad(\text{ }(1-x)^a\geq1-ax, \text{ when }a=0.55n_3>1, \text{ with }x=\tfrac{2\gamma \ln n}{n})\\
&=
\left[
1+\frac{1.1\gamma n_3\ln n}{n}
\right]^{n/2}\le
\exp\!\left(0.55\gamma n_3\ln n\right)
=
n^{0.55\gamma n_3}.
\end{split}
\end{equation}
Based on the result of~\cref{Eq:Biascase1_2}, \cref{Eq:Biascase1} now delivers
\begin{equation}\label{eq:Biasecase1_21}
    D^{-1}\sum_{\mathrm{wt}_X(P)=n_3,\mathrm{wt}_Z(P)=n_2}|\sigma(n_2,n_3)-1|\leq \text{negl}(n)+1.5C_{\zeta}^{n_3}2^{n_3}n^{-0.95\gamma n_3+0.55\gamma n_3}=\text{negl}(n)+1.5C_{\zeta}^{n_3}2^{n_3}n^{-0.4\gamma n_3}.
\end{equation}

Note that \cref{Eq:Biascase1_2} is satisfied only when $0.55n_3>1$, which does not include the subcase when $n_3=1$. To tackle this exceptional issue, we bound this subcase independently in a similar way. Recall that when $n_3=1$, the coefficient
\begin{equation}
    \sigma(n_2,n_3) = (2-\frac{2\gamma\ln n}{n})^{n_1} (\frac{2\gamma\ln n}{n})^{n_2}\leq (2-\frac{2\gamma\ln n}{n})^{n-n_2} (\frac{2\gamma\ln n}{n})^{n_2}\leq 2^n n^{-\gamma}(\frac{\gamma\ln n}{n-\gamma \ln n})^{n_2},
\end{equation}
where  we use $(1+x)\leq e^x$ in the last inequality, with $x = -\tfrac{2\gamma \ln n}{n}$. Then, the bias contribution when $n_3=1$ is bounded by
\begin{equation}
    \begin{split}
        &D^{-1}   N_{X,Z}(1,n_2)|\sigma(n_2,n_3)-1|\leq D^{-1}N_{X,Z}(1,n_2)(\sigma(n_2,n_3)+1)=D^{-1}N_{X,Z}(1,n_2)+D^{-1}N_{X,Z}(1,n_2)\sigma(n_2,n_3)\\
        &\leq \negl(n)+D^{-1}2^{1}[F(\zeta,1)-1]F(n-\zeta,n_2)\sigma(n_2,n_3)\qquad(\text{following the proof in~\cref{Eq:Biascase1}})\\
        &\leq \negl(n) + 2D^{-1} C_\zeta^1 F(n-\zeta,n_2) (\frac{\gamma\ln n}{n-\gamma\ln n})^{n_2} 2^n  n^{-\gamma}\leq \negl(n) + 2C_\zeta^1n^{-\gamma}\sum_{i=0}^{n_2}C_{n-\zeta}^{i} (\frac{\gamma\ln n}{n-\gamma \ln n})^{i}\\
        &\leq \negl(n) + 2C_\zeta^1n^{-\gamma}(1+\frac{\gamma\ln n}{n-\gamma \ln n})^{n-\zeta}\leq  \negl(n) + 2C_\zeta^1n^{-\gamma}(1+\frac{1.2\gamma\ln n}{n})^{n/2}\qquad (\text{when probability }p= \tfrac{\gamma \ln n}{n}\leq \tfrac{1}{6})\\
        &\leq  \negl(n) + 2C_\zeta^1n^{0.6\gamma-\gamma}=\negl(n) + 2C_\zeta^1n^{-0.4\gamma},
    \end{split}
\end{equation}
which is no more than~\cref{eq:Biasecase1_21} when $\gamma>3$. Thus, the subcase $n_3=1$ also follows the bound in \cref{eq:Biasecase1_21} by taking $n_3=1$ there.

Summing over all contributions (all possible $n_2$ and $n_3$) to the estimation bias in Case~1, we obtain from~\cref{eq:Biasecase1_21} that
\begin{equation}\label{Eq:Biascase1_3}
 \begin{split}
& D^{-1}
\sum_{1\le n_3\le  \frac{0.0517n}{\gamma\ln n}}
\sum_{n_2=0}^{n-n_3}
N_{X,Z}(n_3,n_2)\,|\sigma(n_2,n_3)-1|  \le
\sum_{n_3=1}^{\frac{0.0517n}{\gamma\ln n}}
\sum_{n_2=0}^{n-n_3}
\bigl[\negl(n)+1.5C_\zeta^{n_3}2^{n_3}n^{-0.4\gamma n_3}\bigr] \\
&\le
\negl(n)
+n\sum_{n_3=1}^{ \frac{0.0517n}{\gamma\ln n}}
1.5C_\zeta^{n_3}2^{n_3}n^{-0.4\gamma n_3} \le
1.5\,n\sum_{n_3=1}^{\zeta} C_\zeta^{n_3}
\bigl(2n^{-0.4\gamma}\bigr)^{n_3}
=
1.5\,n\Bigl[\bigl(1+2n^{-0.4\gamma}\bigr)^\zeta-1\Bigr] \le
4n^2n^{-0.4\gamma}.
 \end{split}
\end{equation}
where in the last step we have used $\zeta\le n$ and
$(1+x)^m-1\le \frac{4}{3}mx$ for sufficiently small $x=2n^{-0.4\gamma}$.

\textbf{Case 2: $0.0517\,\frac{n}{\ln n}< n_3\gamma\leq \frac{4n}{\ln n}$. }
Fix $n_3$ in this case and an arbitrary $n_2$. The contribution of this Pauli weight to~\cref{Eq:BiasByWeights} is bounded by
\begin{equation}\label{Eq:Biascase2}
    \begin{split}
        &D^{-1}N_{X,Z}(n_3,n_2)
        |\sigma(n_2,n_3)-1| \\
        &\leq
        D^{-1}N_{X,Z}(n_3,n_2)\bigl(\sigma(n_2,n_3)+1\bigr) \\
        &=
        D^{-1}N_{X,Z}(n_3,n_2)
        +
        D^{-1}N_{X,Z}(n_3,n_2)\sigma(n_2,n_3)\\
        &=
        \negl(n)
        +
        D^{-1}N_{X,Z}(n_3,n_2)\sigma(n_2,n_3)
        \qquad(\text{follow~\cref{Eq:Biascase1}})\\
        &\leq
        \negl(n)
        +D^{-1}2^{n_3}[F(\zeta,n_3)-1]F(n-\zeta,n_2)
        [1+q(\gamma,n_3)]^{n_1}[1-q(\gamma,n_3)]^{n_2}
        \qquad(\text{using~\cref{eq:Nxz}})\\
        &\leq
        \negl(n)
        +D^{-1}2^{n_3}F(\zeta,n_3)F(n-\zeta,n_2)
        [1+q(\gamma,n_3)]^{n}[1-q(\gamma,n_3)]^{n_2}.
    \end{split}
\end{equation}
Following the proof in \cref{Eq:Biascase1_1}, one arrives at $F(n-\zeta,n_2)[1-q(\gamma,n_3)]^{n_2}\leq[2-q(\gamma,n_3)]^{\frac{n}{2}}$. Then, \cref{Eq:Biascase2} gives
\begin{equation}\label{Eq:Biascase2_1}
    \begin{split}
        D^{-1}N_{X,Z}(n_3,n_2)|\sigma(n_2,n_3)-1|
        &\leq
        \negl(n)
        +D^{-1}F(\zeta,n_3)2^{n_3}
        [1+q(\gamma,n_3)]^n[2-q(\gamma,n_3)]^{n/2}\\
        &=
        \negl(n)
        +2^{n_3}F(\zeta,n_3)
        2^{n\left[\log_2(1+q(\gamma,n_3))
        +\frac{1}{2}\log_2(2-q(\gamma,n_3))-1\right]}.
    \end{split}
\end{equation}
It remains to show that the exponential term in~\cref{Eq:Biascase2_1}
decays exponentially in $n$. Define $f(x):=\log_2(1+x)+\frac{1}{2}\log_2(2-x)-1$ with $x=q(\gamma,n_3)$. Then the exponent in~\cref{Eq:Biascase2_1} is $n f(q(\gamma,n_3))$.
Therefore, it suffices to prove $f(q(\gamma,n_3))<0$ in this case.

One can verify that $f(x):=\log_2(1+x)+\frac12\log_2(2-x)-1<0$
whenever $x<e^{-0.1}$. It therefore suffices to show that
$q(\gamma,n_3)<e^{-0.1}$. In the present case,
$n_3\gamma>0.0517\,n/\ln n$, and hence $\tfrac{2\gamma n_3\ln n}{n}>2\times 0.0517=0.1034>0.1.$
Using $1-y\le e^{-y}$, we obtain $q(\gamma,n_3)
=
\left(1-\frac{2\gamma\ln n}{n}\right)^{n_3}
\le
\exp\!\left(-\frac{2\gamma n_3\ln n}{n}\right)
<e^{-0.1}.$
Thus $f(q(\gamma,n_3))<0$, so the exponential factor in the second term of~\cref{Eq:Biascase2_1} decays exponentially in $n$.

Moreover, in the present case $n_3=O(n/\ln n)$, the prefactor $2^{n_3}F(\zeta,n_3)$ is at most sub-exponential in $n$. Hence, the second term in \cref{Eq:Biascase2_1} is also negligible, and therefore the whole contribution in this case is bounded by $\negl(n)$.

\textbf{Case 3: $\mb{n_3\gamma> \frac{4n}{\ln n}}$ and $\mb{n_3\leq 0.2n}$.} Fix $n_3$ in this case and an arbitrary $n_2$. We first bound
\begin{equation}\label{Eq:Biascase3}
\begin{split}
D^{-1}N_{X,Z}(n_3,n_2)
&\le
D^{-1}N_X(n_3) \le
D^{-1}\tfrac{\zeta-n_3+1}{\zeta-2n_3+1}C_\zeta^{n_3}2^{n-\zeta}
\qquad
(\text{using~\cref{lem:PauliDist}  })\\
&\le
3\times2^{-n}2^{\zeta H(n_3/\zeta)}2^{n-\zeta} \qquad(\text{using $n_3\leq 0.2n,\zeta\ge 0.5n$; using~\cref{Eq:math1}})\\
&=
3\times2^{\zeta[H(n_3/\zeta)-1]} \le
3\times2^{n\frac{\zeta}{n}\left[H(0.2n/\zeta)-1\right]} .
\end{split}
\end{equation}
Suppose $g(x) = x[H(\frac{0.2}{x})-1]$ with $x = \tfrac{\zeta}{n}$, where $\frac{1}{2}\leq x< 1$. One can prove that $g(x)\leq\frac{1}{2}[H(0.4)-1]=-0.0145$. Utilizing this result, one arrives at
\begin{equation}\label{Eq:Biascase3-1}
    \begin{split}
        &D^{-1}N_{X,Z}(n_3,n_2)|\sigma(n_2,n_3)-1|
        \leq
        D^{-1}N_{X,Z}(n_3,n_2)\bigl(1+\sigma(n_2,n_3)\bigr)\\
        &\leq
        3\cdot 2^{-0.0145n}
        \left(
        1+
        [1+q(\gamma,n_3)]^{n_1}[1-q(\gamma,n_3)]^{n_2}
        \right)\leq
        3\cdot 2^{-0.0145n}
        \left(1+[1+q(\gamma,n_3)]^{n}\right).
    \end{split}
\end{equation}

In this case, $n_3\gamma> \frac{4n}{\ln n}$, which leads to $q(\gamma,n_3) := (1-\frac{2\gamma\ln n}{n})^{n_3}\leq\exp(-\frac{2\gamma n_3\ln n}{n})\leq e^{-8}$. Hence, the corresponding share of the estimation bias in \cref{Eq:BiasPauli} is also $\text{negl}(n)$, as $2^{0.0145}>1+e^{-8}$, and therefore the whole contribution in this case is negligible.

\textbf{Case 4: $\mb{n_3> 0.2n}$.} The corresponding component of the estimation bias in \cref{Eq:BiasByWeights} is upper bounded by
\begin{equation}\label{Eq:Biascase4}
\begin{split}
&D^{-1}\sum_{n_3>0.2n}\sum_{n_2=0}^{n-n_3}N_{X,Z}(n_3,n_2)
|\sigma(n_2,n_3)-1| \leq
\max_{n_3>0.2n}|\sigma(n_2,n_3)-1|
\qquad
\left(|\mathrm{STAB}(O)|=D\right)\\
&\leq
\max_{n_3>0.2n}
\left(
[1+q(\gamma,n_3)]^n-1
\right)\leq
\left[1+n^{-0.4\gamma}\right]^n-1
\qquad
\left(
q(\gamma,n_3)\le e^{-2\gamma n_3\ln n/n}\le n^{-0.4\gamma}
\right)\\
&\leq
2n\times n^{-0.4\gamma}.
\end{split}
\end{equation}

Combining all the four regimes, Case 1 contributes at most
$4n^2n^{-0.4\gamma}$, Case 4 contributes at most
$2n\times n^{-0.4\gamma}$, while Cases 2 and 3 are negligible.
As a result, combining the above four cases yields the desired upper bound on the estimation bias, and hence completes the proof of the bias part of \cref{th:PshadowMain}.

\subsection{Estimation variance}\label{Ap:MainVar}
Having bounded the estimation bias, our focus now turns to the second part of \cref{th:PshadowMain}, i.e., the estimation variance. The variance analysis naturally splits into two conceptual steps. First, the variance is expressed in a form (\cref{Eq:MajorVariance}) related to the third-order moment function.
Second, the components of the moment function are analyzed by classifying its contributions according to the permutation structure of $S_3$. A combinatorial counting argument then shows that the total contribution of the suppressed terms is at most sub-exponential in $n$.

The detailed proof begins as follows. As we denote ${\Phi}_{U,\mb{b}}\coloneq U^{\dagger}\ket{\mb{b}}\bra{\mb{b}}U$, the variance of $\widehat{o_f} = \tr(O \widehat{\rho_f})$ is bounded by its third moment
 \begin{equation}
\begin{split}
\text{Var}(\widehat{o_f}) &\leq\mbb{E}_{\{U,\mathbf{b}\}} [\widehat{o_f}^2] = \mbb{E}_{\{U,\mathbf{b}\}} \tr(O\widehat{\rho_f})^2\\
 &=\mbb{E}_{U\sim \mc{E}_{{\text{sparse}}}^{(\gamma)}}\sum_{\mb{b}}\tr(\rho {\Phi}_{U,\mb{b}})\tr(O\widehat{\rho_f})^2\\
&=\mbb{E}_{U\sim \mc{E}_{{\text{sparse}}}^{(\gamma)}}\sum_{\mb{b}}\tr[(\rho\otimes O_f\otimes O_f) \text{ }({\Phi}_{U,\mb{b}}\otimes \widehat{\rho_f}\otimes\widehat{\rho_f})].\\
\end{split}
\label{Eq:invrobustvariance0}
\end{equation}
Recall that $\widehat{\rho_f} = D{\Phi}_{U,\mb{b}}-\id_D$, we arrive at
\begin{equation}\label{Eq:varM}
    \begin{split}
            \text{Var}(\widehat{o_f})&\leq\tr[D^2\mbb{E}_{U\sim \mc{E}_{{\text{sparse}}}^{(\gamma)}}\sum_{\mb{b}}{\Phi}_{U,\mb{b}}^{\otimes 3}\text{ }(\rho\otimes O_f\otimes O_f)]\\
            &=\tr[D^3\mb{M}_{\mc{E}_{\text{sparse}}^{(\gamma)}}^{(3)}\text{ }(\rho\otimes O_f\otimes O_f)],
        \end{split}
\end{equation}
where $\mb{M}_{\mc{E}_{\text{sparse}}^{(\gamma)}}^{(3)}\coloneq D^{-1}\mbb{E}_{U\sim \mc{E}_{{\text{sparse}}}^{(\gamma)}}\sum_{\mb{b}}{\Phi}_{U,\mb{b}}^{\otimes 3}$.
According to \cref{thm: sparse 3-moment},
\begin{equation}
    \mb{M}_{\mc{E}_{\text{sparse}}^{(\gamma)}}^{(3)}=D^{-3}\sum_{\mb{x+w+z=y+s+t}}(1-\frac{2\gamma\ln n}{n})^{N_{\mb{x,w,z,y,s,t}}}\ket{\mb{x,w,z}}\bra{\mb{y,s,t}},
\end{equation}
where
\begin{equation}\label{Eq:coef}
    N_{\mb{x,w,z,y,s,t}}\coloneq \sum_{i<j}\mb{1}\{x_ix_j+w_iw_j+z_iz_j\neq y_iy_j+s_is_j+t_it_j\pmod{2}\}.
\end{equation}
In order to bound the value of \cref{Eq:varM}, we divide the summation into two terms according to whether $N_{\mb{x,w,z,y,s,t}}=0$ or not.

\begin{equation}    \label{Eq:MajorVariance}
    \begin{split}
        &\tr[D^3\mathbf{M}_{\mc{E}_{\text{sparse}}^{(\gamma)}}^{(3)}\text{ }(\rho \otimes O_f\otimes O_f)] \\
        &=\mathop {\underbrace{\tr[(\rho \otimes O_f\otimes O_f)\text{ }\sum_{\mb{x}+\mb{w}+\mb{z}=\mb{y}+\mb{s}+\mb{t},{N_{\mb{x,w,z,y,s,t}}}=0 }\ket{\mb{x,w,z}}\bra{\mb{y,s,t}}  ]}} \limits_{\circled{1}}\\
        &+\mathop {\underbrace{\tr[(\rho \otimes O_f\otimes O_f)\text{ }\sum_{\mb{x}+\mb{w}+\mb{z}=\mb{y}+\mb{s}+\mb{t},{N_{\mb{x,w,z,y,s,t}}}\neq0 }(1-\frac{2\gamma\ln n}{n})^{N_{\mb{x,w,z,y,s,t}}} \ket{\mb{x,w,z}}\bra{\mb{y,s,t}}  ]}} \limits_{\circled{2}}.\\
    \end{split}
\end{equation}

As shown in Ref.~\cite{zhang2025robust}, $\sum_{\mb{x}+\mb{w}+\mb{z}=\mb{y}+\mb{s}+\mb{t},{N_{\mb{x,w,z,y,s,t}}}=0 }\ket{\mb{x,w,z}}\bra{\mb{y,s,t}}=\bigcup_{\pi \in S_3} V_n(\pi)$. Hence, the term $\circled{1}$ can be bounded as
$\circled{1}\leq 3\|O_f\|_2^2$. Then, our main focus is on bounding the term $\circled{2}$, which must be sub-exponential for the theorem to hold.

Next, we turn to the more challenging contribution $\circled{2}$ in~\cref{Eq:MajorVariance}, which we aim to bound to be sub-exponential. For the computational-basis elements $\ket{\mb{x,w,z}}\bra{\mb{y,s,t}}$ in $\circled{2}$, the condition $N_{\mathbf{x},\mathbf{w},\mathbf{z},\mathbf{y},\mathbf{s},\mathbf{t}}\neq0$ implies that the corresponding matrix unit does not belong to any global $n$-qubit permutation operators $V_n(\pi)$. However, the restriction $\mb{x}+\mb{w}+\mb{z}=\mb{y}+\mb{s}+\mb{t}$ still guarantees that the element belongs to a tensor product of single-qubit permutations $\ket{\mb{x,w,z}}\bra{\mb{y,s,t}}\in [\bigcup_{\pi\in S_3}V_1(\pi)]^{\otimes n}$. Here, the phrase `belongs to' follows the definition in~\cref{App:UaA}.

In this way, it is convenient to decompose the single-qubit operator space according to the permutation group $S_3=\{\pi_{()},\pi_{(23)},\pi_{(12)},\pi_{(13)},\pi_{(123)},\pi_{(132)}\}$. At the single-qubit level, $V_1(\pi)$ denotes the action of a permutation $\pi\in S_3$ on three copies of one qubit.
The subspace fixed by all such permutations is $\Delta_3\coloneq\ket{0,0,0}\bra{0,0,0}+\ket{1,1,1}\bra{1,1,1}=\bigcap_{\pi \in S_3}V_1(\pi)$. To organize the contribution $\circled{2}$, we refine the local three-copy operator space according to its permutation profile. Specifically, we denote $B(\pi_1,\pi_2) \coloneq V_1(\pi_1)\cap V_1(\pi_2)-\Delta_3$, where $\pi_1,\pi_2$ are two different permutations. The purpose of introducing $B(\pi_1,\pi_2)$ is to group those matrix elements that share the same decay exponent $N_{\mb{x,w,z,y,s,t}}$.

We now summarize several elementary properties of the sectors $B(\pi_1,\pi_2)$ as follows. First, by direct inspection, $B(\pi_1,\pi_2)\neq\emptyset$ only when one permutation is odd and the other is even. We therefore introduce
\begin{equation}
S_{3,\mathrm{odd}}:=\{\pi_{(23)},\pi_{(12)},\pi_{(13)}\},
\qquad S_{3,\mathrm{even}}:=\{\pi_{()},\pi_{(123)},\pi_{(132)}\}.
\end{equation}
With the convention that the first label is odd and the second label is even, there are exactly nine nontrivial sectors $B(\pi_1,\pi_2)$, where we set $\pi_1\in S_{3,\mathrm{odd}}$ and $\pi_2\in S_{3,\mathrm{even}}$.
Second, each of these nine sectors $B(\pi_1,\pi_2)$ contains exactly two rank-one computational-basis operators of the form
$\ket{a,b,c}\bra{a',b',c'}$. Finally, $B(\pi_1,\pi_2)$ and $B(\pi'_1,\pi'_2)$ have no intersections when either $\pi_1\neq \pi'_1$ or $\pi_2\neq \pi'_2$.

In the single-qubit setting, the constraint
$a+b+c=a'+b'+c'$ selects $20$ rank-one computational-basis operators.
These operators consist of the two diagonal elements in $\Delta_3$, together with the $18$ off-diagonal elements grouped into the nine mutually disjoint sectors $B(\pi_1,\pi_2)$.

Hence, one can conclude that given six $n$-bit binary strings $\mb{x},\mb{w},\mb{z},\mb{y},\mb{s},\mb{t}$,
\begin{equation}\label{eq:3ordersum}
\sum_{\mb{x}+\mb{w}+\mb{z}=\mb{y}+\mb{s}+\mb{t}}\ket{\mb{x,w,z}}\bra{\mb{y,s,t}}= [\Delta_3+\sum_{\pi_1\in S_{3,\mathrm{odd}},\pi_2\in S_{3,\mathrm{even}}}B(\pi_1,\pi_2)]^{\otimes n},
\end{equation}

and specifically
\begin{equation}\label{Eq:PermSum}
    \sum_{\pi_2\in S_{3,\mathrm{even}}} B(\pi_1,\pi_2)+\Delta_3 = V_1(\pi_1) \text{    and   }\sum_{\pi_1\in S_{3,\mathrm{odd}}} B(\pi_1,\pi_2)+\Delta_3 = V_1(\pi_2).
\end{equation}

We refer to $\mathfrak I
:=
\Bigl(
I^{(0)},
\{I^{(\pi_1,\pi_2)}\}_{\pi_1\in S_{3,\mathrm{odd}},\,\pi_2\in S_{3,\mathrm{even}}}
\Bigr)$ as the position information, where these subsets form a partition of $[n]$, namely $I^{(0)}
\cup
\Bigl(
\mathop{\bigcup}_{\pi_1\in S_{3,\mathrm{odd}},\,\pi_2\in S_{3,\mathrm{even}}}
I^{(\pi_1,\pi_2)}
\Bigr)
=
[n]$.

Therefore, every rank-one operator
$
\ket{\mb{x,w,z}}\bra{\mb{y,s,t}}
$
belongs to (see the definition in~\cref{App:UaA}) a unique tensor-product block of the form
\begin{equation}\label{eq:def-B-frakI-m3}
B_{\mathfrak I}
:=
\Delta_3^{\otimes I^{(0)}}
\otimes
\bigotimes_{\pi_1\in S_{3,\mathrm{odd}},\,\pi_2\in S_{3,\mathrm{even}}}
B(\pi_1,\pi_2)^{\otimes I^{(\pi_1,\pi_2)}},
\end{equation}
Here $I^{(0)}$ records the positions whose local factor lies in $\Delta_3$, and
$I^{(\pi_1,\pi_2)}$ records the positions whose local factor lies in
$B(\pi_1,\pi_2)$.

Next, we calculate the value of the coefficient $N_{\mb{x,w,z,y,s,t}}$ using the concept of $B(\pi_1,\pi_2)$. We can check that given two single-qubit operators $\ket{x_1,w_1,z_1}\bra{y_1,s_1,t_1}\in B(\pi_1,\pi_2)$ and $\ket{x_2,w_2,z_2}\bra{y_2,s_2,t_2}\in B(\pi'_1,\pi'_2)$, the condition
\begin{equation}\label{eq:condition}
    x_1x_2+w_1w_2+z_1z_2\neq y_1y_2+s_1s_2+t_1t_2
\end{equation}
is satisfied if and only if $\pi_1\neq \pi'_1$ and $\pi_2\neq\pi'_2$. Consequently, the condition \cref{eq:condition} is entirely determined by the permutation elements $\pi_1, \pi_2, \pi'_1, \pi'_2$.  For every $n$-qubit operator $\ket{\mb{x,w,z}}\bra{\mb{y,s,t}}\in B_{\mathfrak I}$, the corresponding coefficient defined in \cref{Eq:coef} becomes
\begin{equation}\label{Eq:CoefN}
    N_{\mb{x,w,z,y,s,t}} = \frac{1}{2}\sum_{\pi_1\in S_{3,\mathrm{odd}},\pi_2\in S_{3,\mathrm{even}}}|I^{(\pi_1,\pi_2)}|[\sum_{\pi'_1\in S_{3,\mathrm{odd}}/\{\pi_1\},\pi'_2\in S_{3,\mathrm{even}}/\{\pi_2\}}|I^{(\pi'_1,\pi'_2)}|],
\end{equation}
which depends only on the cardinalities of the sets $I^{(0)}$ and $I^{(\pi_1,\pi_2)}$, denoted as $|I^{(0)}|$ and $|I^{(\pi_1,\pi_2)}|$, respectively. Here, we add the coefficient $\frac12$ as every term $|I^{(\pi_1,\pi_2)}|\times |I^{(\pi'_1,\pi'_2)}|$ occurs twice. In particular, the value of $N_{\mb{x,w,z,y,s,t}}$ depends only on the cardinalities $|I^{(0)}|$ and $|I^{(\pi_1,\pi_2)}|$. Hence, we group together all $n$-qubit operators that share the same
element-count information. To this end, we denote
\begin{equation}
    \mathbf{I}
= \bigl[ |I^{(0)}|,\, |I^{(\pi_1,\pi_2)}| \text{ for every } (\pi_1,\pi_2) \bigr].
\end{equation}
For a fixed element-count vector $\mathbf I$, let $\mathcal P(\mathbf I)$ denote the set of position information $\mathfrak I$ whose cardinalities agree with $\mathbf I$. Then, every $n$-qubit operator $\ket{\mb{x,w,z}}\bra{\mb{y,s,t}}$ in the tensor-product block $B_\mathfrak{I}$ with $\mathfrak I\in\mc{P}(\mb{I})$ shares the same coefficient ${N}_{\mb{x,w,z,y,s,t}}$ according to~\cref{Eq:CoefN}, which we denote as ${N}_\mb{I}$.

With this notation, the full three-copy support in~\cref{eq:3ordersum} can be regrouped as
\begin{equation}\label{eq:regroup}
\sum_{\mb{x}+\mb{w}+\mb{z}=\mb{y}+\mb{s}+\mb{t}}\ket{\mb{x,w,z}}\bra{\mb{y,s,t}}= [\Delta_3+\sum_{\pi_1\in S_{3,\mathrm{odd}},\pi_2\in S_{3,\mathrm{even}}}B(\pi_1,\pi_2)]^{\otimes n}=\sum_{\mb{I}}
\sum_{\mathfrak I\in\mathcal P(\mathbf I)} B_{\mathfrak I}.
\end{equation}
The total number of admissible element-count configurations $\mathbf I$ is at most
$C_{n+9}^9$ since there are nine possible nontrivial pairs $(\pi_1,\pi_2)$ together with the residual sector $I^{(0)}$, which grows only polynomially with $n$. We also define \(\mc C_{\mathbf I}\) as the set of all rank-one matrix units belonging to tensor-product blocks with the same element-count information $\mc C_{\mathbf I}
:=
\cup_{\mathfrak I\in\mathcal P(\mathbf I)} B_{\mathfrak I}$.

The contribution $\circled{2}$ in~\cref{Eq:MajorVariance} can
now be decomposed according to the element-count
information $\mb{I}$ as
\begin{equation}\label{Eq:Var2Split}
\begin{split}
\circled{2}
&=\tr\!\Bigl[
(\rho \otimes O_f \otimes O_f)
\sum_{\substack{\mb{x}+\mb{w}+\mb{z}=\mb{y}+\mb{s}+\mb{t}\\
N_{\mb{x,w,z,y,s,t}}\neq 0}}
\Bigl(1-\frac{2\gamma\ln n}{n}\Bigr)^{N_{\mb{x,w,z,y,s,t}}}
\ket{\mb{x,w,z}}\bra{\mb{y,s,t}}
\Bigr] \\
&=\sum_{\mathbf I:\,N_{\mathbf I}>0}
\Bigl(1-\frac{2\gamma\ln n}{n}\Bigr)^{N_{\mathbf I}}
\tr\!\Bigl[
(\rho \otimes O_f \otimes O_f)
\sum_{{\mathfrak{I}}\in \mc{P}(\mathbf I)}B_{\mathfrak{I}}
\Bigr].\qquad(\text{see~\cref{eq:regroup}})\\
&=\sum_{\mathbf I:\,N_{\mathbf I}>0}
\Bigl(1-\frac{2\gamma\ln n}{n}\Bigr)^{N_{\mathbf I}}
\tr\!\Bigl[
(\rho \otimes O_f \otimes O_f)
\sum_{\ket{\mb{x,w,z}}\bra{\mb{y,s,t}}\in \mc{C}_{\mb{I}}}\ket{\mb{x,w,z}}\bra{\mb{y,s,t}}
\Bigr].\\
\end{split}
\end{equation}
where we have used~\cref{eq:regroup} and the fact that all elements in the same class \(\mc C_{\mathbf I}\) share the same exponent $N_{\mathbf I}$.
Since the number of element-count information $\mb{I}$ is at most $C_{n+9}^9=\mathrm{poly}(n)$, to show that $\circled{2}$ is sub-exponential, it suffices to bound
each individual contribution associated with a fixed $\mathbf I$. We proceed by estimating these terms separately.

Here, we observe that the matrix elements of $\rho\otimes O_f\otimes O_f$ may have arbitrary signs or phases. However, we can replace $\rho$ and $O_f$ by entrywise nonnegative operators $\tilde{\rho}$ and $\tilde{O}$.  After this replacement, if two Boolean matrices $A$ and $B$ satisfy $A\subseteq B$ entrywise, then we have
\begin{equation}\label{eq:AcompareB}
\tr\!\left[(\tilde{\rho}\otimes\tilde O\otimes\tilde O)A\right]
\le
\tr\!\left[(\tilde{\rho}\otimes\tilde O\otimes\tilde O)B\right].
\end{equation}

This monotonicity under support inclusion is the key reason for introducing the tilde operation below.

We first show how to replace $\rho$ and $O_f$ by entrywise nonnegative operators $\tilde{\rho}$ and $\tilde{O}$.
Given an arbitrary Hermitian operator $H\coloneq \sum_{i,j}{h_{i,j}}\ket{i}\bra{j}$, we denote $\tilde{H}\coloneq \sum_{i,j}\sqrt{h_{i,i}h_{j,j}}\ket{i}\bra{j}$, and $\tilde{H}$ is positive semidefinite, where there exists a vector $\ket{\psi_H} = \sum_{i}\sqrt{h_{i,i}}\ket{i}$, such that $\tilde{H}=\ket{\psi_{H}}\bra{\psi_{H}}$. If $H\succeq 0$, then we have $\tilde{h}_{i,j}=\sqrt{h_{i,i}h_{j,j}}\ge |h_{i,j}|$. In this way, for a certain element-count information $\mb{I}$, the corresponding component in~\cref{Eq:Var2Split} is bounded by

\begin{equation}\label{Eq:VarSep}
    \begin{split}
        &\tr[(\rho \otimes O_f\otimes O_f)\text{ }\sum_{\ket{\mb{x,w,z}}\bra{\mb{y,s,t}}\in \mc{C}_{\mb{I}}}\ket{\mb{x,w,z}}\bra{\mb{y,s,t}}]= \sum_{\ket{\mb{x,w,z}}\bra{\mb{y,s,t}}\in \mc{C}_{\mb{I}}} \bra{\mb{y}}\rho\ket{\mb{x}} \bra{\mb{s}}O_f\ket{\mb{w}} \bra{\mb{t}}O_f\ket{\mb{z}}\\
        &\leq \sum_{\ket{\mb{x,w,z}}\bra{\mb{y,s,t}}\in \mc{C}_{\mb{I}}} |\rho_{\mb{x,y}}|\times|[O_f]_{\mb{w,s}}|\times|[O_f]_{\mb{z,t}}|
        \leq  \sum_{\ket{\mb{x,w,z}}\bra{\mb{y,s,t}}\in \mc{C}_{\mb{I}}} |\rho_{\mb{x,y}}|\times|O_{\mb{w,s}}|\times|O_{\mb{z,t}}|\\
        &\leq  \sum_{\ket{\mb{x,w,z}}\bra{\mb{y,s,t}}\in \mc{C}_{\mb{I}}} \tilde{\rho}_{\mb{x,y}}\times\tilde{O}_{\mb{w,s}}\times\tilde{O}_{\mb{z,t}} = \tr[(\tilde{\rho} \otimes \tilde{O}\otimes \tilde{O})\text{ }\sum_{\ket{\mb{x,w,z}}\bra{\mb{y,s,t}}\in \mc{C}_{\mb{I}}}\ket{\mb{x,w,z}}\bra{\mb{y,s,t}}]\\
        &=\tr[(\tilde{\rho} \otimes \tilde{O}\otimes \tilde{O})\text{ }\sum_{{\mathfrak{I}}\in \mc{P}(\mathbf I)}B_{\mathfrak{I}}].
\end{split}
\end{equation}
\begin{figure}
    \centering
    \includegraphics[width=\linewidth]{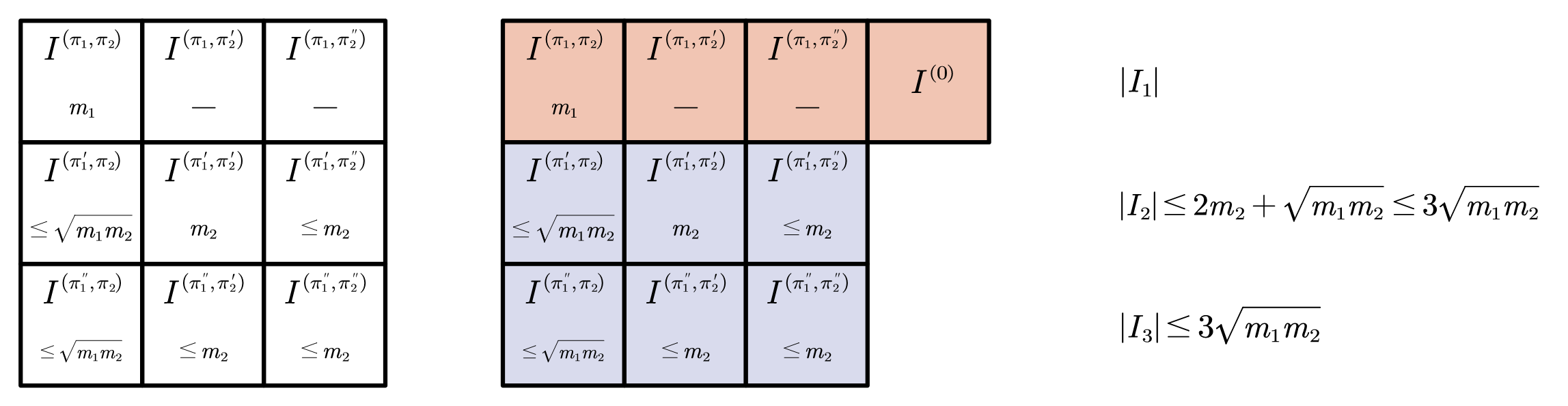}
\caption{
Schematic organization of the permutation sectors used in the variance bound.
The sectors are indexed by an odd label $\pi_1$ and an even label $\pi_2$.
We choose two dominant sectors with different row and column labels, of sizes
$m_1$ and $m_2$, whose product is maximal.
This maximality bounds the relevant non-dominant sectors by $\sqrt{m_1m_2}$, so that after grouping sites by odd labels the two smaller groups have sizes at most $3\sqrt{m_1m_2}$.
This yields the multiplicity estimate
$(C_n^{3\sqrt{m_1m_2}})^2$
appearing in the variance bound.
}
    \label{fig:placeholder}
\end{figure}
Next, we utilize~\cref{eq:AcompareB} to bound~\cref{Eq:VarSep} as follows.
As is shown in~\cref{fig:placeholder}, we denote
$ m_1 \coloneq |I^{(\pi_1,\pi_2)}| $ and
$ m_2 \coloneq |I^{(\pi'_1,\pi'_2)}| $,
where
\begin{equation}
    (\pi_1,\pi_2),(\pi'_1,\pi'_2)
= \arg\max_{\pi_1 \neq \pi'_1,\ \pi_2 \neq \pi'_2}
|I^{(\pi_1,\pi_2)}|\cdot|I^{(\pi'_1,\pi'_2)}|.
\end{equation}
The product
$|I^{(\pi_1,\pi_2)}||I^{(\pi'_1,\pi'_2)}|=m_1m_2$ is maximal among all pairs with different row and column labels. Without loss of generality, we set $m_1\ge m_2$.

Next, we control the sectors $(\pi_1,\pi_2)$ lying in the same row or the same column as the dominant sector
$(\pi_1,\pi_2)$.  Consider the two collections
\begin{equation}
\mathfrak R:=\{(\pi_1,\pi_2^*):\pi_2^*\neq \pi_2\},
\qquad
\mathfrak C:=\{(\pi_1^*,\pi_2):\pi_1^*\neq \pi_1\}.
\end{equation}
Choose the larger maximum between these two collections. Without loss of generality, assume that it is attained in the row collection, and denote it by
\begin{equation}\label{eq:maxoutside}
(\pi_1,\pi'_2)
\in
\arg\max_{(\pi_1^*,\pi_2^*)\in \mathfrak R\cup\mathfrak C}
|I^{(\pi_1^*,\pi_2^*)}|.
\end{equation}
The case where the maximum is attained in the column collection is identical
after exchanging the roles of the row and column labels, and therefore does
not affect the generality of the argument.
For any $\pi_1^*\neq \pi_1$, the two sectors
$(\pi_1,\pi'_2)$ and $(\pi_1^*,\pi_2)$ have different row and column labels. By the maximality of $m_1m_2$, we have $|I^{(\pi_1,\pi'_2)}|\,|I^{(\pi_1^*,\pi_2)}|
\le m_1m_2$.
Besides, by the choice of $(\pi_1,\pi'_2)$ as the largest sector in $\mathfrak R\cup\mathfrak C$ in~\cref{eq:maxoutside}, we obtain $|I^{(\pi_1,\pi'_2)}|\ge |I^{(\pi_1^*,\pi_2)}|.$ Combining the two inequalities gives
\begin{equation}
|I^{(\pi_1^*,\pi_2)}|\le \sqrt{m_1m_2},
\qquad \forall\,\pi_1^*\neq \pi_1 .
\end{equation}
An analogous argument applies to the sectors in the same row or column after exchanging the two labels.

Indeed, for any fixed position information $\mathfrak I$, each qubit position belongs
either to $\Delta_3$ or to one of the sectors
$B(\pi_1,\pi_2)$, with $\pi_1\in S_{3,\mathrm{odd}}$ and
$\pi_2\in S_{3,\mathrm{even}}$. We now group the qubit positions according to their
odd label. In particular, for every $\pi_1$, define $I_1
:=
I^{(0)}
\cup
\bigcup_{\pi_2^*\in S_{3,\mathrm{even}}}
I^{(\pi_1,\pi_2^*)},$ where the qubit positions in $I^{(0)}$ are assigned here only. Then, we use the remaining two odd permutations $\pi_1', \pi''_1$ to define $I_2\coloneq\bigcup_{\pi_2^*\in S_{3,\mathrm{even}}}I^{(\pi_1',\pi_2^*)}$ and $I_3\coloneq\bigcup_{\pi_2^*\in S_{3,\mathrm{even}}}I^{(\pi_1'',\pi_2^*)}$.

For qubit positions in $I^{(\pi_1^*,\pi_2^*)}$, we have $B(\pi_1^*,\pi_2^*)\subseteq V_1(\pi_1^*).$
For qubit positions in $I^{(0)}$, we have $\Delta_3\subseteq V_1(\pi_1)$. Therefore,
\begin{equation}
B_{\mathfrak I}
\subseteq
V_1(\pi_1)^{\otimes I_1}\otimes V_1(\pi_1')^{\otimes I_2}\otimes V_1(\pi_1'')^{\otimes I_3},
\end{equation}
 Taking the union over all position information $\mathfrak I$ compatible
with the same element-count vector $\mathbf I$ gives
\begin{equation}
\sum_{{\mathfrak{I}}\in \mc{P}(\mathbf I)}B_{\mathfrak{I}}
\subseteq \bigcup _{I_1+I_2+I_3=[n]}V_1(\pi_1)^{\otimes I_1}\otimes V_1(\pi_1')^{\otimes I_2}\otimes V_1(\pi_1'')^{\otimes I_3}.
\end{equation}
As illustrated in
Fig.~\ref{fig:placeholder}, the maximality of $m_1m_2$ implies
\begin{equation}
    |I_2|\le 2m_2+\sqrt{m_1m_2}\le 3\sqrt{m_1m_2},
\qquad
|I_3|\le 3\sqrt{m_1m_2}.
\end{equation}
In this way, \cref{Eq:VarSep} can be bounded using~\cref{eq:AcompareB} as
\begin{equation}\label{Eq:Holder}
    \begin{split}
        &\tr[(\tilde{\rho} \otimes \tilde{O}\otimes \tilde{O})\text{ }\sum_{{\mathfrak{I}}\in \mc{P}(\mathbf I)}B_{\mathfrak{I}}]\leq \tr[(\tilde{\rho} \otimes \tilde{O}\otimes \tilde{O})\text{ }\bigcup _{I_1+I_2+I_3=[n]}V_1(\pi_1)^{\otimes I_1}\otimes V_1(\pi_1')^{\otimes I_2}\otimes V_1(\pi_1'')^{\otimes I_3}]\\
        &\leq \sum_{I_1+I_2+I_3 = [n]}\tr[(\tilde{\rho} \otimes \tilde{O}\otimes \tilde{O})\text{ }V_1(\pi_1)^{\otimes I_1}\otimes V_1(\pi_1')^{\otimes I_2}\otimes V_1(\pi_1'')^{\otimes I_3}]\\
        & \leq\sum_{I_1+I_2+I_3 = [n]} \|V_1(\pi_1)^{\otimes I_1}\otimes V_1(\pi_1')^{\otimes I_2}\otimes V_1(\pi_1'')^{\otimes I_3}\|_\infty \|\tilde{\rho} \otimes \tilde{O}\otimes \tilde{O}\|_1,
    \end{split}
\end{equation}
due to the Hölder's inequality. In addition, we have $\|V_1(\pi_1)^{\otimes I_1}\otimes V_1(\pi_1')^{\otimes I_2}\otimes V_1(\pi_1'')^{\otimes I_3}\|_\infty=\|\tilde{\rho} \otimes \tilde{O}\otimes \tilde{O}\|_1=1$, where $\tilde{O}$ is also a quantum state.  Applying these norm bounds in \cref{Eq:Holder}, we obtain
\begin{equation}\label{Eq:VarFinal1}
    \tr[(\tilde{\rho} \otimes \tilde{O}\otimes \tilde{O})\text{ }\sum_{{\mathfrak{I}}\in \mc{P}(\mathbf I)}B_{\mathfrak{I}}]\leq \sum_{I_1+I_2+I_3 = [n]} \mb{1} = C_n^{|I_2|} C_{n-|I_2|}^{|I_3|} .
\end{equation}
Combining~\cref{Eq:VarFinal1} with the decay factor in~\cref{Eq:Var2Split}, and using~\cref{Eq:CoefN}, we have
\begin{equation}
N_{\mathbf I}\ge m_1m_2,
\qquad
\left(1-\frac{2\gamma\ln n}{n}\right)^{N_{\mathbf I}}
\le\exp\!\left[-\frac{2\gamma\ln n}{n}m_1m_2\right].
\end{equation}
Hence the contribution of a fixed class $\mathbf I$ in~\cref{Eq:Var2Split} is bounded by the function $F(\mb{I})\coloneq C_n^{|I_2|}C_{n-|I_2|}^{|I_3|}
\exp\!\left[-\frac{2\gamma\ln n}{n}m_1m_2\right]$. If \begin{equation}
t:=\sqrt{m_1m_2}=\Theta(n),
\end{equation}
then the exponential factor is
$\exp[-\Omega(n\ln n)]$, which dominates the combinatorial prefactor.
Thus, this regime is negligible. It remains to consider $t=o(n)$. In order to show that the estimation variance is sub-exponential, we need to prove that $\ln F(\mb{I})/n = o(1)$.
\begin{equation}\label{Eq:VarSubExp}
\begin{split}
       \frac{\ln F(\mb{I})}{n}\le \frac{\ln\{(C_n^{3\sqrt{m_1m_2}})^2\exp[-\frac{2\gamma\ln n}{n}m_1m_2]\}}{n} &= \frac{\ln\{(C_n^{3t})^2\exp[-\frac{2\gamma\ln n}{n}t^2]\}}{n}<\frac{6t}{n}\ln\frac{ne}{3t}-2\gamma\ln n (\frac{t}{n})^2 +O(\frac{1}{n}),\\
\end{split}
\end{equation}
due to $C_n^k\leq (\frac{ne}{k})^k$. Let $x=\frac{t}{n}\in(0,1)$, we obtain $y = 6x\ln\frac{e}{3}-6x\ln x-2\gamma\ln n x^2$, whose derivative $y' = -6\ln 3-6\ln x-4\gamma x\ln n$. The function $y$ reaches the maximum when  $y'=0$, i.e., $6\ln 3+6\ln x^*-4\gamma x^*\ln n=0$, in which case $y_{\max} = 3x^*(2-\ln3x^*)$. As $y'|_{x=\frac{\ln\ln n}{\ln n}}=-6\ln3-6\ln\ln\ln n+6\ln\ln n-4\gamma \ln\ln n<0$ when $\gamma\geq1.5$, we set $x^* \leq \frac{\ln\ln n}{\ln n} $, and $y_{\max} < \frac{3\ln\ln n}{\ln n}(2-\ln 3-\ln\ln\ln n+\ln\ln n)$ converges to zero. In this way,
we obtain that \cref{Eq:Var2Split} is sub-exponential.

Since the term $\circled{1}$ is constant and term $\circled{2}$ is sub-exponential, the total variance is sub-exponential, which completes the final proof of \cref{th:PshadowMain}.

Having proven the single-shot variance bound for the off-diagonal estimator, we now comment on the total estimation variance of the SPS protocol in~\cref{algo:PhaseShadow}. Recall that the second-look principle uses two measurement tracks. Therefore, if the total number of phase-shadow samples is $N_f$ and the total number of computational-basis samples is $N_d$, then the selected track contains $N_f/2$ phase-shadow samples and $N_d/2$ computational-basis samples.

For the diagonal part, direct computational-basis measurements give
the single-shot estimator $\widehat{o_d}=\tr(O\ket{\mathbf b}\bra{\mathbf b}),$
which is unbiased for $\tr(O\rho_d)$. Moreover,
\begin{equation}
    \mathrm{Var}(\widehat{o_d})
    \le
    \mathbb E_{\mathbf b}\widehat{o_d}^{\,2}
    =
    \sum_{\mathbf b}\bra{\mathbf b}\rho\ket{\mathbf b}
    \bra{\mathbf b}O\ket{\mathbf b}^{2}
    \le
    \|O_d\|_\infty^2
    \le
    \|O_d\|_2^2 .
\end{equation}
Let $\widehat{o_f}$ denote the phase-shadow estimator selected by the second-look rule, and let
$\widehat{o}=\widehat{o_f}+\widehat{o_d}$. Since the two data sets are sampled independently, the variance satisfies
\begin{equation}
    \mathrm{Var}(\widehat{o})
    =
    \frac{2}{N_f}\mathrm{Var}(\widehat{o_f})
    +
    \frac{2}{N_d}\mathrm{Var}(\widehat{o_d}) .
\end{equation}
Using the single-shot variance bound proved above for the off-diagonal estimator, we obtain
\begin{equation}\label{}
    \mathrm{Var}(\widehat{o})
    \le
    \frac{2}{N_f}\mathrm{subexp}(n)
    +
    \frac{2}{N_d}\|O_d\|_2^2 .
\end{equation}
Here $\mathrm{subexp}(n)$ denotes the single-shot variance bound for the off-diagonal phase-shadow estimator. Moreover, when $O$ is a stabilizer-state observable, $\|O_d\|_2^2$ is at most a constant, since $O$ is a rank-one projector and
\begin{equation}
    \|O_d\|_2^2 \le \|O\|_2^2=\tr(O^2)=1 .
\end{equation}
Therefore, after averaging over $N_f$ phase-shadow samples and $N_d$ computational-basis samples, the full SPS estimator has sub-exponential variance in $n$. This is the variance control needed for the median-of-means procedure in \cref{algo:PhaseShadow}. Applying the standard median-of-means concentration bound separately to the off-diagonal and diagonal data sets yields the high-probability estimator used in the main text.

\section{Details on the numerical simulations}

\subsection{Numerical verification of bias scaling with $\gamma$}\label{app:bias_with_gamma}

In this appendix, we provide numerical evidence showing that the bias often decays much faster than predicted by the theoretical bound in~\cref{th:PshadowMain}. Although the proof requires $\gamma>3$, the numerical results show that accurate estimation is already achieved for smaller values of $\gamma$.
To verify this, we have performed numerical simulations on a system of $n=50$ qubits. We have evaluated the estimation bias for three distinct classes of quantum states:
\begin{itemize}
    \item The state vector $|\psi\rangle = |+\rangle^{\otimes n/2} \otimes |0\rangle^{\otimes n/2}$, representing a scenario where the second-look principle yields no additional benefit.
    \item The state vector $|\psi\rangle = |+\rangle^{\otimes n-2} \otimes |0\rangle^{\otimes 2}$, corresponding to the example discussed in~\cref{fig:hadamard_trick} in the main text.
    \item The cluster star state vector $|\psi\rangle = \prod_{j=1}^{n-1}CZ_{0,j}|+\rangle^{\otimes n}$, a graph state with high entanglement connectivity, selected as a complementary case featuring significant entanglement.
\end{itemize}

The numerical results are presented in \cref{fig:bias_scaling}, with the theoretical upper bound derived in~\cref{Eq:Bias} indicated by the gray dash-dotted line. Several key observations can be made. First, the simulations confirm that the analytical bias bound is rigorous but conservative. To quantify this, we extracted the empirical scaling exponents (normalized by $\ln n$) from the decay slopes, yielding coefficients of approximately $-0.88$, $-1.19$, and $-1.57$ for the three tested states. These values significantly outperform the theoretical scaling coefficient of $-0.4$, confirming that the empirically observed bias is orders of magnitude lower than the theoretical prediction. Second, although the theorem formally requires $\gamma > 3$ to guarantee convergence, the bias decreases rapidly (exponentially) even for significantly smaller values of $\gamma$. For instance, at $\gamma=1.5$, the bias for the cluster star state and the $|+\rangle^{\otimes n-2}|0\rangle^{\otimes 2}$ state has already dropped below $10^{-2}$ and $10^{-1}$ respectively, rendering the approximation highly accurate for practical purposes.

\begin{figure}[htbp]
    \centering
    \includegraphics[width=0.5\linewidth]{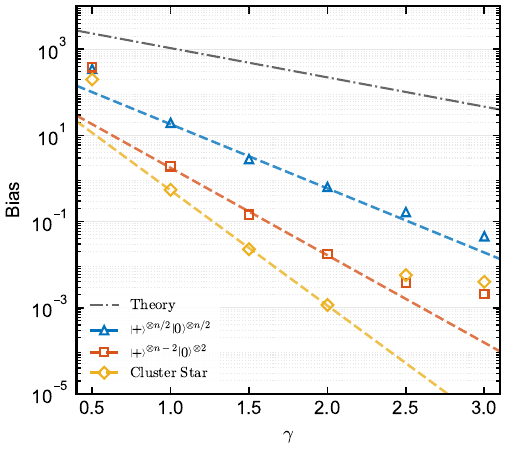}
    \caption{\label{fig:bias_scaling} Scaling of the estimation bias with respect to the parameter $\gamma$ for $n=50$ qubits. The performance is evaluated on three different states: the half-filled state $|+\rangle^{\otimes n/2}|0\rangle^{\otimes n/2}$ (blue triangles), the state $|+\rangle^{\otimes n-2}|0\rangle^{\otimes 2}$ (orange squares), and the cluster star state (yellow diamonds). Dashed lines represent linear fits to the logarithmic data, indicating exponential decay. Note that the linear fitting is restricted to the intermediate regime.
    For $\gamma=0.5$ and $\gamma>2$, the bias of the latter two states is comparable to or smaller than the statistical fluctuations, and is therefore excluded from the linear fit. The gray dash-dotted line denotes the theoretical upper bound.}
\end{figure}

\subsection{Standardization of circuit resources in~\cref{fig:guarentee}}
\label{app:gate_count_mapping}

To facilitate an equitable comparison across distinct protocols, we map the protocol-specific control parameters to a unified metric: the total two-qubit gate count ($N_g$). Given the disparate circuit architectures employed and considering their shallow implementations facilitated by auxiliary qubits, a simple metric like circuit depth fails to accurately capture the actual resource cost. Consequently, $N_g$ serves as a robust proxy for the total circuit volume and the associated accumulated noise cost.

The mapping logic for each protocol is as follows:
\begin{itemize}
    \item One-dimensional  brickwork circuits~\cite{bertoni2024shallow}: The control parameter is the circuit depth $d$. On average, a random two-qubit Clifford gate decomposes into 1.5 two-qubit gates (CNOTs). Since one layer consists of about $n/2$ disjoint pairs, increasing $d$ by 1 adds approximately $0.75n$ two-qubit gates. Consequently, the granularity of resource adjustment coarsens as the system size $n$ increases.
    \item Gluing random circuits~\cite{schuster2024random}: The control parameter is the patch size, denoted as $2\xi$. As $\xi$ increases, the number of two-qubit gates required to construct the localized unitary increases, leading to a higher $N_g$. To strictly adhere to the theoretical bounds derived in the original proposal, the patch size $2\xi$ must be a divisor of the system size $n$. Deviating from this constraint results in uncovered qubits at the boundaries, hindering a rigorous performance comparison under the theorem's description.
    \item SPS (sparse Clifford-IQP): The control parameter is $\gamma$. This parameter allows for continuous tuning of the gate probability $p = \gamma \ln n / n$, enabling precise control over the expected total number of two-qubit gates, unlike the discrete steps required by the other protocols.
\end{itemize}

Table~\ref{tab:gate_count_comparison} summarizes the correspondence between these parameters and $N_g$. Due to the stringent divisibility constraints of the gluing constructions, we established its valid gate counts as the baseline targets for designing the experiments for the 1D brickwork circuits and SPS protocols. This approach ensures comparisons are made at similar resource levels.

\begin{table*}[htbp]
    \centering
    \small
    \renewcommand{\arraystretch}{1.3}
    \caption{Alignment of control parameters and gate counts.
    Correspondence between protocol-specific parameters (SPS $\gamma$, Gluing patch size $2\xi$, and 1D brickwork layer depth $d$) and the total gate count ($N_g$). The columns are aligned by approximate gate count levels to demonstrate the resource comparability across schemes.
    The SPS protocol includes additional data points in the shallow circuit regime ($N_g < 180$) where the other protocols are not applicable or comparable.}
    \label{tab:gate_count_comparison}

    \begin{tabularx}{\textwidth}{l C C C C C C C C C C C}
        \hline\hline
        \textbf{Metric} & \multicolumn{11}{c}{\textbf{Gate count expectation (approximate)}} \\
        \hline

        \textit{\textbf{SPS}} & & & & & & & & & & & \\
        \hspace{1em} Param $\gamma$ & 0.4 & 0.8 & 1.2 & 1.6 & 2.0 & 3.0 & 4.0 & 5.0 & 5.7 & 7.2 & -- \\
        \hspace{1em} Gates $N_g$    & \textbf{36} & \textbf{73} & \textbf{109} & \textbf{146} & \textbf{182} & \textbf{275} & \textbf{362} & \textbf{455} & \textbf{519} & \textbf{657} & -- \\
        \hline

        \textit{\textbf{Gluing random circuits}} & & & & & & & & & & & \\
        \hspace{1em} Param $2\xi$      & -- & -- & -- & -- & 4 & 6 & 8 & -- & 12 & 16 & 24 \\
        \hspace{1em} Gates $N_g$    & -- & -- & -- & -- & \textbf{182} & \textbf{275} & \textbf{362} & -- & \textbf{519} & \textbf{657} & \textbf{882} \\
        \hline

        \textit{\textbf{1D brickwork circuits}} & & & & & & & & & & & \\
        \hspace{1em} Param $d$      & -- & -- & -- & -- & 5 & 8 & 10 & -- & 15 & 19 & 25 \\
        \hspace{1em} Gates $N_g$    & -- & -- & -- & -- & \textbf{178} & \textbf{282} & \textbf{356} & -- & \textbf{533} & \textbf{675} & \textbf{884} \\
        \hline\hline
    \end{tabularx}
\end{table*}

\section{Details on the exactly unbiased shallow shadow}
\subsection{Proof of~\cref{th:PshadowUnbiased}}\label{Ap:ProofUnbiased}
Here, we prove  \cref{th:PshadowUnbiased} by showing that $\widehat{\rho_f}_{\text{sparse}}$ is indeed an unbiased estimator of $\rho_f$. Utilizing the result of \cref{Eq: P2P}, the proof is derived as
    \begin{equation}
        \begin{split}
            \mathbb{E}_{\{U,\mathbf{b}\}} \widehat{\rho_f}_{\text{sparse}} &= \sum_{\mb{b}}\mathbb{E}_{U\sim \mc{E}_\text{sparse}^{(\gamma)}} \Pr(\mb{b}|
{U})\sum_{P \in \mathbf{P}_n/\mathcal{Z}_n} \sigma_P^{-1} \, \tr(\Phi_{U,\mathbf{b}} P) \, P\\
&=D\sum_{P \in \mathbf{P}_n/\mathcal{Z}_n} \sigma_P^{-1} \, \tr( {\mathbf{M}}_{\mc{E}_\text{sparse}^{(\gamma)}}^{(2)}\text{ }\rho\otimes P) \, P\\
&=D^{-2}\sum_{P \in \mathbf{P}_n/\mathcal{Z}_n} \sigma_P^{-1} \, \tr( \sum_{P'\in\mb{P}_n}\sigma_{P'} P'\rho\otimes P'P) \, P\\
&=D^{-1}\sum_{P \in \mathbf{P}_n/\mathcal{Z}_n} \sigma_P^{-1} \,\sigma_{P} \tr(  P\rho) \, P=\rho_f,\\
        \end{split}
\end{equation}
where the last equality follows from the Pauli expansion of the off-diagonal component $\rho_f$.
\subsection{Post-processing cost on the unbiased shallow shadow}
\label{app:efficiency}

In the unbiased shadow estimation protocol (\cref{sec:unbiased}), the cost of classical post-processing is a key metric for assessing practical scalability. For the stabilizer-fidelity tasks considered here, the estimator can be evaluated directly on the relevant Pauli support rather than by constructing the full operator $\widehat{\rho_f}_{\mathrm{sparse}}$. This makes the post-processing efficient for the stabilizer-state observables studied in this work. To verify this, we performed benchmark tests on systems of varying sizes using a standard consumer-grade CPU (Intel Core i9-13900HX, RAM 16GB ).

We have measured the average post-processing time per shot for three representative quantum states across qubit numbers $n$ ranging from 15 to 65. The states evaluated were: the half-filled state $|+\rangle^{\otimes n/2}|0\rangle^{\otimes n/2}$, the full superposition state vector $|+\rangle^{\otimes n}$, and the cluster star state. The results are presented in \cref{fig:time_scaling}. Several key characteristics can be observed:
\begin{enumerate}
    \item Negligible time overhead: For all tested quantum states, the post-processing time for a single shot remains within the order of $10^{-2}$ seconds, even as the system size reaches $n=65$. Compared to the duration typically required for experimental data acquisition (shot collection), this classical computational overhead is virtually negligible.
    \item Favorable scalability: The post-processing time exhibits a moderate polynomial growth with respect to the number of qubits $n$ (approximately scaling as $O(n^2)$). This low-complexity scaling behavior indicates that the protocol can be effectively scaled to systems comprising hundreds of qubits without encountering classical computational bottlenecks.
    \item Robustness to state structure: Although slight differences in computation time exist among quantum states, the overall order of magnitude remains consistent. This suggests that the algorithmic efficiency is largely insensitive to the specific form of the target state.
\end{enumerate}

\begin{figure}[h]
    \centering
    \includegraphics[width=0.5\linewidth]{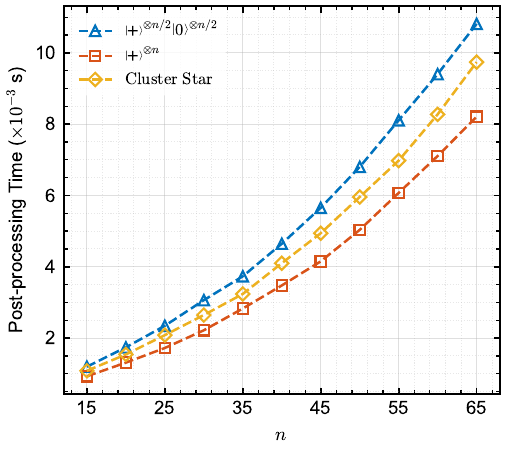}
    \caption{\label{fig:time_scaling} Classical post-processing time as a function of qubit number $n$. The plot shows the computational time required for three different quantum states: $|+\rangle^{\otimes n/2}|0\rangle^{\otimes n/2}$ (blue triangles), $|+\rangle^{\otimes n}$ (orange squares), and the cluster star state (yellow diamonds). Time is measured in milliseconds. The results demonstrate that even for a system size of $n=65$, the computation time is controlled within 12 ms, highlighting the high efficiency of the method.}
\end{figure}

\end{appendix}

\end{document}